\documentclass[journal,final]{IEEEtran}
\IEEEoverridecommandlockouts

\usepackage{amsmath,amssymb,amsthm}
\usepackage{mathtools}
\usepackage{bm}
\usepackage{combelow}
\usepackage{xcolor}
\usepackage[normalem]{ulem}
\usepackage{multirow}
\usepackage{bbold}
\usepackage{dsfont}
\usepackage{bigdelim}
\usepackage{tikz}

\newtheorem{theorem}{Theorem}
\newtheorem{proposition}{Proposition}
\newtheorem{definition}{Definition}
\newtheorem{remark}{Remark}
\newtheorem{corollary}{Corollary}
\newtheorem{problem}{Problem}

\newtheorem{lemma}{Lemma}
\newtheorem{example}{Example}
\newtheorem{fact}{Fact}

\DeclareMathOperator{\spec}{spec}
\DeclareMathOperator{\rank}{rank}

\DeclareMathOperator{\diag}{diag}
\DeclareMathOperator{\im}{Im}
\newcommand{\R}{\mathbb{R}}

\begin{document}

\title{Machines that Predict Trajectories from Templates
}

\author{C. De Persis and P. Tesi 
\thanks{C. De Persis is with ENTEG, 
University of Groningen, 9747 AG Groningen, The Netherlands. 
Email: {\tt\small c.de.persis@rug.nl}. \\
P. Tesi is with DINFO, University of Florence, 50139 Firenze, Italy. 
E-mail: {\tt\small pietro.tesi@unifi.it}.}
}

\maketitle

\begin{abstract}
We study trajectory prediction from libraries of stored output templates.
Given the past of an unknown trajectory, the goal is to predict its future
without identifying the state-space model that generated it. We show that
libraries of trajectories generated by one or more dynamical systems define
behavioral spaces that can be used as prediction mechanisms. For linear
systems, we characterize exact prediction in terms of continuation maps,
behavioral containment, and spectral conditions on output-visible
eigenvalues. We also analyze robustness to noisy observations and noisy
libraries, derive error bounds for out-of-library trajectories, and show
how interconnection constraints can compose template libraries into new
behavioral spaces with emergent modes. Finally, we extend the framework to
nonlinear systems whose output trajectories are contained in, or immersed
into, finite-dimensional linear behaviors. These results provide a theory
of template-based prediction machines capable of generalizing beyond the
stored trajectories and, in some cases, beyond the systems that generated
them.
\end{abstract}


\section{Introduction}

\textbf{Solving complex tasks using templates.}
Using stored examples, or templates, to solve complex tasks is an
intuitive and powerful idea that appears in many areas of science and
engineering. In classification and pattern recognition \cite{Vapnik1995,Burges1998}, new objects are
assigned labels by comparison with stored examples from known categories.
In instance-based regression \cite{Aha1991}, predictions are obtained by exploiting
similarities with previously observed samples. The appeal of this
paradigm is that it can support fast inference and decision making: rather
than constructing a full model of the phenomenon from scratch, one reuses
representative examples stored in memory.

In systems and control, related ideas appear in several contexts. For
example, in fault detection, online measurements can be compared with
stored nominal trajectories to detect abnormal behavior \cite{Gao2015}. More generally,
one may ask whether stored dynamical behaviors can be used to infer
stability properties, predict future trajectories, or supervise control
systems without first identifying a complete parametric model. Despite the
importance of these tasks, a general theory of control, estimation and prediction from trajectory
templates is still largely undeveloped.

\textbf{Predicting dynamical behaviors from templates.}
In this paper, we study the problem of predicting the behavior of an
unknown dynamical system from templates. We assume that the portion
of an output trajectory is observed, and we aim to predict its future
evolution using a stored library of sample trajectories. Concretely, we
assume access to collections of finite length output windows generated by a
finite number of dynamical systems, and we address the task of predicting
the future of a new trajectory from its observed past without knowing
which system is generating it. 

This viewpoint casts trajectory prediction as a \emph{dictionary-based}
inference problem. The library provides a repertoire of representative
behaviors, and prediction amounts to completing a partially observed
output sequence using this repertoire, rather than estimating the
underlying state-space matrices. The use of multiple template systems is
not only a way to describe multi-mode behavior. It also provides a
mechanism for approximating \cite{korda2018linear,Klus-Koopman2020,mezic2026koopman}, and in special cases exactly representing \cite{fliess2005finite,Iacob-Koopman2025},
nonlinear behaviors: a finite collection of linear templates can
approximate, or even span, the output trajectories of a nonlinear system
when these trajectories are close to, or contained in, a finite-dimensional
linear behavior. The idea may be similarly relevant to the prediction of linear time-varying systems \cite{sasfi2025great}.

Predicting the behavior of dynamical systems is a central problem across
engineering and applied science. Accurate forecasts enable trajectory
tracking and anticipative control, support anomaly detection and
monitoring, and play a key role in predictive maintenance by revealing
early signs of performance degradation or impending faults 
\cite{turan2021data,shi2022data,eun2023data,disaro2024equivalence}. They also
form the basis of predictive control, where future system evolutions
are repeatedly predicted over a receding horizon to compute control
actions \cite{deepc,BerberichMPC2022,hewing2020learning,krishnan2021direct}. Beyond these
engineering motivations, trajectory prediction from partial observations
can also be viewed as a form of sequence completion \cite{forgione2023from}, a computational task
that is central in modern sequence models, 
either  based on transformers \cite{abella2024asymptotic}, \cite{altafini2026multistability} or 
selective state space models
\cite{honarpisheh2026generalization}, \cite{casti2025selection}, here specialized to the
structure of dynamical systems and physical time series.

\textbf{From stored trajectories to generative behavioral spaces.}
The template-based prediction paradigm has a distinctive dynamical
feature. A stored trajectory is not merely an isolated example. Because it
is generated by an underlying dynamical system, it encodes information
about prototypical evolutions such as exponential modes, oscillations,
decay rates, and transient patterns. Thus, a trajectory library may define
a \emph{generative space of admissible behaviors}: once a new past trajectory is
observed, it is compared not only with individual examples, but with the
space of possible continuations generated by the library.

This observation is consistent with behavioral data-driven control. In
particular, the fundamental lemma of Willems \cite{willems2005} and its variants \cite{henk-fl-multiple-traj,alsalti2023design,Mesbahi-ExtWillems2021,shang2024willems,lazar2025product} show that,
for a single sufficiently excited LTI system, measured trajectories can
span the behavior of the system and can therefore be used as a surrogate
for a parametric model in simulation, prediction, and control
\cite{Markov08,de2019formulas,ARC-survey,Ajeleye-abstr2023,Dorfler-bridging2023}. The present work takes this viewpoint one step further.
We study how libraries collected from
template systems can be \emph{assembled and used} for prediction, and we analyze
when such libraries \emph{generalize} beyond the systems that generated the
stored data.

\textbf{Main contributions.}
This paper develops a theory of prediction machines built from
trajectory templates (see Fig.~\ref{fig:out_of_library_templates} for an illustration of such template-based machines).
Given a past output window, these machines produce
a future output window by exploiting the behavioral information encoded
in the library, without first identifying the state-space model that
generated the observed trajectory. The main contributions are:

\textbf{(i) Prediction in single- and multi-system libraries.}
Sections~\ref{sec:singlemode_continuation} and~\ref{sec:multimode}
develop the basic prediction mechanism. We first recall the continuation
map associated with a single linear behavior (Lemma \ref{lem:key}) and then show that an aggregate
library formed by several template systems can predict trajectories
generated by any system in the library, without identifying which system
is active (Theorem \ref{thm:mixed-prediction}). This provides the foundational mechanism by which past
measurements are mapped to future outputs through a template library.

\textbf{(ii) Robustness to noisy data and libraries.}
Section~\ref{sec:mix-noise} studies how prediction degrades in the
presence of noisy observations and noisy libraries. The results quantify
the effect of measurement noise, library perturbations, and regularized
least-squares predictors (Theorem \ref{thm:mix-noisyH-decomp}).

\textbf{(iii) Out-of-library generalization.}
Section~\ref{sec:out-of-library} addresses the central question of how a
trajectory library can predict beyond the nominal template systems. We
derive an error bound in terms of the distance between the target
trajectory and the behavioral space generated by the library
(Proposition~\ref{prop:outlib-decomp}), and we provide conditions under
which exact prediction is recovered
(Theorem~\ref{thm:mix-ool-exact}). For linear systems, these conditions take a spectral form, requiring the
output-visible eigenvalues of the target system to be covered by those
represented in the library
(Proposition~\ref{prop:mix-sentinels}).

\textbf{(iv) Compositional generalization via interconnections.}
Section~\ref{subsec:feedback_composed_libraries} shows that the
generalization capability of a library is not limited to the modes already
present in its atomic components. By imposing compatibility constraints
associated with prescribed interconnection topologies, template
trajectories can be composed into new behavioral spaces describing
interconnected systems. This produces emergent modes outside the original
atomic libraries and enables the prediction of distinctly new trajectories
(Theorem~\ref{thm:exact_prediction_interconnection_library} and
Proposition~\ref{prop:interconnection_spectral}).

\textbf{(v) Nonlinear trajectories from linear templates.} Finally,
Section~\ref{sec:output-containment-immersion} extends the theory to
nonlinear systems whose output trajectories are contained in, or immersed
into, finite-dimensional linear behaviors \cite{fliess2005finite}. This provides a mechanism by
which linear template libraries can predict nonlinear trajectories,
provided that the relevant output behavior is captured by the ambient
linear representation (Proposition \ref{prop:nonlinear-output-containment}).


\section{Trajectory prediction problem}
\label{sec:multimode_intro}

\subsection{Framework and problem definition}

Let $\mathcal{I}:=\{1,\dots,Q\}$ be a finite index set, and for each $i\in\mathcal{I}$ consider
a discrete-time LTI system\footnote{
While we focus on autonomous systems to keep the core ideas clear, 
we envision that the framework can be expanded to include inputs. }
\begin{equation}\label{eq:multimode_sys}
x_{t+1} = A_i x_t , \quad 
y_t = C_i x_t
\end{equation}
where $t \in \mathbb N$,  $x_t\in\mathbb{R}^{n_i}$ is the state and $y_t\in\mathbb{R}^p$ is the vector of observations.
(The state dimension may change over $\mathcal{I}$.)

Assume that for each system $i\in\mathcal{I}$ we have collected $N_i$ sample trajectories of
length $T$, and stacked them as columns into a \emph{template matrix}
\begin{equation}\label{eq:mode_library}
H_Y^{(i)} :=
\begin{bmatrix}
Y^{(i,1)} & \cdots & Y^{(i,N_i)}
\end{bmatrix}, 
\quad
Y^{(i,j)} :=
\begin{bmatrix}
y_0^{(i,j)}\\
y_1^{(i,j)}\\
\vdots\\
y_{T-1}^{(i,j)}
\end{bmatrix}\in\mathbb{R}^{Tp}.
\end{equation}
In what follows, we interpret $H_Y^{(i)}$ as a library of
length-$T$ output windows, and we use it as a dictionary of typical behaviors for  system $i$.
For convenience, we also define the aggregated library
or aggregated template matrix
across all systems,
\begin{equation}\label{eq:global_library}
H_Y :=
\begin{bmatrix}
H_Y^{(1)} & H_Y^{(2)} & \cdots & H_Y^{(Q)}
\end{bmatrix}\in\mathbb{R}^{Tp\times N},
\,\,
N:=\sum_{i=1}^Q N_i.
\end{equation}

\begin{problem} \label{prob:def}
Let $Y^\star\in\mathbb{R}^{Tp}$ be any trajectory of length $T$ generated by some system whose output has the same dimension as the systems in $\mathcal{I}$. Partition $Y^\star$ as
\begin{equation}\label{eq:new_traj_split_intro}
Y^\star=
\begin{bmatrix}
Y_{\mathrm{past}}\\[1mm]
Y_{\mathrm{fut}}
\end{bmatrix}, \quad
Y_{\mathrm{past}}\in\mathbb{R}^{rp},\quad
Y_{\mathrm{fut}}\in\mathbb{R}^{(T-r)p},
\end{equation}
where $r<T$. Assuming that only the past window $Y_{\mathrm{past}}$ is observed, the goal is to
predict the corresponding future window $Y_{\mathrm{fut}}$ using the library
$\{H_Y^{(i)}\}_{i=1}^Q$ (or equivalently the aggregated matrix $H_Y$). \qed  
\end{problem}

\textbf{Sequence completion} The problem at hand is prototypical in engineering problems where we must complete
a trajectory from partial observations. A classical example is \emph{trajectory prediction}:
given the recent history of a system output, one seeks to predict the upcoming evolution
to anticipate control actions or detect anomalies.
At the same time, the problem can be read through the lens of \emph{sequence completion}:
we are given a prefix $Y_{\mathrm{past}}$ and we wish to infer a plausible continuation
$Y_{\mathrm{fut}}$.

\textbf{Library of behaviors} In this perspective, the Hankel or template matrix $H_Y$ plays the role of a \emph{library} (or
dictionary) of representative behaviors: each column is a sample length-$T$ behavior, and
prediction amounts to completing a partially observed sequence by combining (in a suitable
sense) the sample behaviors in the dictionary. This connection provides an intuitive
interpretation of data-driven prediction: as the problem is that of predicting trajectories that may not have been generated by any of the systems in the library, one leverages a
catalog of empirical trajectories and uses it directly for completion rather than estimating a specific $(A_i,C_i)$.

Problem \ref{prob:def} does not assume  that the system that generates the data belongs to the family $\mathcal{I}$ of nominal systems. However, in the first part of the paper, to lay the foundation for a solution in the general case, we do consider the simpler problem in which the trajectory to be completed is generated by one of the template systems. Later in Section \ref{sec:out-of-library} we will extend the study to the case where the system generating the data does not belong to such library. 
While the most consequential results of the article are given 
in Section \ref{sec:out-of-library}--\ref{sec:output-containment-immersion}, a few propaedeutic principles are discussed in the preceding sections.

\subsection{Basic notions and working assumptions}

\subsubsection{Linear equations, images, and kernels}
We will often consider systems of linear equations of the form
$
H\alpha=v,
$
where $H$ and $v$ are given and $\alpha$ is the unknown coefficient vector.
If both $H$ and $v$ are real, then
solutions are understood over $\mathbb R$; if either $H$ or $v$ is complex,
then solutions are understood over $\mathbb C$.
The same convention applies to statements of the form
$v\in\im(H)$ and $\alpha\in\ker(H)$. The former is understood as the
solvability of $H\alpha=v$ for some coefficient vector $\alpha$, whereas the
latter is understood as the equation $H\alpha=0$, with the field determined
by $H$.

In particular, when $H$ and $v$ are real, if the system $H\alpha=v$ is
solvable over $\mathbb C$, then it also admits a real solution. Thus, in this
case, $H\alpha=v$ and $v\in\im(H)$ are understood over $\mathbb R$.
If $H$ is real, then $\alpha\in\ker(H)$ is understood over
$\mathbb R$.

\subsubsection{Observability index}
A point that will become clear in the sequel is that trajectory prediction does not require
identifying the system, nor does it require reconstructing its internal
state. The only requirement we need is that the observed past window is
long enough with respect to certain observability properties of the underlying dynamics. This ensures
that the data carry sufficient information about the observable subspace
that generates the observations. To this end, we recall the notion of \emph{observability index}.

Consider a discrete-time LTI system
\begin{equation}
  x_{t+1} = A x_t, \quad y_t = C x_t,
  \label{eq:ss}
\end{equation}
with $x_t \in \R^{n}$ and $y_t \in \R^p$. For an integer $\ell \ge 1$, the observability matrix of order $\ell$ is
\begin{equation} \label{eq:obs-k}
  \mathcal{O}_\ell(A,C)
  :=
  \begin{bmatrix}
    C \\ CA \\ \vdots \\ CA^{\ell-1}
  \end{bmatrix} \in \R^{\ell p \times n}.
\end{equation}

We recall the following result from linear systems theory:\footnote{Lemma (5.9.27) in Ruberti, Isidori. Teoria dei Sistemi. Bollati Boringhieri.} 
\begin{lemma}
There exists an integer $s\le n$ such that 
\begin{itemize}
\item[(i)] $\ker  \bigl(\mathcal{O}_k(A,C)\bigr)\supset \ker\bigl(\mathcal{O}_{k+1}(A,C)\bigr)$ for all $k<s$.
\item[(ii)] $\ker  \bigl(\mathcal{O}_k(A,C)\bigr)= \ker\bigl(\mathcal{O}_{k+1}(A,C)\bigr)$ for all $k\ge s$.\qed 
\end{itemize}  
\end{lemma}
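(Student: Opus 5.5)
The plan is to study the nested sequence of subspaces $K_k:=\ker\bigl(\mathcal{O}_k(A,C)\bigr)\subseteq\mathbb{R}^n$, $k\ge1$, and show that it decreases strictly until it stabilizes, after which it stays constant. First I will record the monotonicity $K_{k+1}\subseteq K_k$ for every $k$. This is immediate because $\mathcal{O}_{k+1}(A,C)$ contains $\mathcal{O}_k(A,C)$ as its first $kp$ rows. It also gives the recursive description
\[
K_{k+1}=\{x\in\mathbb{R}^n:\ Cx=0,\ Ax\in K_k\}=\ker C\cap A^{-1}K_k,
\]
where $A^{-1}$ denotes the preimage. This holds since $\mathcal{O}_{k+1}(A,C)x=0$ is equivalent to $Cx=0$ and $\mathcal{O}_k(A,C)Ax=0$.

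The key step is to show that once two consecutive kernels coincide, all later ones coincide. Suppose $K_{k}=K_{k+1}$ for some $k$. Applying the recursion twice gives
\[
K_{k+2}=\ker C\cap A^{-1}K_{k+1}=\ker C\cap A^{-1}K_k=K_{k+1}.
\]
By induction, $K_j=K_k$ for all $j\ge k$. I expect this to be the only step needing care, because it is what rules out the sequence stalling and then dropping again. The preimage formulation handles it cleanly.

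Finally, define $s$ as the smallest integer $k\ge1$ with $K_k=K_{k+1}$. By minimality, $K_k\supsetneq K_{k+1}$ for all $k<s$, which is (i). By the previous step, equality holds for all $k\ge s$, which is (ii). To bound $s$, use dimensions. We have $\dim K_1\le n$, and each strict inclusion for $k<s$ drops the dimension by at least one. Hence $0\le\dim K_s\le n-(s-1)$, so $s\le n+1$. This needs sharpening to $s\le n$. If $C\neq0$, then $\dim K_1\le n-1$, which gives $\dim K_s\le n-s$ and therefore $s\le n$. If $C=0$, then every $K_k=\mathbb{R}^n$ and $s=1\le n$. In both cases $s\le n$, which also shows that $s$ is well defined, since a strictly decreasing chain of subspaces cannot be infinite.
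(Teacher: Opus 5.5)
Your proof is correct and complete. The recursion $K_{k+1}=\ker C\cap A^{-1}K_k$ is exactly what shows that once two consecutive kernels coincide, all later ones do. Minimality of $s$ then gives the strict inclusions in (i), and the case split $C=0$ versus $C\neq 0$ sharpens the naive bound $s\le n+1$ to $s\le n$.

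The paper gives no proof of its own for this lemma; it only cites Ruberti--Isidori. So there is no in-paper route to compare against. Yours is the standard self-contained textbook argument.

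Because $K_{k+1}\subseteq K_k$, rank--nullity makes $K_k=K_{k+1}$ equivalent to $\rank\mathcal{O}_k(A,C)=\rank\mathcal{O}_{k+1}(A,C)$. Your argument therefore also justifies the paper's restatement in Definition~\ref{def.obs.index}, and the row-space form used in Appendix~A.
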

We call this integer $s$ the \emph{observability index}. By the Fundamental Theorem of Linear Algebra, the relations above can be restated as 
 (i) $\im  \bigl(\mathcal{O}_k^\top(A,C)\bigr)\subset \im\bigl(\mathcal{O}^\top_{k+1}(A,C)\bigr)$ for all $k<s$ and (ii) $\im  \bigl(\mathcal{O}_k^\top(A,C)\bigr)= \im\bigl(\mathcal{O}^\top_{k+1}(A,C)\bigr)$ for all $k\ge s$. For convenience, we equivalently  formalize the definition of observability index as follows:
\begin{definition}[Observability index]  \label{def.obs.index}
The \emph{observability index} $s$ of the pair $(A,C)$ is defined as the smallest integer
$s \ge 1$ such that
$\rank\bigl(\mathcal{O}_s(A,C)\bigr) = \rank\bigl(\mathcal{O}_{s+1}(A,C)\bigr).$ \qed  
\end{definition}
We \emph{do not} assume $(A,C)$ is observable in the classical sense. In general,
$\rank(\mathcal{O}_s(A,C))$ may be strictly smaller than $n$. The classical notion of observability corresponds to the special case
$\rank(\mathcal{O}_s(A,C)) = n$. 

\subsubsection{Richness of data}
In our context, we predict dynamical behaviors from templates. Thus, it is natural to require
that  template trajectories encode enough information on the systems that generate them. This leads to a notion of
\emph{richness of data}.

From \eqref{eq:obs-k}, the $j$-th template trajectory from system $i$ can be expressed as
\[
  Y^{(i,j)} :=
  \begin{bmatrix}
    y_0^{(i,j)} \\ y_1^{(i,j)} \\ \vdots \\ y_{T-1}^{(i,j)}
  \end{bmatrix}
  =
  \begin{bmatrix}
    C_i \\ C_iA_i \\ \vdots \\ C_iA_i^{T-1}
  \end{bmatrix} x_0^{(i,j)}
  =\mathcal{O}_T(A_i,C_i)\, x_0^{(i,j)},
\] 
with $j=1,\dots,N_i$,
for some initial condition $x_0^{(i,j)}\in\R^{n_i}$.
Stacking these trajectories as columns yields the data matrix
\begin{equation}
  H_Y^{(i)} = \mathcal{O}_T(A_i,C_i) X_i,
\end{equation}
where $H_Y^{(i)}$ was defined in \eqref{eq:mode_library} and 
\begin{equation} \label{eq:Xinit}
  X_i := \begin{bmatrix} x_0^{(i,1)} & \dots & x_0^{(i,N_i)} \end{bmatrix}
  \in \R^{n_i \times N_i}.
\end{equation}

\begin{definition}[Data richness]
We say the data are \emph{rich} if $\rank X_i = n_i$ for every $i\in \mathcal{I}$. \qed
\end{definition}

Under this condition, the matrix $H_Y^{(i)}$ is a non-parametric representation of the system \eqref{eq:multimode_sys}, as stated in the result below, whose straightforward proof is omitted. 

\begin{lemma}[Non-parametric representation]\label{lem:willems.aut}
 Any real linear combination of the columns of $H_Y^{(i)}$ is a $T$-long output trajectory of \eqref{eq:multimode_sys}. Conversely, if $\rank X_i = n_i$, any trajectory $Y^\star\in \R^{pT}$ of 
\eqref{eq:multimode_sys} belongs to the image of $H_Y^{(i)}$.\qed
\end{lemma}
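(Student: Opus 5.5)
The plan is to exploit the factorization $H_Y^{(i)} = \mathcal{O}_T(A_i,C_i) X_i$ derived just above. Both directions then reduce to elementary statements about images of matrix products. Throughout, a $T$-long output trajectory of \eqref{eq:multimode_sys} means a vector of the form $\mathcal{O}_T(A_i,C_i)\,x_0$ for some $x_0\in\R^{n_i}$.

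\emph{First claim.} Take any real coefficient vector $\alpha\in\R^{N_i}$. Then
\[
H_Y^{(i)}\alpha = \mathcal{O}_T(A_i,C_i)\,(X_i\alpha).
\]
Setting $x_0 := X_i\alpha\in\R^{n_i}$, this is exactly the output of \eqref{eq:multimode_sys} over $T$ steps from initial state $x_0$. Equivalently, linearity of the map $x_0\mapsto \mathcal{O}_T(A_i,C_i)x_0$ shows that superposition of trajectories is again a trajectory. No assumption on $X_i$ is needed for this direction.

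\emph{Converse.} Let $Y^\star$ be a trajectory of \eqref{eq:multimode_sys}, so $Y^\star=\mathcal{O}_T(A_i,C_i)x_0^\star$ for some $x_0^\star\in\R^{n_i}$. Since $\rank X_i=n_i$, the matrix $X_i$ has full row rank, so $\im X_i=\R^{n_i}$. Hence there is a real $\alpha$ with $X_i\alpha=x_0^\star$. One explicit choice is $\alpha = X_i^\top (X_iX_i^\top)^{-1}x_0^\star$, where $X_iX_i^\top$ is invertible because $X_i$ has full row rank. Then $H_Y^{(i)}\alpha=\mathcal{O}_T(A_i,C_i)x_0^\star=Y^\star$, so $Y^\star\in\im(H_Y^{(i)})$, understood over $\R$ as per the stated convention.

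There is no genuine obstacle here. The only points needing care are interpretive. First, "trajectory" must be read as an output sequence generated from some initial state, with no observability assumed; the argument never needs $\mathcal{O}_T$ to be injective. Second, full row rank of $X_i$ is exactly what guarantees that every initial state, and hence every output trajectory, is reachable as a combination of the stored initial states.
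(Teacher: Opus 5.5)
Your proof is correct and matches the paper's approach: the paper omits this proof as straightforward, but the argument it has in mind is yours. It combines the factorization $H_Y^{(i)}=\mathcal{O}_T(A_i,C_i)X_i$ with $\im(X_i)=\R^{n_i}$ under full row rank, which is the same step the paper uses in the proof of Lemma~\ref{lem:key}(ii).
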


Yet, we will not use this notion to perform any form of system identification. In fact, identifying the individual
dynamics is not required procedurally: the approach we develop leverages the recorded trajectories directly,
treating them as a library from which future behaviors are predicted.


\section{Continuation map and prediction for single systems}
\label{sec:singlemode_continuation}

In this section we focus on the \emph{single-system} setting, namely the case in which both the
template library and the trajectory to be predicted are generated by \emph{one} (unknown) LTI
system~\eqref{eq:ss}. 
This ideal setting makes transparent a few key concepts that are used throughout the article and the mechanism that enables exact prediction from templates. 

We begin by introducing the notion of \emph{continuation map} and explain its role in the prediction problem for a single system \eqref{eq:ss}, in order to highlight the key algebraic mechanisms behind exact prediction from templates. In the next section, we will
extend the analysis to the multi-system setting, where the trajectory to be predicted belongs to one system in a finite
family. 

\subsection{Continuation map and exact prediction}

An important property  that we will extensively use in this paper is Lemma \ref{lem:key} below.
We first recall this fact. 

\begin{fact}\label{fact:rank.stability}
Let $s$ be the observability index of the pair $(A,C)$. 
Then, for any pair of  integers $\ell,r$ such that  $\ell\ge r\ge s$, there exists a matrix $Q_{\ell}\in \R^{p\times pr}$ such that $CA^{\ell}= Q_{\ell} \mathcal{O}_r(A,C)$. 
\end{fact}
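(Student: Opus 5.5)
The plan is to reduce the claim to a statement about row spaces and then use the stabilization of the observability kernels. Since $\mathcal{O}_r(A,C)$ is real, $CA^{\ell}=Q_\ell\,\mathcal{O}_r(A,C)$ is solvable in $Q_\ell$ exactly when every row of $CA^{\ell}$ lies in $\im\bigl(\mathcal{O}_r^\top(A,C)\bigr)$. By the Fundamental Theorem of Linear Algebra, this is equivalent to
\[
\ker\bigl(\mathcal{O}_r(A,C)\bigr)\subseteq \ker\bigl(CA^{\ell}\bigr).
\]
So the proof will establish this kernel inclusion.

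\textbf{Step 1: kernel stabilization.} First I show that $\ker \mathcal{O}_k(A,C)=\ker \mathcal{O}_s(A,C)$ for all $k\ge s$. By Definition~\ref{def.obs.index}, $\rank \mathcal{O}_s=\rank\mathcal{O}_{s+1}$. Since $\ker\mathcal{O}_{s+1}\subseteq\ker\mathcal{O}_s$ and both kernels have the same dimension, they coincide. Now take $x\in\ker\mathcal{O}_{s+1}$. Then $Cx=0$ and $Ax\in\ker\mathcal{O}_s=\ker\mathcal{O}_{s+1}$, so $CA^{s}(Ax)=0$, which gives $x\in\ker\mathcal{O}_{s+2}$. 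Induction on $k$ then yields equality for every $k\ge s$. This is the same content as item (ii) of the Lemma recalled from Ruberti--Isidori, so one may also simply cite it.

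\textbf{Step 2: conclusion.} Fix $\ell\ge r\ge s$ and take $x\in\ker\mathcal{O}_r$. By Step 1, $x\in\ker\mathcal{O}_{\ell+1}$, whose last block row is $CA^{\ell}$, so $CA^{\ell}x=0$. This proves the kernel inclusion, hence every row of $CA^{\ell}$ is a linear combination of the rows of $\mathcal{O}_r$, and stacking the coefficients gives $Q_\ell\in\R^{p\times pr}$. As a check on the case $\ell=r$: the claim is exactly that $CA^r$ is a combination of the block rows $C,\dots,CA^{r-1}$, which is precisely what the stabilization provides.

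The only delicate point is the invariance argument in Step 1. Equal ranks at one index must be propagated to all larger indices, and this uses the shift relation $\mathcal{O}_{k+1}=\begin{bmatrix}C\\ \mathcal{O}_k A\end{bmatrix}$. The rest is linear algebra.
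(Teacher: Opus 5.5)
Your proof is correct and follows the paper's argument in Appendix A. The paper uses that the row spaces $\im(\mathcal{O}_k^\top)$ stabilize for $k\ge s$, so the rows of $CA^{\ell}$ (the last block row of $\mathcal{O}_{\ell+1}$) lie in $\im(\mathcal{O}_r^\top)$; this is your kernel inclusion $\ker\mathcal{O}_r\subseteq\ker(CA^{\ell})$ read through orthogonal complements. The only difference is that you derive the stabilization for all $k\ge s$ explicitly from the rank equality in Definition~\ref{def.obs.index}, via the shift identity $\mathcal{O}_{k+1}=\begin{bmatrix}C\\ \mathcal{O}_kA\end{bmatrix}$, whereas the paper takes this step from the recalled Ruberti--Isidori lemma.
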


\begin{proof} See Appendix A. 
\end{proof}

\begin{lemma}[Continuation map and latent representations]\label{lem:key}
Consider system~\eqref{eq:ss} and let $s$ denote its observability index.
Fix $r$ such that $s\le r<T$.
Let $H_Y$ be a library of $T$-long trajectories generated by~\eqref{eq:ss}, so that
$
H_Y = \mathcal{O}_T(A,C) X,
$
where $X$ 
collects the initial conditions
that generated the template trajectories (\emph{cf.} \eqref{eq:Xinit}). Partition $H_Y$ as
\[
H_Y = \begin{bmatrix} H_p\\[1mm] H_f\end{bmatrix}\in\R^{Tp\times N},
\quad
H_p\in\R^{rp\times N},\;\; H_f\in\R^{(T-r)p\times N}.
\]
Then:
\begin{itemize}
\item[(i)] \emph{(Continuation map).}
There exists a matrix $L_\star$ 
depending only on $(A,C,T,r)$ such that
\begin{equation}\label{eq:Lstar-system}
H_f = L_\star H_p,
\end{equation}
and for every length-$T$ trajectory $Y^\star$ of~\eqref{eq:ss}
we have
\begin{equation}\label{eq:Lstar-traj}
Y_{\mathrm{fut}} = L_\star\,Y_{\mathrm{past}}.
\end{equation}

\item[(ii)] \emph{(Exact prediction via latent variables).}
Under the data richness assumption $\rank(X)=n$, for every trajectory $Y^\star$ there exists at least one
$g^\star$ such that
\begin{equation}\label{eq:g-feasible}
Y_{\mathrm{past}}=H_p g^\star.
\end{equation}
Moreover, for \emph{any} $g$ satisfying \eqref{eq:g-feasible}, the prediction
\begin{equation}\label{eq:g-pred}
\widehat Y_{\mathrm{fut}}:=H_f g
\end{equation}
is exact:
\begin{equation}\label{eq:g-exact}
\widehat Y_{\mathrm{fut}}=Y_{\mathrm{fut}}.
\end{equation}
In particular, the predicted future is independent of the particular feasible $g$.

\item[(iii)] \emph{(Data-based computation of the continuation map and equivalence with the $g$-based prediction).}
If $\rank(X)=n$, any solution $L$ of the data equation
\begin{equation}\label{eq:L-from-data}
H_f = L H_p
\end{equation}
coincides with $L_\star$ on $\im(H_p)$ and therefore yields the same prediction on every admissible past window:
\begin{equation}\label{eq:L-exact}
Y_{\mathrm{fut}} = L Y_{\mathrm{past}}.
\end{equation}
Moreover,
the two predictors, namely the one based on the map $L$  and the one based on the latent variable $g$, agree:
$
H_f g = L(H_p g) = L Y_{\mathrm{past}}.
$
\end{itemize}
\end{lemma}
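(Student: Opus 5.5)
The argument rests on the factorization of the observability matrix. Write $\mathcal{O}_T(A,C)=\begin{bmatrix}\mathcal{O}_r(A,C)\\ \mathcal{F}\end{bmatrix}$, where $\mathcal{F}$ stacks the block rows $CA^{\ell}$ for $\ell=r,\dots,T-1$.

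\textbf{Part (i).} Since $r\ge s$, Fact~\ref{fact:rank.stability} applies to each $\ell\in\{r,\dots,T-1\}$. It gives matrices $Q_\ell$ with $CA^\ell=Q_\ell\mathcal{O}_r(A,C)$. Stacking them, we set
\[
L_\star:=\begin{bmatrix}Q_r\\ \vdots\\ Q_{T-1}\end{bmatrix},
\]
so that $\mathcal{F}=L_\star\mathcal{O}_r(A,C)$. This matrix depends only on $(A,C,T,r)$.

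Every length-$T$ trajectory has the form $Y^\star=\mathcal{O}_T(A,C)x_0$. Hence $Y_{\mathrm{past}}=\mathcal{O}_r x_0$ and $Y_{\mathrm{fut}}=\mathcal{F}x_0=L_\star Y_{\mathrm{past}}$, which is \eqref{eq:Lstar-traj}. Likewise $H_p=\mathcal{O}_r X$ and $H_f=\mathcal{F}X=L_\star H_p$, which is \eqref{eq:Lstar-system}.

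\textbf{Part (ii).} If $\rank X=n$, then $X$ has a right inverse, so $x_0=Xg^\star$ for some $g^\star$. This gives $Y_{\mathrm{past}}=\mathcal{O}_rXg^\star=H_pg^\star$. (Equivalently, $Y^\star\in\im H_Y$ by Lemma~\ref{lem:willems.aut}.)

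For any $g$ with $H_pg=Y_{\mathrm{past}}$, part (i) gives
\[
H_fg=L_\star H_pg=L_\star Y_{\mathrm{past}}=Y_{\mathrm{fut}}.
\]
The prediction is therefore exact and does not depend on which feasible $g$ is chosen. The key point is that $H_f$ factors through $H_p$, so it annihilates $\ker H_p$. It does not matter whether $x_0$ itself is recoverable.

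\textbf{Part (iii).} Suppose $L$ solves $H_f=LH_p$. Then $(L-L_\star)H_p=0$, so $L$ and $L_\star$ coincide on $\im H_p$. Under richness, every admissible $Y_{\mathrm{past}}$ lies in $\im H_p$ by (ii). Hence $LY_{\mathrm{past}}=L_\star Y_{\mathrm{past}}=Y_{\mathrm{fut}}$. Finally, $H_fg=LH_pg=LY_{\mathrm{past}}$, so the two predictors agree.

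\textbf{Main subtlety.} Nothing here is hard once Fact~\ref{fact:rank.stability} is available. The one point needing care is that $\mathcal{O}_r$ may be rank deficient, because observability is not assumed. The argument must therefore never invert $\mathcal{O}_r$ or recover $x_0$. It relies only on the row-space inclusion supplied by the Fact, and on richness to place $Y_{\mathrm{past}}$ in $\im H_p$.
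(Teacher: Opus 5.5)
Your proof is correct and follows the paper's argument essentially step for step. You split $\mathcal{O}_T$ into $\mathcal{O}_r$ and its tail block, stack the matrices $Q_\ell$ from Fact~\ref{fact:rank.stability} into $L_\star$, use the full row rank of $X$ for feasibility, and observe that any data solution $L$ agrees with $L_\star$ on $\im(H_p)$.
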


\begin{proof}
For compactness set $\mathcal{O}_k:=\mathcal{O}_k(A,C)$. Consider the partition
\[
\mathcal{O}_T=\begin{bmatrix}\mathcal{O}_r\\[1mm] M\end{bmatrix},
\quad
M:=\begin{bmatrix}CA^r\\ CA^{r+1}\\ \vdots\\ CA^{T-1}\end{bmatrix}.
\]
From $H_Y=\mathcal{O}_T X$ we obtain
\begin{equation}\label{eq:HpHf-factor}
H_p=\mathcal{O}_r X,\quad H_f=M X.
\end{equation}

\emph{(i).}
By Fact \ref{fact:rank.stability}, for every $\ell\ge r$ there exists $Q_\ell\in\R^{p\times rp}$ such that $CA^\ell=Q_\ell \mathcal{O}_r$.
Stacking all these matrices $Q_\ell$ for $\ell=r,\dots,T-1$ yields a matrix $L_\star$ such that
$
M=L_\star \mathcal{O}_r.
$
Combining \eqref{eq:HpHf-factor} with the identity $M=L_\star \mathcal{O}_r$ gives \eqref{eq:Lstar-system}.
Also, for any trajectory $Y^\star=\mathcal{O}_T x$, we have $Y_{\mathrm{past}}=\mathcal{O}_r x$ and
$Y_{\mathrm{fut}}=M x=(L_\star \mathcal{O}_r)x=L_\star Y_{\mathrm{past}}$, proving \eqref{eq:Lstar-traj}.

\emph{(ii).}
Let $Y^\star$ be any trajectory. Then $Y_{\mathrm{past}}=\mathcal{O}_r x$ for some $x\in\R^n$.
Since $\rank(X)=n$, we have $\im(X)=\R^n$, hence there exists $g^\star$ such that $x=Xg^\star$.
Using \eqref{eq:HpHf-factor}, we get
$Y_{\mathrm{past}} = \mathcal{O}_r (X g^\star) = H_p g^\star$,
which proves feasibility of \eqref{eq:g-feasible}.
Now take any $g$ satisfying $Y_{\mathrm{past}}=H_p g$.
Using \eqref{eq:Lstar-system} and \eqref{eq:Lstar-traj}, we obtain
$
H_f g = (L_\star H_p)g = L_\star (H_p g)=L_\star Y_{\mathrm{past}}=Y_{\mathrm{fut}},
$
which proves \eqref{eq:g-exact} and the independence from the choice of feasible $g$.

\emph{(iii).}
Let $L$ satisfy $H_f=LH_p$. Then 
$
L H_p = H_f \stackrel{\eqref{eq:Lstar-system}}{=}  L_\star H_p,
$ 
thus  $L$ and $L_\star$ coincide on $\im(H_p)$.
Under $\rank(X)=n$ every admissible past window satisfies $Y_{\mathrm{past}}\in\im(H_p)$,
hence $Y_{\mathrm{fut}}\stackrel{\eqref{eq:Lstar-traj}}{=}L_\star Y_{\mathrm{past}}=L Y_{\mathrm{past}}$, proving \eqref{eq:L-exact}.

Finally, if $Y_{\mathrm{past}}=H_p g$, then $H_f g = L(H_p g)=L Y_{\mathrm{past}}$.
\end{proof}

Some comments are in order.

\textbf{Existence and uniqueness of the continuation map.}
The matrix $L_\star$ in Lemma~\ref{lem:key}(i) represents a \emph{continuation map}: it specifies how a past output
window $Y_{\mathrm{past}}$ is extended into the corresponding future window $Y_{\mathrm{fut}}$.
Such a linear map exists as soon as $r\ge s$ (observability index).
However, $L_\star$ is not necessarily unique as a matrix:
different matrices can agree on the set of feasible past windows and still differ on vectors that
are \emph{not} realizable as past windows of the system.
What is unique is the \emph{continuation itself}:
for every admissible $Y_{\mathrm{past}}$, the value $Y_{\mathrm{fut}}$ is uniquely determined, and therefore
$L_\star Y_{\mathrm{past}}$ is unique, even if the matrix representation of $L_\star$ is not. 

\textbf{On data richness.}
Item (i) is fundamentally a property of the underlying dynamics: it can be phrased as
$M=L_\star \mathcal{O}_r$ and it holds independently of how the data were collected.
The richness assumption $\rank(X)=n$ is instead needed for two \emph{data-driven} purposes:
(1) It ensures \emph{feasibility} of the latent representation, i.e., that every admissible past window
can be written as $Y_{\mathrm{past}}=H_p g$ (Lemma~\ref{lem:key}(ii)). 
(2) It ensures that \emph{any} solution $L$ of the data equation $H_f=L H_p$ has the correct action on all
admissible past windows (Lemma~\ref{lem:key}(iii)).

Note, however, that exact prediction does not require data richness if one assumes $Y^\star \in \im(H_Y)$. Data richness is needed when we want to ensure feasibility over \emph{all} admissible trajectories of the system, or for computing $L$ from data. 
It is easy to verify that data richness is necessary to reconstruct $(A,C)$ from data, even only up to a similarity transformation, \emph{cf.} \cite[Chapter 9]{subspace-book2}.
 
\textbf{Encoder--decoder vs. direct predictors.} 
Lemma~\ref{lem:key} gives two equivalent ways to perform prediction in the single-mode case. The \emph{$g$-formulation} follows an encoder--decoder logic: one first encodes the observed past by fitting coefficients $g$ so that $Y_{\mathrm{past}}=H_p g$
and then decodes the future as $\widehat Y_{\mathrm{fut}}=H_f g$. The \emph{$L$-formulation} instead postulates a direct input--output (sequence-to-sequence) map and predicts by $\widehat Y_{\mathrm{fut}}=L\,Y_{\mathrm{past}}$.
While conceptually different, in the ideal setting with noiseless data the two viewpoints provide the same predictions. 
We will further compare the two approaches when dealing with noisy data in Section \ref{subsec:noiseLg}.

\textbf{Encoding as selection mechanism.}
Closely related results have appeared in the literature; in particular, we refer to \cite[Proposition 1]{Markov08}, which closely parallels Lemma~\ref{lem:key}(ii). There are, however, some differences. In \cite{Markov08}, and in the related literature, this result is framed within the behavioral approach. The matrix $H_Y$ is built from a single trajectory (typically input-output), and $g$ represents a sort of latent state. Indeed, in view of the identity $H_p = \mathcal{O}_r X$, solving $Y^*_{\mathrm{past}} = H_p g$ is equivalent to selecting an initial condition $x_g = X g$ (not necessarily unique) whose first $r$ outputs match the observed past. This interpretation is clearly possible here as well. However, our main interest is in the case where $H_Y$ is built from multiple trajectories collected from different systems, as we begin to discuss in Section~\ref{sec:multimode}. In this context, $g$ makes explicit the view of prediction as the \emph{selection and combination of templates} from a library. This interpretation will remain useful also in the study of out-of-library generalization.


\section{Trajectory prediction with templates} \label{sec:multimode}

In a setting where multiple systems contribute prototypical trajectories, there are (at least) two natural ways to leverage the available libraries. A first option is  \emph{separate} inference: for each system $i\in\mathcal{I}$ one fits coefficients using the submatrix of the template matrix $H_Y^{(i)}$ that corresponds to the observed trajectory $Y_{\mathrm{past}}$ (see \eqref{eq:new_traj_split_intro}),
and then predicts $Y_{\mathrm{fut}}^{(i)}$ via the remaining submatrix of  $H_Y^{(i)}$. One may then select among the candidate predictions using additional information (e.g., feasibility of the fit or residual size). This approach is conceptually simple and, in the ideal case where the trajectory indeed belongs to one of the library templates, it can be sufficient. 

In this paper we focus instead on the \emph{mixed} formulation, where all templates are aggregated into a single library
\begin{equation}\label{HY-mixed}
H_Y=\begin{bmatrix}H_Y^{(1)} & \cdots & H_Y^{(Q)}\end{bmatrix}
=\begin{bmatrix}H_p\\[1mm] H_f\end{bmatrix} 
\end{equation}
and prediction is performed by fitting a single coefficient vector $g\in\R^N$ against the mixed past matrix $H_p$ and decoding with the mixed future matrix $H_f$. The reason for emphasizing the mixed viewpoint is that it naturally supports \emph{cross-system} combinations: the predictor is allowed to synthesize a continuation by combining behaviors drawn from different templates in the library. This flexibility is not only convenient for analysis, but becomes essential in more complex scenarios considered later in the paper, notably when the trajectory to be predicted is generated by a system that is \emph{not} contained in the nominal library. In such out-of-library settings, accurate prediction may require blending template behaviors across  systems (rather than committing to a single one), and the mixed formulation provides the appropriate mathematical and algorithmic framework for doing so.

\subsection{Aggregate system and mixed template matrix}\label{subsec:mix-agg}
 
Consider a family of $Q$ discrete-time systems as in \eqref{eq:multimode_sys},
with common output dimension $p$.
For each  system $i$ we collect $N_i$ length-$T$ trajectories and form the template matrix
\[
  H_Y^{(i)}=\begin{bmatrix}Y^{(i,1)}&\cdots&Y^{(i,N_i)}\end{bmatrix}\in\R^{Tp\times N_i}.
\]
Each $H_Y^{(i)}$ can be factorized as $H_Y^{(i)}=\mathcal{O}_T(A_i,C_i)\,X_i$,
where $X_i\in\R^{n_i \times N_i}$ stacks the corresponding initial conditions, with $n_i$ the state dimension of the $i$-th system.

Let $n:=\sum_{i=1}^Q n_i$.
Introduce the extended state $x_k^{\mathrm{mix}}\in\R^{n}$ and define the aggregate pair
\begin{equation}\label{eq:agg-system-Q}
  x_{k+1}^{\mathrm{mix}} = A_{\mathrm{mix}} x_k^{\mathrm{mix}},\quad
  y_k = C_{\mathrm{mix}} x_k^{\mathrm{mix}},
\end{equation}
where
\[
  A_{\mathrm{mix}} := \mathrm{blkdiag}(A_1,\dots,A_Q), \quad 
  C_{\mathrm{mix}} := \begin{bmatrix}C_1&\cdots&C_Q\end{bmatrix}. \]
The corresponding order-$T$ observability matrix is
\[
\mathcal{O}_T(A_{\mathrm{mix}},C_{\mathrm{mix}})
  =
  \begin{bmatrix}
    C_1 & \cdots & C_Q\\
    C_1A_1 & \cdots & C_QA_Q\\
    \vdots & & \vdots\\
    C_1A_1^{T-1} & \cdots & C_QA_Q^{T-1}
  \end{bmatrix}. 
\]

Define the block-diagonal matrix of initial conditions
\begin{equation}\label{def:Xmix}
  X_{\mathrm{mix}} := \mathrm{blkdiag}(X_1,\dots,X_Q)\in\R^{n\times N}.
\end{equation}

\begin{lemma}[Factorization of the mixed library]\label{prop:agg-factorization}
The aggregate template matrix $H_Y$ admits the factorization
\[
  H_Y = \mathcal{O}_T(A_{\mathrm{mix}},C_{\mathrm{mix}}) X_{\mathrm{mix}}.
\]
\end{lemma}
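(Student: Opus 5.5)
The factorization follows from block-matrix multiplication combined with the per-system factorizations $H_Y^{(i)}=\mathcal{O}_T(A_i,C_i)X_i$ already established in Section~\ref{sec:multimode_intro}. My plan is to first write the mixed observability matrix as a block row with one column block per system. The $k$-th block row of $\mathcal{O}_T(A_{\mathrm{mix}},C_{\mathrm{mix}})$ is $C_{\mathrm{mix}}A_{\mathrm{mix}}^k$. Since $A_{\mathrm{mix}}$ is block diagonal, $A_{\mathrm{mix}}^k=\mathrm{blkdiag}(A_1^k,\dots,A_Q^k)$, which can be checked by a trivial induction on $k$. Hence
\[
C_{\mathrm{mix}}A_{\mathrm{mix}}^k=\begin{bmatrix}C_1A_1^k & \cdots & C_QA_Q^k\end{bmatrix},
\]
and stacking over $k=0,\dots,T-1$ gives
\[
\mathcal{O}_T(A_{\mathrm{mix}},C_{\mathrm{mix}})=\begin{bmatrix}\mathcal{O}_T(A_1,C_1) & \cdots & \mathcal{O}_T(A_Q,C_Q)\end{bmatrix},
\]
with column blocks of widths $n_1,\dots,n_Q$. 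This matches the explicit display given just before the lemma.

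Next I multiply this block row by $X_{\mathrm{mix}}=\mathrm{blkdiag}(X_1,\dots,X_Q)$ from \eqref{def:Xmix}. The column partition of the observability matrix (widths $n_i$) is compatible with the row partition of $X_{\mathrm{mix}}$ (heights $n_i$). Block multiplication therefore gives
\[
\begin{bmatrix}\mathcal{O}_T(A_1,C_1)X_1 & \cdots & \mathcal{O}_T(A_Q,C_Q)X_Q\end{bmatrix},
\]
because each off-diagonal block of $X_{\mathrm{mix}}$ is zero. By the per-system factorization, the $i$-th block equals $H_Y^{(i)}$. The resulting block row is therefore exactly $H_Y$ as defined in \eqref{eq:global_library}/\eqref{HY-mixed}, and the column sizes $N_i$ add up to $N$ as required.

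There is no real obstacle here. The only point needing care is bookkeeping: the block partitions must be conformable (the $n_i$ versus $N_i$ dimensions), and the column ordering of $X_{\mathrm{mix}}$ must agree with the ordering of the sub-libraries in $H_Y$. Both hold by construction.
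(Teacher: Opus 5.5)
Your proposal is correct and follows the paper's proof, which is exactly this direct block multiplication of $\mathcal{O}_T(A_{\mathrm{mix}},C_{\mathrm{mix}})=\begin{bmatrix}\mathcal{O}_T(A_1,C_1) & \cdots & \mathcal{O}_T(A_Q,C_Q)\end{bmatrix}$ by $X_{\mathrm{mix}}=\mathrm{blkdiag}(X_1,\dots,X_Q)$, followed by the per-system factorizations $H_Y^{(i)}=\mathcal{O}_T(A_i,C_i)X_i$. Your one addition is the justification via $A_{\mathrm{mix}}^k=\mathrm{blkdiag}(A_1^k,\dots,A_Q^k)$ of the block-row structure of the mixed observability matrix, which the paper simply displays before the lemma.
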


\begin{proof}
A direct block multiplication gives
\begin{eqnarray}
&&  \mathcal{O}_T(A_{\mathrm{mix}},C_{\mathrm{mix}}) X_{\mathrm{mix}} \nonumber \\
&&  \quad = 
  \begin{bmatrix}
  \mathcal{O}_T(A_1,C_1)X_1 & \cdots &   \mathcal{O}_T(A_Q,C_Q)X_Q
  \end{bmatrix} \nonumber \\
 && \quad =
  \begin{bmatrix}
    H_Y^{(1)} & \cdots & H_Y^{(Q)}
  \end{bmatrix}
  =H_Y.
\end{eqnarray}
\end{proof}

\subsection{Exact prediction using the aggregate library}\label{subsec:mix-exact}

Let $s_{\mathrm{mix}}$ denote the observability index of $(A_{\mathrm{mix}},C_{\mathrm{mix}})$.
Fix $r<T$ and partition the aggregate library as in \eqref{HY-mixed}
\[
  H_Y=\begin{bmatrix}H_p\\[1mm] H_f\end{bmatrix}, \quad H_p \in \mathbb{R}^{rp \times N}, H_f \in \mathbb{R}^{(T-r)p \times N}.
\]
By Lemma~\ref{prop:agg-factorization}, Lemma~\ref{lem:key} (applied to the single LTI pair
$(A_{\mathrm{mix}},C_{\mathrm{mix}})$) yields the following prediction result using the aggregate library.

\begin{theorem}[Exact prediction via aggregate library]\label{thm:mixed-prediction}
Consider any length-$T$ trajectory $Y^\star$ generated by one (unknown) system $i^\star\in\mathcal I$.
Suppose $r\ge s_{\mathrm{mix}}$ and that the data are rich, i.e., $\rank(X_i)=n_i$ for every $i\in\mathcal I$.
Then:
\begin{itemize}
\item[(i)] there exists at least one $g$ such that
$Y_{\mathrm{past}} = H_p\,g^\star$;
\item[(ii)] for any $g$ satisfying $Y_{\mathrm{past}} = H_p g$, the prediction
$\widehat Y_{\mathrm{fut}} := H_f\,g$
is exact: $\widehat Y_{\mathrm{fut}} = Y_{\mathrm{fut}}$.
\end{itemize}
\end{theorem}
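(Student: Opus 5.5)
The plan is to reduce the statement to Lemma~\ref{lem:key} applied to the single aggregate pair $(A_{\mathrm{mix}},C_{\mathrm{mix}})$. By Lemma~\ref{prop:agg-factorization} we have $H_Y=\mathcal{O}_T(A_{\mathrm{mix}},C_{\mathrm{mix}})X_{\mathrm{mix}}$. Hence $H_Y$ is formally a library of $T$-long trajectories of the aggregate system~\eqref{eq:agg-system-Q}, with initial-condition matrix $X_{\mathrm{mix}}$. Since $r\ge s_{\mathrm{mix}}$, Lemma~\ref{lem:key}(i) applies directly. It yields a continuation map $L_\star$, depending only on $(A_{\mathrm{mix}},C_{\mathrm{mix}},T,r)$, such that $H_f=L_\star H_p$ and $Y_{\mathrm{fut}}=L_\star Y_{\mathrm{past}}$ for every length-$T$ trajectory of the aggregate system.

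Next I will show that $Y^\star$, generated by system $i^\star$, is a trajectory of the aggregate system. Write $Y^\star=\mathcal{O}_T(A_{i^\star},C_{i^\star})x$ for some $x\in\R^{n_{i^\star}}$. Embed $x$ as $x^{\mathrm{mix}}:=(0,\dots,0,x,0,\dots,0)$, with $x$ placed in the $i^\star$-th block. By the block-diagonal structure of $A_{\mathrm{mix}}$, we get $C_{\mathrm{mix}}A_{\mathrm{mix}}^k x^{\mathrm{mix}}=C_{i^\star}A_{i^\star}^k x$. Therefore $Y^\star=\mathcal{O}_T(A_{\mathrm{mix}},C_{\mathrm{mix}})x^{\mathrm{mix}}$.

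The main point is item (i). Lemma~\ref{lem:key}(ii) would require $\rank X_{\mathrm{mix}}=n$, and this indeed follows from data richness: $X_{\mathrm{mix}}=\mathrm{blkdiag}(X_1,\dots,X_Q)$, so its rank is $\sum_i\rank X_i=\sum_i n_i=n$. Alternatively, one can argue directly. Since $\rank X_{i^\star}=n_{i^\star}$, pick $g_{i^\star}$ with $x=X_{i^\star}g_{i^\star}$, and set $g^\star$ equal to $g_{i^\star}$ on the columns of $H_Y^{(i^\star)}$ and zero elsewhere. Then $X_{\mathrm{mix}}g^\star=x^{\mathrm{mix}}$, and hence $H_pg^\star=\mathcal{O}_r(A_{\mathrm{mix}},C_{\mathrm{mix}})x^{\mathrm{mix}}=Y_{\mathrm{past}}$.

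For item (ii), take any $g$ with $H_pg=Y_{\mathrm{past}}$. Then
\[
H_fg=L_\star H_pg=L_\star Y_{\mathrm{past}}=Y_{\mathrm{fut}},
\]
exactly as in the proof of Lemma~\ref{lem:key}(ii). The only subtlety is that $g$ may mix columns from different systems, and so need not select $i^\star$. This is harmless, because $L_\star$ is a single linear map valid on the whole aggregate behavior, and it therefore evaluates any feasible combination consistently. I expect no real obstacle beyond checking the rank and embedding steps above.
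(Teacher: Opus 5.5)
Your proposal is correct and follows the paper's proof: you factor $H_Y=\mathcal{O}_T(A_{\mathrm{mix}},C_{\mathrm{mix}})X_{\mathrm{mix}}$ via Lemma~\ref{prop:agg-factorization}, use richness to get $\rank(X_{\mathrm{mix}})=n$, embed $Y^\star$ as an aggregate trajectory by zero-padding the initial state outside block $i^\star$, and then invoke Lemma~\ref{lem:key} for the pair $(A_{\mathrm{mix}},C_{\mathrm{mix}})$. Your explicit block-wise construction of $g^\star$ also shows that item (i) only needs $\rank(X_{i^\star})=n_{i^\star}$, and item (ii) needs no richness at all.
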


\begin{proof}
By Lemma~\ref{prop:agg-factorization} we have
$
H_Y=\mathcal{O}_T(A_{\mathrm{mix}},C_{\mathrm{mix}}) X_{\mathrm{mix}}
$
and richness implies $\rank(X_{\mathrm{mix}})=n$.
Further, any trajectory $Y^\star$ generated by one system $i^\star\in \mathcal{I}$ can be seen as a trajectory of the aggregate system
\eqref{eq:agg-system-Q} (take an initial condition with only the $i^\star$-th block possibly nonzero).
Since $r\ge s_{\mathrm{mix}}$, the assumptions of Lemma~\ref{lem:key} applied to
$(A_{\mathrm{mix}},C_{\mathrm{mix}})$ and $H_Y$ are satisfied, which yields (i) and (ii).
\end{proof}

\textbf{Continuation map for the aggregate system.} Theorem \ref{thm:mixed-prediction} favours the use of the latent variable $g$. However, prediction results could also be given via the continuation map of the aggregated system.  
In fact, if  $r\ge s_{\mathrm{mix}}$, by Lemma~\ref{lem:key} applied to the aggregate pair
$(A_{\mathrm{mix}},C_{\mathrm{mix}})$, 
there exists a linear continuation map
$L_{\mathrm{mix}}\in\R^{(T-r)p\times rp}$ such that
\[
H_f = L_{\mathrm{mix}}\,H_p,
  \quad
Y_{\mathrm{fut}} = L_{\mathrm{mix}}\,Y_{\mathrm{past}}
\]
for every length-$T$ trajectory $Y^\star$ generated by any  system in the family.  From a data-driven viewpoint, i.e., if the aggregated system is known only via $H_Y$, one may compute a representative continuation map by solving the linear
matrix equation $H_f=L\,H_p$ under the additional condition on data richness of the aggregated library ($\rank(X_i)=n_i$ for every $i\in\mathcal I$). As in the single-system case, $L_{\mathrm{mix}}$ need not be unique as a matrix, but its action on
admissible past windows is unique.

\textbf{Exact prediction does not imply system reconstruction.}
\label{rem:no-mode-reconstruction}
Theorem~\ref{thm:mixed-prediction} guarantees exact prediction from the aggregated library, but it does not, 
in general, allow one to identify the system that generated the trajectory to be completed.
In fact, since the aggregated matrix $H_Y$ may have many more columns than its row-rank ($N\gg n$), representations of $Y^\star$ in terms of the columns of $H_Y$ are generally \emph{not unique}. In particular, it may happen that there exist coefficients $\tilde g^{(1)},\dots,\tilde g^{(Q)}$, with more than one $\tilde g^{(i)}$ nonzero, such that
\[
  Y^\star
  = \sum_{i=1}^Q H_Y^{(i)}\,\tilde g^{(i)}
\]
Therefore, from the mixed representation alone, one cannot in general certify whether $Y^\star$ was generated by a single system   $i^\star$ or by an (algebraic) combination of columns drawn from multiple systems. Additional structure (e.g., distinguishability or sparsity/group-sparsity constraints on $g$) is needed if system  reconstruction is a goal.

Theorem \ref{thm:mixed-prediction} provides the first foundational result of the paper.
Building on this result, the next  four sections develop four main extensions:
\begin{enumerate}
\item \textbf{Noisy data (Section \ref{sec:mix-noise}).}
We study the robustness and quality of prediction when noise affects the data, both at the level of the observed measurements and of the training library.
\item \textbf{Out-of-library prediction (Section \ref{sec:out-of-library}).}
We address the problem of predicting trajectories generated by systems that are not represented in the nominal training library.
\item \textbf{Interconnected libraries (Section \ref{subsec:feedback_composed_libraries}).}
We introduce a composition-based generalization mechanism:  we interconnect atomic systems to give rise to libraries that generate modes not present in the atomic systems. 

\item \textbf{Immersion and nonlinear trajectories (Section \ref{sec:output-containment-immersion}).}
We show that the proposed framework naturally extends to certain nonlinear systems. In particular, we consider systems that can be \emph{immersed} into a finite-dimensional linear system. We show that, under suitable conditions, the same prediction procedure based on linear templates remains exact. 
\end{enumerate}

Prior to that, in the next subsection we discuss a point that deserves consideration. We discuss it here to maintain the analysis to a simple level but the considerations we make here apply to the rest of the paper as well.

\subsection{Reduced-order latent variables via a basis of $\im(H_p)$} \label{subsec:reduced}

In this section we give a variant of Theorem \ref{thm:mixed-prediction} with reduced-order latent variable that replaces $g$ therein.  Let
\[
d := \rank(H_p) = \dim\bigl(\im(H_p)\bigr).
\]
Under the noise-free assumption and data richness,  $d$ coincides with the rank of
the truncated observability matrix of the mixed system, i.e.,
$d=\rank\bigl(\mathcal{O}_{\mathrm{mix}}(r)\bigr)\le n$, and equals $n$ when the aggregate pair is fully observable.

Consider the compact SVD of $H_p$
\[
H_p = 
\begin{bmatrix} U & \hat U\end{bmatrix}
\begin{bmatrix} 
\Sigma & 0_{d\times (N-d)}\\
0_{(rp-d)\times d} & 0_{(rp-d)\times (N-d)}
\end{bmatrix}
\begin{bmatrix} 
V^\top\\
\hat V^\top
\end{bmatrix}
= U \Sigma V^\top,
\]
where $U\in\mathbb{R}^{rp\times d}$ and $V\in\mathbb{R}^{N\times d}$ have orthonormal columns and
$\Sigma\in\mathbb{R}^{d\times d}$ is diagonal with positive entries. Set 
\begin{equation}\label{Bp-Bf}
B_p := U\Sigma \in\mathbb{R}^{rp\times d}, \quad 
B_f := H_f V \in\mathbb{R}^{(T-r)p\times d}.
\end{equation}

\begin{theorem}[Prediction with reduced-order latent variable]\label{thm:mixed-prediction-reduced}
Consider the same assumptions as in Theorem \ref{thm:mixed-prediction}. 
Then:
\begin{itemize}
\item[(i)] there exists a unique $\alpha^\star$ such that
$Y_{\mathrm{past}} = B_p\,\alpha^\star$;
\item[(ii)] the prediction
$\widehat Y_{\mathrm{fut}} := B_f\,\alpha^\star$
is exact: $\widehat Y_{\mathrm{fut}} = Y_{\mathrm{fut}}$.
\end{itemize}
\end{theorem}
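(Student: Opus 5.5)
The plan is to reduce everything to Theorem~\ref{thm:mixed-prediction} through the compact SVD $H_p=U\Sigma V^\top$. The key identities are $\im(B_p)=\im(U)=\im(H_p)$ and that $B_p=U\Sigma$ has full column rank $d$. The latter holds because $U$ has orthonormal columns and $\Sigma$ is invertible.

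\emph{Existence in (i).} By Theorem~\ref{thm:mixed-prediction}(i), there is $g^\star$ with $Y_{\mathrm{past}}=H_p g^\star$. Then
\[
Y_{\mathrm{past}}=U\Sigma V^\top g^\star = B_p(V^\top g^\star),
\]
so $\alpha^\star:=V^\top g^\star$ is a solution.

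\emph{Uniqueness in (i).} Suppose $B_p\alpha=B_p\alpha'$. Multiplying on the left by $\Sigma^{-1}U^\top$ and using $U^\top U=I_d$ gives $\alpha=\alpha'$. Equivalently, $\alpha^\star=\Sigma^{-1}U^\top Y_{\mathrm{past}}$.

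\emph{Exactness in (ii).} Set $g:=V\alpha^\star\in\R^N$. Since $V^\top V=I_d$,
\[
H_p g = U\Sigma V^\top V\alpha^\star = U\Sigma\alpha^\star = B_p\alpha^\star = Y_{\mathrm{past}},
\]
so $g$ is feasible in the sense of Theorem~\ref{thm:mixed-prediction}(ii). That theorem then gives $H_f g=Y_{\mathrm{fut}}$. Finally, $H_f g=H_f V\alpha^\star=B_f\alpha^\star=\widehat Y_{\mathrm{fut}}$, which proves exactness.

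The only point needing care is that $g=V\alpha^\star$ generally differs from the original $g^\star$: in fact $V\alpha^\star=VV^\top g^\star$ is the orthogonal projection of $g^\star$ onto $\im(V)=\ker(H_p)^\perp$. This causes no difficulty, because Theorem~\ref{thm:mixed-prediction}(ii) asserts exactness for \emph{any} feasible $g$, so feasibility of $V\alpha^\star$ is all that must be checked. Everything else is routine use of $U^\top U=V^\top V=I_d$.
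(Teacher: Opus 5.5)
Your proof is correct. Part (i) matches the paper's argument: existence through $\alpha^\star=V^\top g^\star$ for a feasible $g^\star$, and uniqueness from the full column rank of $B_p$. Your closed form $\alpha^\star=\Sigma^{-1}U^\top Y_{\mathrm{past}}$ is a small addition.

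In part (ii) the two arguments run in opposite directions. The paper keeps $g^\star$ and identifies $\alpha^\star=V^\top g^\star$ by uniqueness. It then uses the continuation identity $H_f=L_{\mathrm{mix}}H_p$ to prove the matrix identity $H_f=B_fV^\top$ (and along the way $B_f=L_{\mathrm{mix}}B_p$). From this it concludes $B_f\alpha^\star=H_fg^\star=Y_{\mathrm{fut}}$. You instead lift $\alpha^\star$ to $V\alpha^\star$, check that it is feasible, and invoke the ``any feasible $g$'' clause of Theorem~\ref{thm:mixed-prediction}(ii).

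Your version is shorter and uses Theorem~\ref{thm:mixed-prediction} as a black box, so $L_{\mathrm{mix}}$ never has to appear. The paper's version gives a statement about the library itself, independent of any particular trajectory: $H_p=B_pV^\top$ and $H_f=B_fV^\top$. This says $(B_p,B_f)$ is an exact compression of $(H_p,H_f)$ that inherits the continuation identity. That is the natural fact to have if one wants to substitute the reduced library for $H_Y$ elsewhere in the paper. Both arguments rest on the same underlying fact: $H_f$ vanishes on $\ker(H_p)$.
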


\begin{proof}
By data richness, there exists $g^\star$ such that $Y^\star = H_Y g^\star$. Thus, $Y_{\mathrm{past}}=H_p g^\star$ and $Y_{\mathrm{past}}\in \im(H_p)$. As $H_p=U \Sigma V^\top$ and $V^\top$ has full row rank, the identity $\im(H_p)=\im(B_p)$ holds. This implies that $Y_{\mathrm{past}}\in \im(B_p)$, hence there exists $\alpha^\star$ such that $Y_{\mathrm{past}} = B_p \alpha^\star$. Since $B_p$ has full column rank, $\alpha^\star$ is unique. This shows (i). 

Note that $Y_{\mathrm{past}}= H_p g^\star = U\Sigma V^\top g^\star = B_p V^\top g^\star$. As the solution to  $Y_{\mathrm{past}}=B_p \alpha^\star$ is unique, we obtain $V^\top g^\star= \alpha^\star$. The definition $\widehat Y_{\mathrm{fut}} := B_f\,\alpha^\star$  yields 
$\widehat Y_{\mathrm{fut}} = B_f \, V^\top g^\star$. Finally, recall that $r\ge s_{\mathrm{mix}}$ implies $H_f  = 
L_{\mathrm{mix}} H_p$. Therefore,  $B_f:=H_f V= 
L_{\mathrm{mix}} U\Sigma$ and 
$H_f  = 
L_{\mathrm{mix}} U\Sigma V^\top = B_f V^\top$. Hence, $\widehat Y_{\mathrm{fut}} = B_f \, V^\top g^\star = H_f g^\star$, which shows $\widehat Y_{\mathrm{fut}} = Y_{\mathrm{fut}}$ by Theorem \ref{thm:mixed-prediction}(ii). 
\end{proof}

Unlike the dimension of $g$, the dimension $d$ of $\alpha^\star$ depends only on the rank of the matrix of template behaviors $H_p$  (and is at most $n$), not on the number $N$ of stored trajectories.


\section{Trajectory prediction from noisy data}
\label{sec:mix-noise}

We now study robustness of the aggregated predictor when noise affects the data.

\subsection{Noisy measurements with exact library}
\label{subsec:mix-noisy-meas}

We first assume that the blocks $(H_p,H_f)$ of the aggregated library \eqref{HY-mixed} are noise-free, and that  only the observed window
$Y_{\mathrm{past}}$ of $Y^\star$
is corrupted by additive noise:
\begin{equation}\label{eq:mix-noisy-meas}
  Z_{\mathrm{past}} = Y_{\mathrm{past}} + n_{\mathrm{past}},
  \quad
  n_{\mathrm{past}}\in\R^{rp}.
\end{equation}
Since $Z_{\mathrm{past}}$ need not lie in $\im(H_p)$, we estimate coefficients via least squares\footnote{Appendix \ref{thm.least.squares} provides a summary of least squares solutions. 
}

\begin{equation}\label{eq:mix-LS}
  \hat g \in \arg\min_{g\in\R^{N}} \ \|Z_{\mathrm{past}}-H_pg\|_2^2,
\end{equation}
and decode the prediction as
\begin{equation}\label{eq:mix-pred}
  \widehat Y_{\mathrm{fut}} := H_f \hat g.
\end{equation}

As shown below, while the minimizer of \eqref{eq:mix-LS} may be non-unique (typically $N\gg rp$), the prediction is
unique as soon as the library satisfies the nominal continuation identity \eqref{eq:mix-continuation-recall} below.
Recall from the noise-free  analysis  for the aggregated system that, under  the condition that $r\ge s_{\mathrm{mix}}$, 
there exists a
linear continuation map $L_{\mathrm{mix}}\in\R^{(T-r)p\times rp}$ such that
\begin{equation}\label{eq:mix-continuation-recall}
  H_f = L_{\mathrm{mix}}\,H_p,
  \quad
  Y_{\mathrm{fut}} = L_{\mathrm{mix}}\,Y_{\mathrm{past}}
\end{equation}
This structural relation is the key ingredient
behind the following result, 
which shows uniqueness of the prediction along with an exact quantification of the prediction error. 

\begin{theorem}[Prediction with noisy observations]\label{thm:mixed-prediction-noisy}
Consider any trajectory $Y^\star$ generated by one (unknown) system $i^* \in \mathcal{I}$,
and let $Z_{\mathrm{past}}$ be as in \eqref{eq:mix-noisy-meas}. 
Suppose $r\ge s_{\mathrm{mix}}$.  
Then:
\begin{itemize}
\item[(i)] {\rm [Prediction is unique even if the least-squares minimizer is not]} Let $\hat g_1,\hat g_2$ be any two minimizers of~\eqref{eq:mix-LS}. Then
$
H_f\hat g_1=H_f\hat g_2,
$
therefore $\widehat Y_{\mathrm{fut}}=H_f\hat g$ is well-defined (independent of the chosen minimizer).

\item[(ii)] {\rm [Prediction error with noisy observations] } Consider the least-squares predictor~\eqref{eq:mix-LS}--\eqref{eq:mix-pred}. 
If $\rank(X_i)=n_i$ for every $i\in\mathcal I$, then the prediction error
$\epsilon:=\widehat Y_{\mathrm{fut}}-Y_{\mathrm{fut}}$ satisfies the identity
\begin{equation}\label{eq:mix-eps-noisy-meas}
  \epsilon = L_{\mathrm{mix}}\,\Pi_p\,n_{\mathrm{past}},
\end{equation}
where 
$
\Pi_p:=H_pH_p^\dagger
$ 
is the orthogonal projector onto $\im(H_p)$, 
and the bound
\begin{equation}\label{eq:mix-eps-noisy-meas-bound}
  \|\epsilon\|_2 \le \|L_{\mathrm{mix}}\|_2\,\|n_{\mathrm{past}}\|_2 .
\end{equation}
\end{itemize}
\end{theorem}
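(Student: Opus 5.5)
The plan rests on two facts: the characterization of least-squares minimizers recalled in the Appendix, and the nominal continuation identity \eqref{eq:mix-continuation-recall}, $H_f=L_{\mathrm{mix}}H_p$, which holds whenever $r\ge s_{\mathrm{mix}}$ by Lemma~\ref{lem:key}(i) applied to $(A_{\mathrm{mix}},C_{\mathrm{mix}})$ via Lemma~\ref{prop:agg-factorization}.

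\emph{Step 1 (item (i)).} Every minimizer $\hat g$ of \eqref{eq:mix-LS} satisfies the normal equations $H_p^\top H_p\hat g=H_p^\top Z_{\mathrm{past}}$. Equivalently, $H_p\hat g=\Pi_p Z_{\mathrm{past}}$, the orthogonal projection of $Z_{\mathrm{past}}$ onto $\im(H_p)$. Hence $H_p\hat g_1=H_p\hat g_2$, so $\hat g_1-\hat g_2\in\ker(H_p)$. Then
\[
H_f(\hat g_1-\hat g_2)=L_{\mathrm{mix}}H_p(\hat g_1-\hat g_2)=0,
\]
which gives well-posedness of $\widehat Y_{\mathrm{fut}}$. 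No richness assumption is needed here.

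\emph{Step 2 (identity \eqref{eq:mix-eps-noisy-meas}).} We have $\widehat Y_{\mathrm{fut}}=H_f\hat g=L_{\mathrm{mix}}H_p\hat g=L_{\mathrm{mix}}\Pi_p Z_{\mathrm{past}}$. Under richness, Theorem~\ref{thm:mixed-prediction}(i) gives $Y_{\mathrm{past}}\in\im(H_p)$, because $Y^\star$ is a trajectory of the aggregate system. Therefore $\Pi_p Y_{\mathrm{past}}=Y_{\mathrm{past}}$. Using $Y_{\mathrm{fut}}=L_{\mathrm{mix}}Y_{\mathrm{past}}$ from \eqref{eq:mix-continuation-recall}, we obtain
\[
\epsilon=L_{\mathrm{mix}}\Pi_p(Y_{\mathrm{past}}+n_{\mathrm{past}})-L_{\mathrm{mix}}Y_{\mathrm{past}}=L_{\mathrm{mix}}\Pi_p n_{\mathrm{past}}.
\]

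\emph{Step 3 (bound \eqref{eq:mix-eps-noisy-meas-bound}).} Submultiplicativity gives $\|\epsilon\|_2\le\|L_{\mathrm{mix}}\|_2\|\Pi_p\|_2\|n_{\mathrm{past}}\|_2$. Since $\Pi_p$ is an orthogonal projector, $\|\Pi_p\|_2\le 1$, which yields the stated bound.

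The only delicate point is that $L_{\mathrm{mix}}$ is not unique as a matrix. I would fix any matrix satisfying $H_f=L_{\mathrm{mix}}H_p$. The identity then holds for that choice, because $\Pi_p n_{\mathrm{past}}\in\im(H_p)$, and all such matrices agree on $\im(H_p)$ by Lemma~\ref{lem:key}(iii). The bound also holds for that choice, and one may take the minimal-norm one, e.g.\ $L_{\mathrm{mix}}=H_fH_p^\dagger$.

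I also need $Y_{\mathrm{fut}}=L_{\mathrm{mix}}Y_{\mathrm{past}}$ for the same chosen matrix. This follows from $Y_{\mathrm{past}}=H_pg^\star$: writing $Y_{\mathrm{fut}}=H_fg^\star$, we get $Y_{\mathrm{fut}}=L_{\mathrm{mix}}H_pg^\star=L_{\mathrm{mix}}Y_{\mathrm{past}}$. This is where richness enters.
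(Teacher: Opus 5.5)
Your proposal is correct and follows the paper's proof essentially step by step: minimizer-independence of the fitted value $H_p\hat g=\Pi_p Z_{\mathrm{past}}$, combined with $H_f=L_{\mathrm{mix}}H_p$, gives (i), and $Y_{\mathrm{past}}\in\im(H_p)$ together with $Y_{\mathrm{fut}}=L_{\mathrm{mix}}Y_{\mathrm{past}}$ gives the error identity and the bound. Your extra care about the non-uniqueness of $L_{\mathrm{mix}}$ (any solution of $H_f=LH_p$ works, e.g.\ $H_fH_p^\dagger$) matches what the paper does in the remark immediately following the theorem.
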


\begin{proof}
(i)
As $\hat g = H_p^\dag Z_{\mathrm{past}}+(I_N-H_p^\dag H_p)\gamma$, all least squares minimizers produce the same fitted value $H_p\hat g= H_p H_p^\dag Z_{\mathrm{past}}$, equivalently, $H_p(\hat g_1-\hat g_2)=0$.
Then, using $H_f=L_{\mathrm{mix}}H_p$, we obtain
$
  H_f(\hat g_1-\hat g_2) = L_{\mathrm{mix}}H_p(\hat g_1-\hat g_2)=0.
$
This concludes the proof  of (i).\\
(ii) As remarked in the proof of (i) above, every minimizer of \eqref{eq:mix-LS} satisfies $H_p\hat g=\Pi_p Z_{\mathrm{past}}$, and therefore
\begin{equation}\label{eq:mix-pred-proj}
  \widehat Y_{\mathrm{fut}}
\stackrel{\eqref{eq:mix-pred}}{=} H_f\hat g
\stackrel{\eqref{eq:mix-continuation-recall}}
  {=} L_{\mathrm{mix}}\,H_p\hat g
= L_{\mathrm{mix}}\,\Pi_p Z_{\mathrm{past}}.
\end{equation}
From \eqref{eq:mix-pred-proj} and $Z_{\mathrm{past}}=Y_{\mathrm{past}}+n_{\mathrm{past}}$,
\[
  \widehat Y_{\mathrm{fut}}
  = L_{\mathrm{mix}}\Pi_p(Y_{\mathrm{past}}+n_{\mathrm{past}})
  = L_{\mathrm{mix}}Y_{\mathrm{past}} + L_{\mathrm{mix}}\Pi_p n_{\mathrm{past}},
\]
where we have used $Y_{\mathrm{past}}\in\im(H_p)$, thus $\Pi_pY_{\mathrm{past}}=Y_{\mathrm{past}}$.
Finally, $Y_{\mathrm{fut}}=L_{\mathrm{mix}}Y_{\mathrm{past}}$ by \eqref{eq:mix-continuation-recall}, hence
$\epsilon=L_{\mathrm{mix}}\Pi_p n_{\mathrm{past}}$. The bound \eqref{eq:mix-eps-noisy-meas-bound} follows from $\|\Pi_p\|_2=1$.
\end{proof}

\begin{remark}[A fully data-dependent bound]\label{rem:dd-bound} It is immediate to give a fully data-dependent bound on $\epsilon$. In fact,
\[
  \epsilon 
\stackrel{\eqref{eq:mix-eps-noisy-meas}}{=}
  L_{\mathrm{mix}}\,\Pi_p\,n_{\mathrm{past}}=
  L_{\mathrm{mix}}\,H_pH_p^\dagger\,n_{\mathrm{past}}
\stackrel{\eqref{eq:mix-continuation-recall}}{=}
  H_f H_p^\dagger n_{\mathrm{past}},
\]
so that
$
\|\epsilon\|_2 \le \|H_f H_p^\dagger\|_2\,\|n_{\mathrm{past}}\|_2,
$
which is computable from data as long as we know a bound on the noise. This bound is consistent with what we would obtain using any solution $L$ of $H_f = L H_p$, which is computable from data as well. In fact, similar to Lemma~\ref{lem:key}(iii), one can show that any such map coincides with $L_{\mathrm{mix}}$ on $\im(H_p)$, i.e., $L H_p =L_{\mathrm{mix}} H_p$. Therefore 
\begin{eqnarray}
 \widehat Y_{\mathrm{fut}} \stackrel{\eqref{eq:mix-pred}}{:=} H_f \hat g=L H_p   \hat g=L  \Pi_p Z_{\mathrm{past}}=L  \Pi_p  Y_{\mathrm{past}} +
 L  \Pi_p  n_{\mathrm{past}} \nonumber \\
 =L_{\mathrm{mix}}  \Pi_p  Y_{\mathrm{past}} +
 L  \Pi_p  n_{\mathrm{past}}
  = Y_{\mathrm{fut}} +
 L  \Pi_p  n_{\mathrm{past}} \nonumber
\end{eqnarray}
from which $\epsilon =  L  \Pi_p  n_{\mathrm{past}}= 
H_f H_p^\dagger n_{\mathrm{past}}$, where the second equality follows from $L$ being any solution of $H_f = L H_p$. 
\end{remark}

The uniqueness and error identities above hinge on the exact structural relation $H_f=L_{\mathrm{mix}}H_p$,
and this occurs whenever the library contains noise-free trajectories. This scenario is realistic in the case of synthetic data, i.e., data generated artificially, and it is particularly relevant in the context of out-of-library prediction (Section \ref{sec:out-of-library}). In this case, in order to improve prediction performance, one may consider populating the library with template behaviors, which can also be generated artificially.

On the other hand, when the library blocks themselves are noisy, this identity is generally lost and
different least-squares minimizers may lead to different predicted values. In this case, additional choices 
are necessary to obtain a well-posed predictor. We discuss this point in the following section.

\subsection{Noisy library and noisy observations}
\label{subsec:mix-noisyH}

Let $(H_p,H_f)$ be the (unknown) noise-free blocks
in \eqref{HY-mixed}
satisfying $H_f=L_{\mathrm{mix}}H_p$ as in~\eqref{eq:mix-continuation-recall}, and assume we observe
\begin{equation}\label{eq:mix-noisyH-model}
  \bar H_p := H_p + \Delta_p,
  \quad
  \bar H_f := H_f + \Delta_f,
\end{equation}
for unknown perturbations $\Delta_p,\Delta_f$. We keep the measurement model~\eqref{eq:mix-noisy-meas}.
Given $(\bar H_p,\bar H_f)$, we solve
\begin{equation}\label{eq:mix-LS-noisyH}
  \hat g \in \arg\min_{g\in\R^{N}} \ \|Z_{\mathrm{past}}-\bar H_pg\|_2^2,
\end{equation}
and predict
\begin{equation}\label{eq:mix-pred-noisyH}
  \widehat Y_{\mathrm{fut}} := \bar H_f \hat g.
\end{equation}
Unlike the exact-library case where the identity $H_f=L_{\mathrm{mix}} H_p$ was extensively used to show the existence of a unique prediction and to bound the residual as in Theorem \ref{thm:mixed-prediction-noisy}, we now generally have $\bar H_f\neq L_{\mathrm{mix}}\bar H_p$ (in fact, a matrix $\bar L$ such that 
$\bar H_f= \bar L_{\mathrm{mix}}\bar H_p$ does not generally  exist) and the decoded prediction $\bar H_f\hat g$ will generally depend on the chosen minimizer if~\eqref{eq:mix-LS-noisyH} is not uniquely solved;
nevertheless, the following error identity holds for \emph{any} minimizer.

\begin{theorem}[Prediction error with noisy library]
\label{thm:mix-noisyH-decomp}
Assume $r\ge s_{\mathrm{mix}}$, so that for the \emph{noise-free} blocks $H_f=L_{\mathrm{mix}}H_p$
and $Y_{\mathrm{fut}}=L_{\mathrm{mix}}Y_{\mathrm{past}}$. Let $\hat g$ be any minimizer of~\eqref{eq:mix-LS-noisyH}
and define $\widehat Y_{\mathrm{fut}}$ by~\eqref{eq:mix-pred-noisyH}. Define the measured LS residual
\begin{equation}\label{eq:mix-residual-noisyH}
  \bar\rho := Z_{\mathrm{past}}-\bar H_p\hat g.
\end{equation}
Then the prediction error $\epsilon:=\widehat Y_{\mathrm{fut}}-Y_{\mathrm{fut}}$ admits the decomposition
\begin{equation}\label{eq:mix-eps-decomp}
  \epsilon
  =
  \underbrace{-\,L_{\mathrm{mix}}\,\bar\rho}_{\text{\rm LS residual}}
  \;+\;
  \underbrace{L_{\mathrm{mix}}\,n_{\mathrm{past}}}_{\text{\rm measurement noise}}
  \;+\;
  \underbrace{\bigl(\Delta_f - L_{\mathrm{mix}}\Delta_p\bigr)\hat g}_{\text{\rm library perturbation}}.
\end{equation}
Consequently,
\begin{equation}\label{eq:mix-eps-bound}
  \|\epsilon\|_2
  \le
  \|L_{\mathrm{mix}}\|_2 (\|\bar\rho\|_2
  + \|n_{\mathrm{past}}\|_2)
  + \|\Delta_f - L_{\mathrm{mix}}\Delta_p\|_2\,\|\hat g\|_2.
\end{equation}
\end{theorem}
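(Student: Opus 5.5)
The identity \eqref{eq:mix-eps-decomp} is purely algebraic, so the plan is to expand $\widehat Y_{\mathrm{fut}}=\bar H_f\hat g$ using the noise model \eqref{eq:mix-noisyH-model}. We then replace every noise-free quantity by its image under $L_{\mathrm{mix}}$, via the continuation identities $H_f=L_{\mathrm{mix}}H_p$ and $Y_{\mathrm{fut}}=L_{\mathrm{mix}}Y_{\mathrm{past}}$ from \eqref{eq:mix-continuation-recall}. No optimality property of $\hat g$ is used; the only link to the least-squares problem is the definition \eqref{eq:mix-residual-noisyH} of $\bar\rho$. This is why the decomposition holds for \emph{any} minimizer, and indeed for any vector $\hat g$.

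\textbf{Step 1: rewrite the prediction.} Write
\[
\bar H_f\hat g = H_f\hat g+\Delta_f\hat g = L_{\mathrm{mix}}H_p\hat g+\Delta_f\hat g .
\]

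\textbf{Step 2: substitute $H_p\hat g$.} Since $H_p=\bar H_p-\Delta_p$ and, by \eqref{eq:mix-residual-noisyH}, $\bar H_p\hat g=Z_{\mathrm{past}}-\bar\rho$, we obtain
\[
H_p\hat g = Z_{\mathrm{past}}-\bar\rho-\Delta_p\hat g .
\]

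\textbf{Step 3: insert the measurement model.} Use \eqref{eq:mix-noisy-meas}, $Z_{\mathrm{past}}=Y_{\mathrm{past}}+n_{\mathrm{past}}$, together with $L_{\mathrm{mix}}Y_{\mathrm{past}}=Y_{\mathrm{fut}}$. This gives
\[
\widehat Y_{\mathrm{fut}} = Y_{\mathrm{fut}}-L_{\mathrm{mix}}\bar\rho+L_{\mathrm{mix}}n_{\mathrm{past}}+(\Delta_f-L_{\mathrm{mix}}\Delta_p)\hat g .
\]
Subtracting $Y_{\mathrm{fut}}$ yields exactly \eqref{eq:mix-eps-decomp}.

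\textbf{Step 4: the bound.} The estimate \eqref{eq:mix-eps-bound} follows from the triangle inequality and submultiplicativity of the induced $2$-norm, applied to each of the three terms.

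There is no real obstacle here. The only point needing care is the use of $Y_{\mathrm{fut}}=L_{\mathrm{mix}}Y_{\mathrm{past}}$. This holds because $Y^\star$ is a trajectory of the aggregate system and $r\ge s_{\mathrm{mix}}$, as stated in the hypothesis, so it does not rely on data richness or on $Y_{\mathrm{past}}\in\im(H_p)$. We also never need a projector here, unlike in Theorem~\ref{thm:mixed-prediction-noisy}.
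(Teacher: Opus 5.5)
Your proof is correct and follows essentially the same route as the paper. Both arguments expand $\bar H_f\hat g=L_{\mathrm{mix}}H_p\hat g+\Delta_f\hat g$, rewrite $Z_{\mathrm{past}}-H_p\hat g=\bar\rho+\Delta_p\hat g$, and use $Z_{\mathrm{past}}=Y_{\mathrm{past}}+n_{\mathrm{past}}$ together with $Y_{\mathrm{fut}}=L_{\mathrm{mix}}Y_{\mathrm{past}}$, differing only in the order of substitution (and your remark that optimality of $\hat g$ is never used applies equally to the paper's proof).
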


\begin{proof}
Start from $\epsilon=\bar H_f\hat g-Y_{\mathrm{fut}}=(H_f+\Delta_f)\hat g-L_{\mathrm{mix}}Y_{\mathrm{past}}$ and use
$Z_{\mathrm{past}}=Y_{\mathrm{past}}+n_{\mathrm{past}}$ to write
\[
\epsilon = H_f\hat g - L_{\mathrm{mix}}Z_{\mathrm{past}} + 
L_{\mathrm{mix}}n_{\mathrm{past}} + \Delta_f\hat g.
\]
Since $H_f=L_{\mathrm{mix}}H_p$,
$H_f\hat g - L_{\mathrm{mix}}Z_{\mathrm{past}}
= L_{\mathrm{mix}}(H_p\hat g - Z_{\mathrm{past}}).$
Moreover,
\[
  Z_{\mathrm{past}} - H_p\hat g
  = (Z_{\mathrm{past}}-\bar H_p\hat g) + (\bar H_p-H_p)\hat g
  = \bar\rho + \Delta_p\hat g.
\]
Substituting yields~\eqref{eq:mix-eps-decomp}, and taking norms gives~\eqref{eq:mix-eps-bound}.
\end{proof}

\begin{remark}
Even if $n_{\mathrm{past}}=0$, the error can be nonzero since typically $Y_{\mathrm{past}}\notin\im(\bar H_p)$,
yielding 
$\bar\rho\stackrel{\eqref{eq:mix-residual-noisyH}}{=} (I- \bar H_p \bar H_p^\dagger)Z_{\mathrm{past}}
=(I- \bar H_p \bar H_p^\dagger)Y_{\mathrm{past}}
\neq 0$, and since the term $(\Delta_f-L_{\mathrm{mix}}\Delta_p)\hat g$ remains.
This is the main difference with the exact-library case of Theorem~
\ref{thm:mixed-prediction-noisy}. 
\end{remark}

\begin{example}
\begin{figure}[t]
    \centering
    \includegraphics[width=0.8\linewidth]{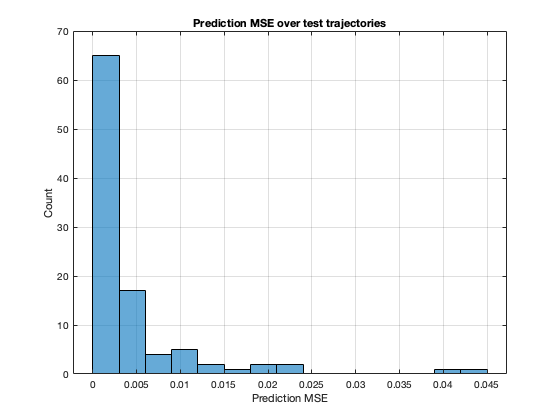}
    \caption{Histogram of the mean squared prediction error over $100$ test trajectories in the noisy-data setting. The horizontal axis reports the MSE on the predicted future window, while the vertical axis reports the number of test trajectories associated with each error interval.}
    \label{fig:noisy_hist}
\end{figure}

We illustrate the noisy-data setting of Section~V with a simple numerical experiment. We generate a library from $Q=10$ random stable discrete-time systems of order $3$ with a scalar output, each contributing $50$ template trajectories of length $T=20$. Additive noise with a prescribed $25$dB signal-to-noise ratio is injected both into the library trajectories and into the observed past window. Namely, denoted by $P_y$ the power of any of trajectories of interest (both the past trajectories to be continued and those in the libraries)  and by $P_n$ the power of the added noise, their ratio satisfies $[P_y/P_n]_{\text{dB}}=25$.
%
%
Prediction is then performed with horizon split $r=15$ by solving \eqref{eq:mix-LS-noisyH}-\eqref{eq:mix-pred-noisyH}. Performance is evaluated over $100$ newly generated trajectories, each drawn from one of the nominal systems, through the mean squared prediction error 
$MSE:=\sum_{k=0}^{T-r-1}\|\widehat Y_{\mathrm{fut},k}-Y_{\mathrm{fut},k}\|^2$
on the future window. Figure~\ref{fig:noisy_hist} reports the histogram of these prediction errors; the horizontal axis shows the MSE value, and the vertical axis reports how many test trajectories achieve MSE values in each interval.
\end{example}

\subsection{Regularization and link between $g$- and $L$-based predictors in the noisy setting} \label{subsec:noiseLg}

In the noisy multi-mode setting we observe a perturbed past window
$Z_{\mathrm{past}} = Y_{\mathrm{past}} + n_{\mathrm{past}}$
and only have access to perturbed library blocks
$(\bar H_p,\bar H_f) = (H_p+\Delta_p,\; H_f+\Delta_f)$,
As shown by Theorem~\ref{thm:mix-noisyH-decomp}, two phenomena arise that are absent in the ideal case: the exact continuation identity $H_f=L_{\mathrm{mix}}H_p$ is generally lost, and least-squares encoding may become ill-posed, in the sense that different minimizers can lead to different decoded futures $\bar H_f \hat g$. This motivates regularization and clarifies the relation between the $g$- and $L$-based viewpoints.

A natural way to regularize the latent-variable formulation is ridge regression:
\[
\hat g_\mu
\in
\arg\min_{g\in\mathbb{R}^N}
\|Z_{\mathrm{past}}-\bar H_p g\|_2^2+\mu\|g\|_2^2,
\quad \mu>0.
\tag{38}
\]
This gives a unique solution and the prediction
$
\hat Y_{\mathrm{fut}}:=\bar H_f \hat g_\mu .
$
More generally, we can replace the ridge penalty by structured regularizers (e.g., group sparsity) to promote combinations of templates. In this sense, regularization is also a modeling choice.
For ridge regularization, the encoder--decoder scheme admits an explicit induced input--output form. Indeed,
$
\hat g_\mu=(\bar H_p^\top \bar H_p+\mu I)^{-1}\bar H_p^\top Z_{\mathrm{past}},
$
hence
\[
\hat Y_{\mathrm{fut}}
=
\bar H_f(\bar H_p^\top \bar H_p+\mu I)^{-1}\bar H_p^\top Z_{\mathrm{past}}
=: \bar L_\mu Z_{\mathrm{past}}.
\tag{39}
\]
Therefore, ridge-regularized decoding is exactly equivalent to a direct linear predictor with map $\bar L_\mu$.

An alternative is to learn an input--output map directly from the noisy library, for instance via
\[
\hat L_\lambda
\in
\arg\min_L
\|\bar H_f-L\bar H_p\|_F^2+\lambda\|L\|_F^2,
\quad \lambda>0,
\tag{40}
\]
which yields the predictor $\hat Y_{\mathrm{fut}}=\hat L_\lambda Z_{\mathrm{past}}$, with
\[
\hat L_\lambda
=
\bar H_f \bar H_p^\top(\bar H_p\bar H_p^\top+\lambda I_{rp})^{-1}.
\]
In general, $\hat L_\lambda$ does not coincide with the induced map $\bar L_\mu$ in~(39): the two regularizations act on different objects (coefficients versus operator) and may react differently to errors in the library.

In the noiseless regime, the existence of a continuation map ensures equivalence between the $g$- and $L$-based predictors on admissible past windows. Under noisy measurements and noisy libraries this invariance is generally lost. Regularization restores well-posedness and often yields an explicit linear predictor, while the $g$-viewpoint remains preferable when one wishes to encode structural constraints on how templates are selected and combined.

\subsection{Other topologies for the aggregate system}
All the results derived so far apply to the aggregate library obtained by
column-wise concatenation of the atomic libraries, together with its auxiliary
aggregate realization \eqref{eq:agg-system-Q}. 
Beyond this construction, one may consider libraries associated with genuine interconnections of the atomic systems in $\mathcal{I}$, such as feedback and series interconnections, as well as parallel interconnections with inputs\footnote{In the autonomous case discussed in Section~\ref{subsec:mix-agg}, the aggregate system can be interpreted as a parallel interconnection of the template systems, with output given by the sum of the individual outputs. There, we used the term \emph{aggregate} system to emphasize the algebraic concatenation of template libraries rather than an explicit interconnection mechanism.}.
Such libraries can considerably enlarge the class of
systems for which prediction from templates is possible, especially in
out-of-library settings.

A key point is that these interconnected libraries need not be obtained by
collecting data from the interconnected systems themselves. In many cases,
this may be impractical, costly, or even impossible. Instead, as we show in
Section \ref{subsec:feedback_composed_libraries}, they can be constructed
directly from the atomic libraries, in a way that preserves the same
factorization property as in Lemma \ref{lemma:fb_factorization}. This makes
it possible to endow predictors built from atomic data with the ability to
reproduce trajectories that are substantially different from those generated
by the individual atomic systems taken in isolation.


\section{Out-of-library prediction}
\label{sec:out-of-library}

A key advantage of the mixed library is that it implicitly spans a \emph{larger} behavior space:
by taking linear combinations of columns across blocks, it can represent trajectories that do not
belong to any single nominal mode. This is precisely the mechanism that enables a form of
\emph{generalization} beyond the training modes. We will start with a result that holds generically for any dynamical system, even a nonlinear one, that does not belong to the library of template modes. Then, we will specialize the analysis to linear systems and highlight properties of the library under which we have exact prediction for out-of-library systems.
For simplicity, we will focus on the noise-free case. 

\subsection{Generic mismatch}
\label{subsec:out-of-library-generic}  

So far, we have assumed that the trajectory to be predicted is generated by a nominal model.
We now consider the \emph{out-of-library} case, where the system generating the data is not necessarily a template mode. 
Throughout this subsection we keep the same notation
$H_p,H_f$ for the aggregated library and we assume $r\ge s_{\mathrm{mix}}$, so that
\begin{equation}\label{eq:mix-continuation-outlib}
  H_f = L_{\mathrm{mix}}\,H_p
\end{equation}
for a continuation map $L_{\mathrm{mix}}\in\R^{(T-r)p\times rp}$ depending only on the
aggregated dynamics (cf. Section \ref{sec:multimode}).

Let $Y^\star$ be an arbitrary
length-$T$ sequence, not assumed to be generated by one of the template  systems.
We consider the usual least-squares encoding
\begin{equation}\label{eq:outlib-LS}
  \hat g \in \arg\min_{g\in\R^{N}} \ \|Y_{\mathrm{past}}-H_pg\|_2^2,
  \quad
  \widehat Y_{\mathrm{fut}}:=H_f\hat g .
\end{equation}

\begin{proposition}[Out-of-library error decomposition]\label{prop:outlib-decomp}
Consider the aggregate library \eqref{eq:mode_library}, \eqref{eq:global_library}, decomposed as in \eqref{eq:new_traj_split_intro}, where $r\ge s_{\mathrm{mix}}$ and $s_{\mathrm{mix}}$ is the observability index of the aggregate system. Consider also   the continuation map $L_{\mathrm{mix}}$ of the aggregate system, which  satisfies \eqref{eq:mix-continuation-outlib}. For any sequence $Y^\star\in \R^{pT}$ and any least squares minimizer in
\eqref{eq:outlib-LS}, the prediction error $\epsilon:=\widehat Y_{\mathrm{fut}}-Y_{\mathrm{fut}}$ satisfies
\begin{equation}\label{eq:outlib-eps-decomp}
  \epsilon
  = -\,L_{\mathrm{mix}}(I-\Pi_p)Y_{\mathrm{past}}
    \;-\;\bigl(Y_{\mathrm{fut}}-L_{\mathrm{mix}}Y_{\mathrm{past}}\bigr),
\end{equation}
and therefore
\begin{equation}\label{eq:outlib-eps-bound}
  \|\epsilon\|_2
  \le \|L_{\mathrm{mix}}\|_2\,\rho_{\mathrm{feas}} + \delta_{\mathrm{cont}},
\end{equation}
where we define
\begin{equation}\label{eq:rho-feas}
  \rho_{\mathrm{feas}}
  := \min_{g}\|Y_{\mathrm{past}}-H_pg\|_2
\quad \text{\emph{(feasibility residual)}}
\end{equation}
and
\begin{equation}\label{eq:delta-cont-def}
  \delta_{\mathrm{cont}}
  := \|Y_{\mathrm{fut}}-L_{\mathrm{mix}}Y_{\mathrm{past}}\|_2 \quad \text{\emph{(continuation defect)}}.
\end{equation}
\end{proposition}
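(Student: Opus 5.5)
The plan is to reuse the mechanism from the proof of Theorem~\ref{thm:mixed-prediction-noisy}: the fitted value of any least-squares minimizer is unique, and the exact continuation identity \eqref{eq:mix-continuation-outlib} holds for the noise-free library. From Appendix~\ref{thm.least.squares}, every minimizer of \eqref{eq:outlib-LS} has the form $\hat g = H_p^\dagger Y_{\mathrm{past}} + (I_N - H_p^\dagger H_p)\gamma$. Hence every minimizer satisfies
\[
H_p\hat g = \Pi_p Y_{\mathrm{past}}, \qquad \Pi_p := H_pH_p^\dagger .
\]
We then decode using \eqref{eq:mix-continuation-outlib}:
\[
\widehat Y_{\mathrm{fut}} = H_f\hat g = L_{\mathrm{mix}} H_p \hat g = L_{\mathrm{mix}}\Pi_p Y_{\mathrm{past}} .
\]
This already shows that the prediction does not depend on which minimizer is chosen. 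Unlike Theorem~\ref{thm:mixed-prediction-noisy}, we make no assumption that $Y_{\mathrm{past}}\in\im(H_p)$ or that $Y_{\mathrm{fut}}=L_{\mathrm{mix}}Y_{\mathrm{past}}$.

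To obtain \eqref{eq:outlib-eps-decomp}, we add and subtract $L_{\mathrm{mix}}Y_{\mathrm{past}}$:
\[
\epsilon = L_{\mathrm{mix}}\Pi_pY_{\mathrm{past}} - L_{\mathrm{mix}}Y_{\mathrm{past}} + L_{\mathrm{mix}}Y_{\mathrm{past}} - Y_{\mathrm{fut}}
= -L_{\mathrm{mix}}(I-\Pi_p)Y_{\mathrm{past}} - \bigl(Y_{\mathrm{fut}} - L_{\mathrm{mix}}Y_{\mathrm{past}}\bigr).
\]
This is exactly the claimed identity.

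For \eqref{eq:outlib-eps-bound}, we apply the triangle inequality and submultiplicativity to get
\[
\|\epsilon\|_2 \le \|L_{\mathrm{mix}}\|_2\,\|(I-\Pi_p)Y_{\mathrm{past}}\|_2 + \delta_{\mathrm{cont}} .
\]
It remains to identify $\|(I-\Pi_p)Y_{\mathrm{past}}\|_2$ with $\rho_{\mathrm{feas}}$ in \eqref{eq:rho-feas}. Since $\Pi_p$ is the orthogonal projector onto $\im(H_p)$, the vector $\Pi_pY_{\mathrm{past}}$ is the closest point of $\im(H_p)$ to $Y_{\mathrm{past}}$. Therefore $\min_g\|Y_{\mathrm{past}}-H_pg\|_2 = \|Y_{\mathrm{past}}-\Pi_pY_{\mathrm{past}}\|_2$, which gives the bound with equality in the first term.

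No step is a genuine obstacle; the argument is purely algebraic. The one point that needs care is to use only the library identity $H_f = L_{\mathrm{mix}}H_p$, which holds because $r\ge s_{\mathrm{mix}}$ and the library is noise-free. We must not invoke the trajectory identity $Y_{\mathrm{fut}}=L_{\mathrm{mix}}Y_{\mathrm{past}}$, since that fails for out-of-library sequences and is precisely what the continuation defect $\delta_{\mathrm{cont}}$ measures. We also do not need data richness, because feasibility is not assumed and its failure is captured by $\rho_{\mathrm{feas}}$.
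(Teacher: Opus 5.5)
Your proof is correct and follows essentially the same route as the paper's: every least-squares minimizer gives the fitted value $\Pi_p Y_{\mathrm{past}}$, the library identity $H_f=L_{\mathrm{mix}}H_p$ turns the prediction into $L_{\mathrm{mix}}\Pi_pY_{\mathrm{past}}$, and adding and subtracting $L_{\mathrm{mix}}Y_{\mathrm{past}}$ followed by the triangle inequality gives the result. You also state explicitly why $\rho_{\mathrm{feas}}=\|(I-\Pi_p)Y_{\mathrm{past}}\|_2$ (orthogonal projection gives the closest point of $\im(H_p)$), which the paper's proof leaves implicit.
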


\begin{proof}
Every minimizer \eqref{eq:outlib-LS} satisfies $H_p\hat g=\Pi_pY_{\mathrm{past}}$, hence
\begin{equation}\label{eq:outlib-pred-proj}
  \widehat Y_{\mathrm{fut}}
  = H_f\hat g
  = L_{\mathrm{mix}}\,H_p\hat g
  = L_{\mathrm{mix}}\,\Pi_p\,Y_{\mathrm{past}} .
\end{equation}
Using \eqref{eq:outlib-pred-proj}, write
\[
\epsilon
= L_{\mathrm{mix}}\Pi_pY_{\mathrm{past}}-Y_{\mathrm{fut}}
= L_{\mathrm{mix}}(Y_{\mathrm{past}}-(I-\Pi_p)Y_{\mathrm{past}})-Y_{\mathrm{fut}},
\]
hence
\[
\epsilon
= -\,L_{\mathrm{mix}}(I-\Pi_p)Y_{\mathrm{past}}-\bigl(Y_{\mathrm{fut}}-L_{\mathrm{mix}}Y_{\mathrm{past}}\bigr),
\]
which is \eqref{eq:outlib-eps-decomp}. Taking norms and using \eqref{eq:rho-feas}--\eqref{eq:delta-cont-def}
gives \eqref{eq:outlib-eps-bound}.
\end{proof}

If $Y_{\mathrm{past}}\in \im(H_p)$, then $\widehat Y_{\mathrm{fut}}= L_{\mathrm{mix}}\,\Pi_p\,Y_{\mathrm{past}}$ would become $L_{\mathrm{mix}}Y_{\mathrm{past}}$. Thus, the term $-\,L_{\mathrm{mix}}(I-\Pi_p)Y_{\mathrm{past}}$ in $\epsilon$ measures the deviation of the actual prediction from the one obtainable if $Y_{\mathrm{past}}$ were in $\im(H_p)$. 
On the other hand, $\delta_{\mathrm{cont}}=0$ precisely when the pair $(Y_{\mathrm{past}},Y_{\mathrm{fut}})$ is
consistent with the continuation map $L_{\mathrm{mix}}$, regardless of how $Y^\star$ is generated. Hence, the two quantities $\rho_{\mathrm{feas}}$ and $\delta_{\mathrm{cont}}$ isolate, respectively,
(i) how far the past window is from the library span, and (ii) how far the observed continuation deviates
from the library continuation.

Proposition~\ref{prop:outlib-decomp} is \emph{purely algebraic}: it uses only the existence of a library
continuation map \eqref{eq:mix-continuation-outlib} and the geometry of orthogonal projection onto
$\im(H_p)$; in fact, $Y^\star$ may even come from a nonlinear or time-varying mechanism.
The specialization to the linear case is discussed hereafter.
 

\subsection{Specialization to the linear case}

We now assume that $Y^\star$ is produced by a \emph{new} LTI system $(A_{\mathrm{new}},C_{\mathrm{new}})$, and discuss how the bound on the prediction error can be tailored.  
Let $s_{\mathrm{new}}$ be the observability index of $(A_{\mathrm{new}},C_{\mathrm{new}})$ and assume
\begin{equation}\label{eq:ass:r-ge-snew}
r\ge s_{\mathrm{new}}.
\end{equation}
Then there exists a continuation map $L_{\mathrm{new}}\in\R^{(T-r)p\times rp}$ such that, for every
length-$T$ trajectory of the new system,
\begin{equation}\label{eq:new-continuation}
Y_{\mathrm{fut}} = L_{\mathrm{new}}\,Y_{\mathrm{past}}.
\end{equation}

Proposition \ref{prop:outlib-decomp} specializes as follows:
\begin{corollary}[Out-of-library error decomposition for linear systems]\label{cor:out-of-lib-error}
Let $Y^\star\in \R^{pT}$ be a trajectory generated by the LTI system $(A_{\mathrm{new}},C_{\mathrm{new}})$ and decomposed as in \eqref{eq:new_traj_split_intro}. Assume its observability index satisfies \eqref{eq:ass:r-ge-snew} and let $L_{\mathrm{new}}$ be the continuation map satisfying \eqref{eq:new-continuation}. 
Under the conditions of Proposition \ref{prop:outlib-decomp}, the predicted future trajectory $  \widehat Y_{\mathrm{fut}}$ in \eqref{eq:outlib-LS} yields a prediction error $\epsilon:=\widehat Y_{\mathrm{fut}}-Y_{\mathrm{fut}}$ that 
satisfies\footnote{If
$
\dim\im(\mathcal O_{\mathrm{new}}(r))
\le
\dim\im(H_p),
$
then $\|(I-\Pi_p)\Pi_{\mathrm{new}}\|_2$ is the sine of the largest
principal angle between $\im(H_p)$ and
$\im(\mathcal O_{\mathrm{new}}(r))$ \cite{moor}.
}
\begin{equation}\label{eq:outlib-LTI-bound}
\!\|\epsilon\|_2 \!
\le \!
\Bigl(\|L_{\mathrm{mix}}\|_2\,
\|(I-\Pi_p)\Pi_{\mathrm{new}}\|_2
+\|L_{\mathrm{new}}-L_{\mathrm{mix}}\|_2\Bigr)\,\|Y_{\mathrm{past}}\|_2.
\end{equation}
\end{corollary}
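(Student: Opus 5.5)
The plan is to start from the error identity \eqref{eq:outlib-eps-decomp} of Proposition~\ref{prop:outlib-decomp} and rewrite both terms using the structure of the new system. Here $\Pi_{\mathrm{new}}$ denotes the orthogonal projector onto $\im(\mathcal O_{\mathrm{new}}(r))$, with $\mathcal O_{\mathrm{new}}(r):=\mathcal O_r(A_{\mathrm{new}},C_{\mathrm{new}})$. Since $Y^\star$ is a trajectory of the new system, $Y_{\mathrm{past}}=\mathcal O_{\mathrm{new}}(r)x_0$ for some initial state $x_0$. Hence $Y_{\mathrm{past}}\in\im(\mathcal O_{\mathrm{new}}(r))$, and so $Y_{\mathrm{past}}=\Pi_{\mathrm{new}}Y_{\mathrm{past}}$. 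This single identity is the key observation that lets the feasibility residual be controlled in terms of the geometry between the two subspaces rather than the specific trajectory.

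\textbf{First term.} Substitute $Y_{\mathrm{past}}=\Pi_{\mathrm{new}}Y_{\mathrm{past}}$ into $L_{\mathrm{mix}}(I-\Pi_p)Y_{\mathrm{past}}$ to obtain $L_{\mathrm{mix}}(I-\Pi_p)\Pi_{\mathrm{new}}Y_{\mathrm{past}}$. Submultiplicativity of the spectral norm then gives
\[
\|L_{\mathrm{mix}}(I-\Pi_p)Y_{\mathrm{past}}\|_2\le \|L_{\mathrm{mix}}\|_2\,\|(I-\Pi_p)\Pi_{\mathrm{new}}\|_2\,\|Y_{\mathrm{past}}\|_2 .
\]

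\textbf{Second term.} Use \eqref{eq:new-continuation}, which is available because of \eqref{eq:ass:r-ge-snew}, to write $Y_{\mathrm{fut}}-L_{\mathrm{mix}}Y_{\mathrm{past}}=(L_{\mathrm{new}}-L_{\mathrm{mix}})Y_{\mathrm{past}}$. Its norm is bounded by $\|L_{\mathrm{new}}-L_{\mathrm{mix}}\|_2\|Y_{\mathrm{past}}\|_2$.

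Adding the two bounds via the triangle inequality and factoring out $\|Y_{\mathrm{past}}\|_2$ yields \eqref{eq:outlib-LTI-bound}. No step is a genuine obstacle; the only point needing care is justifying $\Pi_{\mathrm{new}}Y_{\mathrm{past}}=Y_{\mathrm{past}}$, which follows directly from the factorization of trajectories through the observability matrix. The footnote's principal-angle interpretation is a standard fact from \cite{moor} and need not be proved here.
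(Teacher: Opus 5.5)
Your proof is correct and essentially matches the paper's. Both arguments use $Y_{\mathrm{past}}=\Pi_{\mathrm{new}}Y_{\mathrm{past}}$ to bound the feasibility term by $\|L_{\mathrm{mix}}\|_2\|(I-\Pi_p)\Pi_{\mathrm{new}}\|_2\|Y_{\mathrm{past}}\|_2$, and both use $L_{\mathrm{new}}$ to bound the continuation defect. The only difference is cosmetic: you start from the error identity of Proposition~\ref{prop:outlib-decomp} rather than its normed bound in terms of $\rho_{\mathrm{feas}}$ and $\delta_{\mathrm{cont}}$.
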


\begin{proof}
Recall the bound \eqref{eq:outlib-eps-bound} and definitions \eqref{eq:rho-feas}, \eqref{eq:delta-cont-def}. In view of \eqref{eq:delta-cont-def} and \eqref{eq:new-continuation}, it is immediately seen that $\delta_{\mathrm{cont}}\le \|L_{\mathrm{new}}-L_{\mathrm{mix}}\|_2\,\|Y_{\mathrm{past}}\|_2$. We now look for a bound on $\rho_{\mathrm{feas}}$. Solving for $g$, we have $\rho_{\mathrm{feas}}=\|(I-\Pi_p)Y_{\mathrm{past}}\|_2$. As $Y_{\mathrm{past}}\in 
\im\left(\mathcal{O}_{\mathrm{new}}(r)\right)$, letting $\Pi_{\mathrm{new}}$ be an orthogonal projector onto 
$\im\left(\mathcal{O}_{\mathrm{new}}(r)\right)$, we have $\rho_{\mathrm{feas}}=\|(I-\Pi_p)\Pi_{\mathrm{new}}Y_{\mathrm{past}}\|_2
\le 
\|(I-\Pi_p)\Pi_{\mathrm{new}}\|_2 
\|Y_{\mathrm{past}}\|_2
$. 
\end{proof}

\begin{remark}[Bounding $\rho_{\mathrm{feas}}$ via subspace gap]
Recall that $\rho_{\mathrm{feas}}=\|(I-\Pi_p)Y_{\mathrm{past}}\|_2$. 
Suppose that the new system that is generating $Y^\star$ is known via a \emph{template matrix} $H_{\mathrm{new}}$  analogous to the ones in \eqref{eq:mode_library} and that $Y^\star\in \im(H_{\mathrm{new}})$.
In particular $Y_{\mathrm{past}}\in \im(H_{p,\mathrm{new}})$. Then 
$Y_{\mathrm{past}}= H_{p,\mathrm{new}} H_{p,\mathrm{new}}^{\dagger} Y_{\mathrm{past}}=: \Pi_{p,\mathrm{new}}Y_{\mathrm{past}}$, which yields
\[
\rho_{\mathrm{feas}}= \|(\Pi_{p,\mathrm{new}}-\Pi_p)Y_{\mathrm{past}}\|_2 \le 
\|\Pi_{p,\mathrm{new}}-\Pi_p\|_2 \, \|Y_{\mathrm{past}}\|_2
\] 
If the two subspaces $\im(H_p)$ and $\im (H_{p,\mathrm{new}})$ have the {\em same dimension}, then 
the norm $\|\Pi_{p,\mathrm{new}}-\Pi_p\|_2$ defines the {\em gap metric} between the two subspaces and coincides with the sine of the largest principal angle $\theta_{\max}$ between them. Its value can be efficiently obtained by computing 
an orthonormal basis $W_1$ for one of them and an orthonormal basis $Z_2$ of the orthogonal complement of the other (e.g., via SVD) and then computing $\|\Pi_{p,\mathrm{new}}-\Pi_p\|_2=\|W_1^\top Z_2\|_2$ \cite{padoan2022behavioral,golub2013matrix,partington2004linear}. 
\end{remark}

\subsection{Exact prediction for out-of-library systems}

The foregoing analysis suggests a simple condition for \emph{exact} out-of-library
prediction: both error terms in \eqref{eq:outlib-eps-decomp}, i.e., in 
$\epsilon
  = -\,L_{\mathrm{mix}}(I-\Pi_p)Y_{\mathrm{past}}
    \;-\;\bigl(Y_{\mathrm{fut}}-L_{\mathrm{mix}}Y_{\mathrm{past}}\bigr)$,
 must vanish. The first term
vanishes if the past trajectory is representable by the aggregate library; the second vanishes if the
new system shares the same continuation map $L_{\mathrm{mix}}$ over the horizon $(r,T)$.

\begin{theorem}[Exact prediction beyond the training modes]
\label{thm:mix-ool-exact}
Assume $r\ge s_{\mathrm{mix}}$ 
so that $H_f=L_{\mathrm{mix}}H_p$.
Let $Y^\star$ be a trajectory of
a (possibly unknown) system. Suppose that:
\begin{enumerate}
\item[(i)] (\emph{Feasibility}) $Y_{\mathrm{past}}\in\im(H_p)$;
\item[(ii)] (\emph{Shared continuation}) $Y_{\mathrm{fut}}=L_{\mathrm{mix}}Y_{\mathrm{past}}$.
\end{enumerate}
Then, for any $g$ such that $Y_{\mathrm{past}}=H_pg$, the prediction
$\widehat Y_{\mathrm{fut}}:=H_fg$ is exact:
\[
  \widehat Y_{\mathrm{fut}} = Y_{\mathrm{fut}}.
\]
In particular, the predicted future is independent of the chosen feasible $g$.
\end{theorem}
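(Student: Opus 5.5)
The argument is a one-line chain of equalities that reuses the algebraic mechanism of Lemma~\ref{lem:key}(ii), now applied with hypotheses (i)--(ii) in place of the assumption that $Y^\star$ comes from a template system. No spectral or system-theoretic information about the generator of $Y^\star$ is needed; everything follows from the structural identity $H_f=L_{\mathrm{mix}}H_p$, which holds because $r\ge s_{\mathrm{mix}}$ (Lemma~\ref{lem:key}(i) applied to the aggregate pair, via Lemma~\ref{prop:agg-factorization}).

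First, hypothesis (i) guarantees that the set of feasible coefficient vectors $\{g:\ Y_{\mathrm{past}}=H_pg\}$ is nonempty, so the predictor $\widehat Y_{\mathrm{fut}}=H_fg$ is well defined for at least one $g$. Next, fix any feasible $g$. Substituting the continuation identity gives
\[
\widehat Y_{\mathrm{fut}}=H_fg=L_{\mathrm{mix}}H_pg=L_{\mathrm{mix}}Y_{\mathrm{past}}.
\]
Hypothesis (ii) then identifies the right-hand side with $Y_{\mathrm{fut}}$, which proves exactness.

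Independence from the chosen $g$ follows in either of two ways. The chain above already expresses $\widehat Y_{\mathrm{fut}}$ as $L_{\mathrm{mix}}Y_{\mathrm{past}}$, which does not involve $g$. Alternatively, for two feasible $g_1,g_2$ we have $H_p(g_1-g_2)=0$, hence $H_f(g_1-g_2)=L_{\mathrm{mix}}H_p(g_1-g_2)=0$, exactly as in the proof of Theorem~\ref{thm:mixed-prediction-noisy}(i).

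The proof has no genuine obstacle. The only point to state carefully is that $L_{\mathrm{mix}}$ need not be unique as a matrix, so hypothesis (ii) must refer to the same representative used in $H_f=L_{\mathrm{mix}}H_p$. This causes no difficulty: any two representatives agree on $\im(H_p)$, which contains $Y_{\mathrm{past}}$ by (i), so the value $L_{\mathrm{mix}}Y_{\mathrm{past}}$ is the same for all of them. Equivalently, the result follows from the error identity \eqref{eq:outlib-eps-decomp} of Proposition~\ref{prop:outlib-decomp}: (i) gives $(I-\Pi_p)Y_{\mathrm{past}}=0$ and (ii) gives a zero continuation defect, so $\epsilon=0$.
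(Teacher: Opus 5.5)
Your proposal is correct and follows essentially the same route as the paper: substitute $H_f=L_{\mathrm{mix}}H_p$ to get $H_fg=L_{\mathrm{mix}}Y_{\mathrm{past}}$, then invoke (ii). Your remarks on the independence of the representative of $L_{\mathrm{mix}}$ on $\im(H_p)$ and on the link with Proposition~\ref{prop:outlib-decomp} are accurate additions that the paper does not need to spell out.
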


\begin{proof}
If $Y_{\mathrm{past}}=H_pg$, then $H_fg=L_{\mathrm{mix}}H_pg=L_{\mathrm{mix}}Y_{\mathrm{past}}$.
By assumption (ii), $L_{\mathrm{mix}}Y_{\mathrm{past}}=Y_{\mathrm{fut}}$, hence
$\widehat Y_{\mathrm{fut}}=Y_{\mathrm{fut}}$.
\end{proof}

\begin{figure*}[t]
\centering
\footnotesize
\begin{tikzpicture}[>=latex, scale=0.8, transform shape]


\begin{scope}[shift={(0,0)}]
\node at (1.55,2.55) {observed prefix};

\draw[->] (0,0) -- (3.1,0) node[right] {$t$};
\draw[->] (0,0) -- (0,2.1);

\draw[very thick]
  plot[smooth] coordinates {(0.2,0.7) (0.55,1.1) (0.95,0.9) (1.35,1.35) (1.7,1.15)};
\draw[dashed] (1.7,0) -- (1.7,1.9);

\draw[gray, dashed, very thick]
  plot[smooth] coordinates {(1.7,1.15) (2.08,1.43) (2.45,1.22) (2.82,1.63)};

\node at (0.95,-0.28) {$Y_{\mathrm{past}}^\star$};
\node[gray] at (2.38,-0.28) {$Y_{\mathrm{fut}}^\star$};
\end{scope}

\draw[->,thick] (3.35,1.05) -- (4.15,1.05);

\begin{scope}[shift={(4.35,-0.45)}]
\node at (1.3,3.65) {mixed template library};

\draw[thick] (0,0) rectangle (2.6,3.0);
\draw[thick] (0.82,0) -- (0.82,3.0);
\draw[thick] (1.72,0) -- (1.72,3.0);

\fill[blue!18]  (0,0) rectangle (0.82,3.0);
\fill[red!18]   (0.82,0) rectangle (1.72,3.0);
\fill[green!18] (1.72,0) rectangle (2.6,3.0);

\foreach \x in {0.14,0.30,0.46,0.62,0.74}
  \draw[blue!60] (\x,0.15) -- (\x,2.85);
\foreach \x in {0.98,1.16,1.34,1.52,1.64}
  \draw[red!60] (\x,0.15) -- (\x,2.85);
\foreach \x in {1.88,2.06,2.24,2.42,2.52}
  \draw[green!60!black] (\x,0.15) -- (\x,2.85);

\node at (0.41,3.28) {$H_Y^{(1)}$};
\node at (1.27,3.28) {$H_Y^{(2)}$};
\node at (2.16,3.28) {$H_Y^{(3)}$};
\node at (1.3,-0.28) {$H_Y$};

\draw[blue!80!black, ultra thick] (0.46,0.15) -- (0.46,2.85);
\draw[red!80!black, ultra thick] (1.34,0.15) -- (1.34,2.85);
\draw[green!60!black, ultra thick] (2.24,0.15) -- (2.24,2.85);
\end{scope}

\draw[->,thick] (7.25,1.05) -- (8.05,1.05);

\begin{scope}[shift={(8.25,0)}]
\node at (1.6,2.55) {weighted selected trajectories};

\draw[->] (0,0) -- (3.2,0) node[right] {$t$};
\draw[->] (0,0) -- (0,2.1);

\draw[blue!70!black, very thick]
  plot[smooth] coordinates {(0.2,0.55) (0.55,0.95) (0.95,0.78) (1.35,1.20) (1.75,1.08) (2.15,1.30) (2.65,1.18)};
\draw[red!75!black, very thick]
  plot[smooth] coordinates {(0.2,1.05) (0.55,1.20) (0.95,0.98) (1.35,1.50) (1.75,1.35) (2.15,1.55) (2.65,1.42)};
\draw[green!55!black, very thick]
  plot[smooth] coordinates {(0.2,0.35) (0.55,0.62) (0.95,0.55) (1.35,0.90) (1.75,0.82) (2.15,1.03) (2.65,0.95)};

\draw[dashed] (1.7,0) -- (1.7,1.9);

\node[blue!70!black] at (2.95,1.25) {$g^{(1)}$};
\node[red!75!black] at (2.95,1.52) {$g^{(2)}$};
\node[green!55!black] at (2.95,0.92) {$g^{(3)}$};

\node at (0.95,-0.28) {$\mathrm{past}$};
\node at (2.35,-0.28) {$\mathrm{future}$};
\end{scope}

\draw[->,thick] (11.75,1.05) -- (12.55,1.05);

\begin{scope}[shift={(12.75,0.05)}]
\draw[rounded corners, thick, fill=gray!10] (0,0) rectangle (3.5,2.0);
\node at (1.75,1.55) {linear combination};
\node at (1.75,1.08) {$H_Y g$};
\node at (1.75,0.65) {$=H_Y^{(1)}g^{(1)}+H_Y^{(2)}g^{(2)}$};
\node at (1.75,0.35) {$\quad +H_Y^{(3)}g^{(3)}$};
\end{scope}

\draw[->,thick] (16.55,1.05) -- (17.35,1.05);

\begin{scope}[shift={(17.60,0)}]
\node at (1.55,2.55) {completed trajectory};

\draw[->] (0,0) -- (3.1,0) node[right] {$t$};
\draw[->] (0,0) -- (0,2.1);

\draw[very thick]
  plot[smooth] coordinates {(0.2,0.7) (0.55,1.1) (0.95,0.9) (1.35,1.35) (1.7,1.15)};
\draw[dashed] (1.7,0) -- (1.7,1.9);

\draw[blue!70!black, very thick]
  plot[smooth] coordinates {(1.7,1.15) (2.08,1.43) (2.45,1.22) (2.82,1.63)};

\draw[blue!35, dashed]
  plot[smooth] coordinates {(1.7,1.15) (2.08,1.30) (2.45,1.18) (2.82,1.40)};
\draw[red!35, dashed]
  plot[smooth] coordinates {(1.7,1.15) (2.08,1.52) (2.45,1.30) (2.82,1.72)};
\draw[green!45!black, dashed]
  plot[smooth] coordinates {(1.7,1.15) (2.08,1.26) (2.45,1.08) (2.82,1.33)};

\node at (0.95,-0.28) {$Y_{\mathrm{past}}^\star$};
\node[blue!70!black] at (2.35,-0.28) {$\widehat Y_{\mathrm{fut}}$};
\end{scope}

\end{tikzpicture}
\caption{Machine that performs out-of-library prediction from a mixed template library. The observed past $Y_{\mathrm{past}}^\star$ is fitted by combining selected trajectories from different template blocks. The same coefficients are then used to complete the trajectory, yielding the predicted future $\widehat Y_{\mathrm{fut}}$.}
\label{fig:out_of_library_templates}
\end{figure*}

\subsection{Exact prediction with prototype systems}

A concrete mechanism to realize Theorem~\ref{thm:mix-ool-exact} is to populate the aggregate
library  with \emph{prototype} systems whose solutions span a family of basic
output primitives. The mixed library then generates, by superposition,
new behaviors that may not coincide with any nominal system. For simplicity we will discuss the case of systems with a scalar output. 

For a square matrix $A$, $\spec(A)$ denotes the set of distinct
eigenvalues of $A$. We also introduce the notion of \emph{visible spectrum}
used below. For a linear pair $(A,C)$, define the visible spectrum as
\[
\spec_C(A)
:=
\{\lambda\in\spec(A): C\ker(\lambda I-A)\neq\{0\}\}.
\]
Hence $\spec_C(A)$ contains the eigenvalues whose eigenspaces are not
annihilated by the output map. Eigenvalues with multiplicity larger than
one are included at most once in $\spec_C(A)$; therefore $\spec_C(A)$ only
contains distinct eigenvalues.

\begin{proposition}[Exact prediction from a library of prototype systems]
\label{prop:mix-sentinels}
Assume $r\ge s_{\mathrm{mix}}$ and mixed data richness, i.e.,
$\rank(X_i)=n_i$ for every $i\in\mathcal I$. Suppose $p=1$.
Let the prototype system matrices $A_i$
$i=1,\ldots,Q$, and the new system matrix $A_{\mathrm{new}}$ be diagonalizable. 
Assume that
\begin{equation}\label{spectral.condition}
\spec_{C_{\mathrm{new}}}(A_{\mathrm{new}})
\subseteq
\bigcup_{i=1}^Q \spec_{C_i}(A_i).
\end{equation}
where $\bigcup_{i=1}^Q \spec_{C_i}(A_i)=
\spec_{C_{\mathrm{mix}}}(A_{\mathrm{mix}})$. 
Then every $T$-long trajectory $Y^\star$ of
$(A_{\mathrm{new}},C_{\mathrm{new}})$ satisfies the hypotheses of
Theorem~\ref{thm:mix-ool-exact}. Hence, the predictor achieves exact
prediction.
\end{proposition}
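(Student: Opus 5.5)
The plan is to show that every trajectory of the new system is itself a trajectory of the aggregate pair $(A_{\mathrm{mix}},C_{\mathrm{mix}})$. Both hypotheses of Theorem~\ref{thm:mix-ool-exact} then follow from results already proved for the aggregate system.

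\textbf{Step 1: modal expansion.} Since $A_{\mathrm{new}}$ is diagonalizable and $p=1$, every output of the new system can be written as
\[
y_t=\sum_{\lambda\in\spec(A_{\mathrm{new}})} c_\lambda \lambda^t,
\]
where $c_\lambda=C_{\mathrm{new}}P_\lambda x_0$ and $P_\lambda$ is the spectral projector onto the eigenspace of $\lambda$. If $\lambda\notin\spec_{C_{\mathrm{new}}}(A_{\mathrm{new}})$, then $C_{\mathrm{new}}$ annihilates $\ker(\lambda I-A_{\mathrm{new}})$, so $c_\lambda=0$. Hence $y_t$ is a (possibly complex) combination of the sequences $\lambda^t$ with $\lambda$ in the visible spectrum of the new system.

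\textbf{Step 2: realize each mode in the aggregate system.} By \eqref{spectral.condition}, each such $\lambda$ lies in $\spec_{C_i}(A_i)$ for some $i$. Pick an eigenvector $v$ of $A_i$ with $A_iv=\lambda v$ and $C_iv\neq0$. This is possible because $C_i\ker(\lambda I-A_i)\neq\{0\}$ and $p=1$, so $C_iv$ is a nonzero scalar. Rescale $v$ so that $C_iv=1$. Then $C_iA_i^tv=\lambda^t$. Embed $v$ in the $i$-th block of $\R^n$ (complexified) and sum over the visible eigenvalues, weighted by $c_\lambda$. This gives $\xi\in\mathbb C^n$ with $Y^\star=\mathcal O_T(A_{\mathrm{mix}},C_{\mathrm{mix}})\xi$. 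Since $Y^\star$ is real, we may replace $\xi$ by $\operatorname{Re}\xi$: the observability matrix is real, so $\mathcal O_T\operatorname{Re}\xi=\operatorname{Re}Y^\star=Y^\star$. Therefore $Y^\star$ is a real trajectory of the aggregate system.

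The claimed identity $\bigcup_i\spec_{C_i}(A_i)=\spec_{C_{\mathrm{mix}}}(A_{\mathrm{mix}})$ is not actually needed for this direction. If I verify it, it is only because the statement asserts it. The inclusion $\subseteq$ follows from the same embedding. For $\supseteq$, an eigenvector of the block-diagonal $A_{\mathrm{mix}}$ decomposes into blockwise components that are eigenvectors or zero, and $C_{\mathrm{mix}}$ applied to it is nonzero only if some block component has $C_iv_i\neq0$.

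\textbf{Step 3: conclude.} Data richness gives $\rank X_{\mathrm{mix}}=n$. By Lemma~\ref{prop:agg-factorization}, $\im H_Y=\im\mathcal O_T(A_{\mathrm{mix}},C_{\mathrm{mix}})$, so $Y^\star=H_Yg$ for some $g$. In particular $Y_{\mathrm{past}}=H_pg$, which is hypothesis (i). Lemma~\ref{lem:key}(i) applied to the aggregate pair with $r\ge s_{\mathrm{mix}}$ gives $Y_{\mathrm{fut}}=L_{\mathrm{mix}}Y_{\mathrm{past}}$ for every aggregate trajectory, which is hypothesis (ii). Theorem~\ref{thm:mix-ool-exact} then yields exact prediction.

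\textbf{Main obstacle.} The main delicacy is Step 2: handling complex eigenvalues while keeping real coefficients, and checking that a visible eigenvalue admits an eigenvector with nonzero output. For $p=1$ the latter is immediate. For $p>1$ it would fail, since the output directions of the modes would also need to match, which is why the scalar-output assumption is used. The passage from complex to real is handled by the real-part argument above, as the paper's convention on $\im$ over $\mathbb C$ versus $\R$ anticipates.
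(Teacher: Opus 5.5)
Your proposal is correct and follows essentially the same route as the paper. You expand the new trajectory into Vandermonde vectors $v_T(\mu)$ of the $C_{\mathrm{new}}$-visible eigenvalues and realize each one in $\im(\mathcal{O}_T(A_{\mathrm{mix}},C_{\mathrm{mix}}))$ through a visible eigenvector; data richness then gives $Y^\star\in\im(H_Y)$ for hypothesis (i), and the aggregate continuation identity gives $Y_{\mathrm{fut}}=L_{\mathrm{mix}}Y_{\mathrm{past}}$ for hypothesis (ii). Choosing eigenvectors blockwise from the individual $A_i$ (so the stated identity for $\spec_{C_{\mathrm{mix}}}(A_{\mathrm{mix}})$ is not needed) and taking real parts explicitly are only cosmetic refinements of the paper's argument.
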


Before proving this result, it is worth emphasizing an important difference between
Proposition~\ref{prop:mix-sentinels} and Theorem~\ref{thm:mix-ool-exact}.
Unlike Theorem~\ref{thm:mix-ool-exact}, the present proposition includes a richness
assumption on the data. The reason is that Proposition~\ref{prop:mix-sentinels} aims
at a stronger conclusion: not exact prediction for one specific feasible trajectory,
but exact prediction for \emph{every} length-$T$ trajectory generated by the new system (in fact, by {\em any} new single-output diagonalizable system that satisfies \eqref{spectral.condition}). 

Indeed, under the spectral inclusion condition \eqref{spectral.condition}, every
trajectory of $(A_{\mathrm{new}},C_{\mathrm{new}})$ can be expressed as a linear
combination of primitives  associated with visible eigenvalues of the
prototype systems. Data richness then guarantees that such primitives are actually
encoded in the data library, so that every trajectory of the new system belongs to
$\im(H_Y)$ and therefore satisfies the hypotheses of
Theorem~\ref{thm:mix-ool-exact}.
This also clarifies that data richness is not intrinsically needed for exact prediction
of a \emph{specific} new trajectory. For a fixed system $(A_{\mathrm{new}},C_{\mathrm{new}})$,
or even for a fixed trajectory $Y^\star$, it is enough that the library encode the
particular natural modes that are active in that trajectory. Richness is only needed
to guarantee this property uniformly over all trajectories generated by the prototype
or new systems under consideration.

\begin{proof}
We first prove that visible the spectral inclusion implies
\begin{equation} \label{eq:image-inclusion}
\im\!\bigl(\mathcal{O}_T(A_{\mathrm{new}},C_{\mathrm{new}})\bigr)
\subseteq
\im\!\bigl(\mathcal{O}_T(A_{\mathrm{mix}},C_{\mathrm{mix}})\bigr).
\end{equation}
Let
\[
Y\in\im\!\bigl(\mathcal{O}_T(A_{\mathrm{new}},C_{\mathrm{new}})\bigr).
\]
If $Y=0$, then trivially
$
Y\in\im\!\bigl(\mathcal{O}_T(A_{\mathrm{mix}},C_{\mathrm{mix}})\bigr).
$
Assume therefore that $Y\neq0$. By definition of the image, there exists
an initial state $x_0 \in \mathbb{R}^\nu$ such that
$
Y=\mathcal{O}_T(A_{\mathrm{new}},C_{\mathrm{new}})x_0.
$
Let $\mu_1,\ldots,\mu_m$ denote the distinct eigenvalues of
$A_{\mathrm{new}}$. Since $A_{\mathrm{new}}$ is diagonalizable, we can write
\[
x_0=\sum_{\ell=1}^m x_\ell
\]
where $x_\ell\in\ker(\mu_\ell I-A_{\mathrm{new}})$. Moreover, each $x_\ell$ satisfies 
$A_{\mathrm{new}}^k x_\ell = \mu_\ell^k x_\ell$ for all $k \geq 0$.
Hence,
\[
Y
=
\sum_{\ell=1}^m
\mathcal{O}_T(A_{\mathrm{new}},C_{\mathrm{new}})x_\ell
=
\sum_{\ell=1}^m
\eta_\ell v_T(\mu_\ell),
\]
where $\eta_\ell:=C_{\mathrm{new}}x_\ell$ and
\[
v_T(\mu):=
\begin{bmatrix}
1&\mu&\cdots&\mu^{T-1}
\end{bmatrix}^{\top}
\]
is the $T$-long \emph{Vandermonde vector} associated with $\mu$.

Since $Y\neq0$, at least one coefficient $\eta_\ell$ is nonzero.
For every $\ell$ such that $\eta_\ell\neq0$, we have  
$x_\ell\in\ker(\mu_\ell I-A_{\mathrm{new}})$ and $C_{\mathrm{new}}x_\ell=\eta_\ell\neq0$.
Hence
$
\mu_\ell\in\spec_{C_{\mathrm{new}}}(A_{\mathrm{new}}).
$
By the spectral inclusion assumption,
\[
\mu_\ell\in\spec_{C_{\mathrm{mix}}}(A_{\mathrm{mix}}).
\]
Hence, there exists a vector
$
w_\ell\in\ker(\mu_\ell I-A_{\mathrm{mix}})
$
such that
$
C_{\mathrm{mix}}w_\ell\neq0.
$
Consequently,
\[
\mathcal{O}_T(A_{\mathrm{mix}},C_{\mathrm{mix}})w_\ell
=
(C_{\mathrm{mix}}w_\ell)v_T(\mu_\ell),
\]
and therefore
$
v_T(\mu_\ell)
\in
\im\!\bigl(\mathcal{O}_T(A_{\mathrm{mix}},C_{\mathrm{mix}})\bigr).
$
This holds for every $\ell$ such that $\eta_\ell\neq0$. Hence
\[
Y
=
\sum_{\ell:\,\eta_\ell\neq0}
\eta_\ell v_T(\mu_\ell)
\in
\im\!\bigl(\mathcal{O}_T(A_{\mathrm{mix}},C_{\mathrm{mix}})\bigr),
\]
because
$
\im\!\bigl(\mathcal{O}_T(A_{\mathrm{mix}},C_{\mathrm{mix}})\bigr)
$
is a linear subspace. Since $Y$ was arbitrary, we have proved \eqref{eq:image-inclusion}.

We now verify assumptions (i)--(ii) of
Theorem~\ref{thm:mix-ool-exact}. Let $Y^\star$ be an arbitrary length-$T$
trajectory of $(A_{\mathrm{new}},C_{\mathrm{new}})$. By the inclusion just
proved,
$
Y^\star
\in
\im\!\bigl(\mathcal{O}_T(A_{\mathrm{mix}},C_{\mathrm{mix}})\bigr).
$
Recall, see Lemma~\ref{prop:agg-factorization}, that the output library
admits the factorization
\[
H_Y
=
\mathcal{O}_T(A_{\mathrm{mix}},C_{\mathrm{mix}})X_{\mathrm{mix}}.
\]
By data richness, $\rank(X_{\mathrm{mix}})=n$, and therefore
\[
\im(H_Y)
=
\im\!\bigl(\mathcal{O}_T(A_{\mathrm{mix}},C_{\mathrm{mix}})\bigr).
\]
Hence
\[
Y^\star\in\im(H_Y).
\]
In particular,
$
Y_{\mathrm{past}}\in\im(H_p),
$
which is item~(i) of Theorem~\ref{thm:mix-ool-exact}.

We now verify item~(ii). Since $Y^\star\in\im(H_Y)$, there exists $g$ such that
$
Y^\star=H_Yg.
$
Therefore,
\[
Y_{\mathrm{past}}=H_pg,
\quad
Y_{\mathrm{fut}}=H_fg.
\]
Since $r\ge s_{\mathrm{mix}}$, the continuation identity for the aggregate
system gives
$
H_f=L_{\mathrm{mix}}H_p.
$
Thus
\[
Y_{\mathrm{fut}}
=
H_fg
=
L_{\mathrm{mix}}H_pg
=
L_{\mathrm{mix}}Y_{\mathrm{past}}.
\]
Therefore item~(ii) of Theorem~\ref{thm:mix-ool-exact} holds for
$Y^\star$.

Since $Y^\star$ was arbitrary, Theorem~\ref{thm:mix-ool-exact} applies and
yields exact prediction.
\end{proof}

From Proposition \ref{prop:mix-sentinels} and its proof, we can extract the following corollary, which is instrumental for the developments of Section \ref{sec:output-containment-immersion}. 

\begin{corollary}[Spectral condition and output inclusion]
\label{cor:mix-sentinels}
Let the prototype system matrices  
$A_i$, $i=1,\ldots,Q$, and the new system matrix
$A_{\mathrm{new}}$ be diagonalizable. Suppose $p=1$. If the spectral condition \eqref{spectral.condition} holds, then the output inclusion property 
\eqref{eq:image-inclusion} holds, i.e. 
\[\begin{array}{c}
\spec_{C_{\mathrm{new}}}(A_{\mathrm{new}})
\subseteq
\spec_{C_{\mathrm{mix}}}(A_{\mathrm{mix}})\\[0.1cm]
\Downarrow \\[0.1cm]
\im (\mathcal{O}_T(A_{\mathrm{new}},C_{\mathrm{new}}))
\subseteq
\im (\mathcal{O}_T(A_{\mathrm{mix}},C_{\mathrm{mix}})).
\end{array}\]
\end{corollary}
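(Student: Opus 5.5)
The corollary is precisely the first half of the proof of Proposition~\ref{prop:mix-sentinels}, and that portion uses neither data richness nor the condition $r\ge s_{\mathrm{mix}}$. The plan is therefore to isolate that argument and check that it needs only diagonalizability, $p=1$, and the spectral inclusion \eqref{spectral.condition}. Concretely, I fix an arbitrary $Y\in\im(\mathcal{O}_T(A_{\mathrm{new}},C_{\mathrm{new}}))$ and write $Y=\mathcal{O}_T(A_{\mathrm{new}},C_{\mathrm{new}})x_0$. The case $Y=0$ is trivial, so I assume $Y\neq0$.

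First, I use diagonalizability of $A_{\mathrm{new}}$ to decompose $x_0=\sum_\ell x_\ell$ with $x_\ell\in\ker(\mu_\ell I-A_{\mathrm{new}})$, where $\mu_1,\dots,\mu_m$ are the distinct eigenvalues. Since $A_{\mathrm{new}}^k x_\ell=\mu_\ell^k x_\ell$ and the output is scalar, this gives
\[
Y=\sum_\ell \eta_\ell\, v_T(\mu_\ell),\qquad \eta_\ell:=C_{\mathrm{new}}x_\ell\in\mathbb{C},
\]
where $v_T(\mu)$ is the Vandermonde vector. Only indices with $\eta_\ell\neq0$ contribute to $Y$. For each such index, $x_\ell$ is an eigenvector with $C_{\mathrm{new}}x_\ell\neq0$, so $\mu_\ell\in\spec_{C_{\mathrm{new}}}(A_{\mathrm{new}})$.

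Second, I apply \eqref{spectral.condition} to get $\mu_\ell\in\spec_{C_{\mathrm{mix}}}(A_{\mathrm{mix}})$. This step relies on the identity $\bigcup_i\spec_{C_i}(A_i)=\spec_{C_{\mathrm{mix}}}(A_{\mathrm{mix}})$, which I would verify using the block-diagonal structure. Take an eigenvector $w$ of some $A_i$ with $C_iw\neq0$ and embed it in the $i$-th block; this gives an eigenvector of $A_{\mathrm{mix}}$ with nonzero output. Conversely, an eigenvector of $A_{\mathrm{mix}}$ for $\mu$ has block components that are either zero or eigenvectors of the corresponding $A_i$ for $\mu$. Since $C_{\mathrm{mix}}w=\sum_i C_iw_i\neq0$, some block $i$ must satisfy $C_iw_i\neq0$. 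With the inclusion established, I pick $w_\ell\in\ker(\mu_\ell I-A_{\mathrm{mix}})$ with $C_{\mathrm{mix}}w_\ell\neq0$. Then
\[
\mathcal{O}_T(A_{\mathrm{mix}},C_{\mathrm{mix}})w_\ell=(C_{\mathrm{mix}}w_\ell)\,v_T(\mu_\ell),
\]
and because $p=1$ the factor $C_{\mathrm{mix}}w_\ell$ is a nonzero scalar. Dividing by it shows that $v_T(\mu_\ell)$ lies in the image. By linearity, $Y$ lies in the image as well. Diagonalizability of the $A_i$ is not actually needed for this direction, since only eigenvectors are used.

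The main obstacle is the field convention. The eigenvalues, eigenvectors and coefficients $\eta_\ell$ may be complex, so the argument above shows that $Y$ lies in the complex image, i.e. that $\mathcal{O}_T(A_{\mathrm{mix}},C_{\mathrm{mix}})\alpha=Y$ has a complex solution $\alpha$. Here I would invoke the convention stated in the preliminaries: the matrix and $Y$ are real, so solvability over $\mathbb C$ implies solvability over $\mathbb R$ (take the real part of $\alpha$). Hence the inclusion \eqref{eq:image-inclusion} holds over $\mathbb R$, which completes the proof.
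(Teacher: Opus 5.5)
Your proposal is correct and is essentially the paper's own argument: the paper extracts this corollary directly from the first half of the proof of Proposition~\ref{prop:mix-sentinels}. That argument uses the same modal decomposition $Y=\sum_\ell \eta_\ell v_T(\mu_\ell)$ and the same identity $\mathcal{O}_T(A_{\mathrm{mix}},C_{\mathrm{mix}})w_\ell=(C_{\mathrm{mix}}w_\ell)v_T(\mu_\ell)$. Your check of $\bigcup_i\spec_{C_i}(A_i)=\spec_{C_{\mathrm{mix}}}(A_{\mathrm{mix}})$ and your explicit complex-to-real step (taking the real part of $\alpha$) fill in points the paper only asserts or leaves to its field convention, and you are right that diagonalizability of the $A_i$ is not needed for this direction.
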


We refer to $\im(\mathcal{O}_T(A_{\mathrm{new}},C_{\mathrm{new}}))
\subseteq
\im(\mathcal{O}_T(A_{\mathrm{mix}},C_{\mathrm{mix}}))$ as an output inclusion property because of its equivalence to 
\begin{equation}\label{output-set-inclusion}
\mathcal{Y}_{T,\mathrm{new}}\subseteq \mathcal{Y}_{T,\mathrm{mix}},
\end{equation}
where the sets $\mathcal{Y}_{T}$ collect all the possible length-$T$ output trajectories that can be generated by a system: 
\[
\begin{array}{rcl}
\mathcal{Y}_{T,\mathrm{new}} &=& \{\mathcal{O}_T(A_{\mathrm{new}},C_{\mathrm{new}})x_{0,\mathrm{new}} \colon x_{0,\mathrm{new}}  \in \mathbb{R}^\nu\},\\[0.1cm]
\mathcal{Y}_{T,\mathrm{mix}} &=& \{\mathcal{O}_T(A_{\mathrm{mix}},C_{\mathrm{mix}})x_{0,\mathrm{mix}} \colon  x_{0,\mathrm{mix}} \in \mathbb{R}^{n}\}.
\end{array}\]
We will return to the output inclusion property \eqref{output-set-inclusion} in Section \ref{sec:output-containment-immersion} when dealing with the prediction of trajectories generated by nonlinear systems. 

Proposition~\ref{prop:mix-sentinels} and Corollary \ref{cor:mix-sentinels} show that, in the scalar-output case,
visible spectral inclusion is sufficient for exact prediction. The next
result shows that, when the horizon $T$ is sufficiently long, the same
spectral condition is also necessary for reproducing all length-$T$ output
signals of the new system. 

\begin{proposition}[Necessary spectral condition for scalar-output reproducibility] \label{prop:necessity}
Let $(A_{\mathrm{mix}},C_{\mathrm{mix}})$ and $(A_{\mathrm{new}},C_{\mathrm{new}})$ be linear systems, with
$A_{\mathrm{mix}}\in\mathbb R^{n\times n}$ and 
$A_{\mathrm{new}}\in\mathbb R^{\nu\times \nu}$ 
diagonalizable. 
Suppose $p=1$.
Assume that
\begin{equation}
\im (\mathcal{O}_T(A_{\mathrm{new}},C_{\mathrm{new}})) \subseteq \im (\mathcal{O}_T(A_{\mathrm{mix}},C_{\mathrm{mix}}))
\end{equation}
for some
$
T>\operatorname{card}\bigl(\spec_{C_{\mathrm{mix}}}(A_{\mathrm{mix}})\bigr)
$.
Then
\[
\spec_{C_{\mathrm{new}}}(A_{\mathrm{new}})\subseteq \spec_{C_{\mathrm{mix}}}(A_{\mathrm{mix}}).
\]
\end{proposition}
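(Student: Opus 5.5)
We argue by contradiction, exploiting the Vandermonde structure already used in the proof of Proposition~\ref{prop:mix-sentinels}. The first step is to describe $\im(\mathcal{O}_T(A_{\mathrm{mix}},C_{\mathrm{mix}}))$ explicitly. Since $A_{\mathrm{mix}}$ is diagonalizable, we decompose any initial state along its eigenspaces, exactly as in that proof (working over $\mathbb C$ when eigenvalues are complex, in line with the conventions of Section~II). This shows that every column combination $\mathcal{O}_T(A_{\mathrm{mix}},C_{\mathrm{mix}})x$ is a sum $\sum_\lambda (C_{\mathrm{mix}}x_\lambda)\, v_T(\lambda)$. Terms with $C_{\mathrm{mix}}x_\lambda=0$ vanish. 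Hence
\[
\im(\mathcal{O}_T(A_{\mathrm{mix}},C_{\mathrm{mix}}))\subseteq \operatorname{span}\{v_T(\lambda):\lambda\in\spec_{C_{\mathrm{mix}}}(A_{\mathrm{mix}})\}.
\]
Equality actually holds, but only the inclusion is needed.

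The second step produces a new-system trajectory that involves an offending eigenvalue. Suppose $\mu\in\spec_{C_{\mathrm{new}}}(A_{\mathrm{new}})$ with $\mu\notin\spec_{C_{\mathrm{mix}}}(A_{\mathrm{mix}})$. By definition there is $w\in\ker(\mu I-A_{\mathrm{new}})$ with $C_{\mathrm{new}}w\neq0$. Then $\mathcal{O}_T(A_{\mathrm{new}},C_{\mathrm{new}})w=(C_{\mathrm{new}}w)\,v_T(\mu)$, so $v_T(\mu)\in\im(\mathcal{O}_T(A_{\mathrm{new}},C_{\mathrm{new}}))$. Here $p=1$ is used, because it makes $C_{\mathrm{new}}w$ a nonzero scalar. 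If $\mu$ is complex and we want to stay over $\mathbb R$, we use $w$ and $\bar w$ to get real trajectories. It is simpler to complexify throughout, since the images of real matrices over $\mathbb C$ are the complexifications of the real images, so the hypothesized inclusion persists over $\mathbb C$.

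The third step combines the two. By the assumed inclusion and the first step, $v_T(\mu)\in\operatorname{span}\{v_T(\lambda_1),\dots,v_T(\lambda_k)\}$, where $\lambda_1,\dots,\lambda_k$ are the distinct elements of $\spec_{C_{\mathrm{mix}}}(A_{\mathrm{mix}})$, so $k=\operatorname{card}(\spec_{C_{\mathrm{mix}}}(A_{\mathrm{mix}}))$. The numbers $\mu,\lambda_1,\dots,\lambda_k$ are then $k+1$ distinct complex values. Since $T\ge k+1$, the $T\times(k+1)$ matrix $[v_T(\lambda_1)\ \cdots\ v_T(\lambda_k)\ v_T(\mu)]$ contains a square $(k+1)\times(k+1)$ Vandermonde submatrix formed by its first $k+1$ rows. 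That submatrix is nonsingular because the nodes are distinct. The columns are therefore linearly independent, which contradicts the membership of $v_T(\mu)$ in the span. Hence no such $\mu$ exists, and the inclusion of visible spectra follows.

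The main obstacle is bookkeeping rather than depth. First, we must handle the real/complex field issue cleanly: complexify both images and note that inclusion of real subspaces implies inclusion of their complexifications. Second, the first step must discard invisible eigenvalues, which is exactly why the bound on $T$ involves only $\operatorname{card}(\spec_{C_{\mathrm{mix}}}(A_{\mathrm{mix}}))$ rather than $n$. Diagonalizability of $A_{\mathrm{new}}$ is not even needed for the argument, since only an eigenvector is used. Diagonalizability of $A_{\mathrm{mix}}$ is essential, because it guarantees that no generalized-eigenvector terms of the form $k\lambda^{k-1}$ appear in the mixed outputs.
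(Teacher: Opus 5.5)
Your proof is correct and follows essentially the same route as the paper's. Both show that the mixed image lies in the span of the Vandermonde vectors $v_T(\lambda_i)$ of the visible mixed eigenvalues, both place $v_T(\mu)$ in the new image via an eigenvector, and both reach a contradiction from the nonsingularity of a $(k+1)\times(k+1)$ Vandermonde submatrix when $T>k$. Your side remarks also hold for the paper's proof: only the inclusion $\im(\mathcal{O}_T(A_{\mathrm{mix}},C_{\mathrm{mix}}))\subseteq\im(\Phi_T)$ is actually needed, and diagonalizability of $A_{\mathrm{new}}$ plays no role.
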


\begin{proof}
Let
$
\lambda_1,\ldots,\lambda_m
$ 
be the distinct visible eigenvalues of $A_{\mathrm{mix}}$, 
i.e., $\spec_{C_{\mathrm{mix}}}(A_{\mathrm{mix}})=\{\lambda_1,\ldots,\lambda_m\}$. As shown in
Proposition~\ref{prop:mix-sentinels}, since $A_{\mathrm{mix}}$ is diagonalizable, every
$T$-long output trajectory of $(A_{\mathrm{mix}},C_{\mathrm{mix}})$ belongs to
\[
\operatorname{span}\{v_T(\lambda_i):i=1,\ldots,m\}
\]
where $v_T(\lambda_i)$ is the $T$-long {Vandermonde vector} associated with $\lambda_i$. 

Hence, letting
\[
\Phi_T:=
\begin{bmatrix}
1 & 1 & \cdots & 1\\
\lambda_1 & \lambda_2 & \cdots & \lambda_m\\
\lambda_1^2 & \lambda_2^2 & \cdots & \lambda_m^2\\
\vdots & \vdots & & \vdots\\
\lambda_1^{T-1} & \lambda_2^{T-1} & \cdots & \lambda_m^{T-1}
\end{bmatrix},
\]
we have $\im (\mathcal{O}_T(A_{\mathrm{mix}},C_{\mathrm{mix}})) \subseteq \im(\Phi_T)$.

Further recall that, for each $\lambda_i\in\spec_{C_{\mathrm{mix}}}(A_{\mathrm{mix}})$, there exists
$
w_i\in\ker(\lambda_i I-A_{\mathrm{mix}})
$
such that $C_{\mathrm{mix}}w_i\neq0$. Then
\[
\mathcal{O}_T(A_{\mathrm{mix}},C_{\mathrm{mix}})
w_i=(C_{\mathrm{mix}} w_i)v_T(\lambda_i),
\]
and therefore $v_T(\lambda_i) \in \im (\mathcal{O}_T( A_{\mathrm{mix}},C_{\mathrm{mix}}))$. 

Thus
\[
\im (\mathcal{O}_T(A_{\mathrm{mix}},C_{\mathrm{mix}})) = \im(\Phi_T)
\]

Let now $\mu$ be a visible eigenvalue of $A_{\mathrm{new}}$.
Reasoning as before, we have 
$
v_T(\mu) \in \im (\mathcal{O}_TA_{\mathrm{new}},C_{\mathrm{new}})).
$
By the assumption
$
\im (\mathcal{O}_T(A_{\mathrm{new}},C_{\mathrm{new}})) \subseteq \im (\mathcal{O}_T(A_{\mathrm{mix}},C_{\mathrm{mix}}))
$
it follows that
\[
v_T(\mu) \in \im (\mathcal{O}_T(A_{\mathrm{mix}},C_{\mathrm{mix}})).
\]
Suppose, by contradiction, that $\mu$
is not a visible eigenvalue of $A_{\mathrm{mix}}$. 
Then $\lambda_1,\ldots,\lambda_m,\mu$ are pairwise distinct. Since $T>m$, the columns of 
\[
\overline \Phi_T := \begin{bmatrix} \Phi_T &  v_T(\mu) \end{bmatrix}
\]
are linearly independent, because they contain an $(m+1)\times(m+1)$
Vandermonde matrix associated with the distinct eigenvalues
$\lambda_1,\ldots,\lambda_m,\mu$. Hence
$
v_T(\mu)\notin \im(\Phi_T).
$
This contradicts $v_T(\mu)\in\im(\mathcal{O}_T(A_{\mathrm{mix}},C_{\mathrm{mix}}))=\im(\Phi_T)$. Therefore
$\mu\in\spec_{C_{\mathrm{mix}}}(A_{\mathrm{mix}})$.
Since $\mu$ was arbitrary in the visible spectrum of $A_{\mathrm{new}}$, the claim follows.
\end{proof}


\subsection{Sparse prediction in modal coordinates}
\label{subsec:sparse-modal-prediction}

The previous analysis suggests that accurate predictions may be more easily obtained with large libraries, namely libraries containing a broad variety of natural modes. This, however, raises a practical issue. To obtain theoretical guarantees on the prediction, one has to satisfy the condition $r\ge s_{\mathrm{mix}}$, and the index $s_{\mathrm{mix}}$ generally increases with the dimension $n$ of the aggregate system associated with the library. This is not entirely surprising, and reflects a natural trade-off between the descriptive capability of the library and its complexity. Indeed, a large library allows us to predict systems whose order can be as large as that represented by the aggregate library. Thus, the condition $r\ge s_{\mathrm{mix}}$ is not an artifact, but rather the price of requiring a prediction guarantee uniformly over the whole aggregate library, independently of the specific order $\nu$ of the system to be predicted.

If, however, $\nu \ll n$, one may seek a condition that depends on $\nu$ rather than on $n$. In other words, one would like to control the \emph{complexity} of the solution. A simple way to do this is to represent the library directly in modal coordinates. This is natural when the templates are synthetic or designed from prescribed modes. In this case, the columns of the library are not arbitrary sampled trajectories, but the elementary modal trajectories associated with the visible eigenvalues. Thus, sparsity has a direct dynamical meaning: a trajectory generated by a system of order $\nu$ activates at most $\nu$ visible modes, independently of how many additional modes are available in the library.

Let $\mathcal V:=\{\lambda_1,\ldots,\lambda_M\}$ denote the set of distinct visible eigenvalues contained in the aggregate template family, i.e., $\mathcal V=\spec_{C_{\mathrm{mix}}}(A_{\mathrm{mix}})$. Define the modal library
\[
\Phi_T:=
\begin{bmatrix}
1 & 1 & \cdots & 1\\
\lambda_1 & \lambda_2 & \cdots & \lambda_M\\
\lambda_1^2 & \lambda_2^2 & \cdots & \lambda_M^2\\
\vdots & \vdots & & \vdots\\
\lambda_1^{T-1} & \lambda_2^{T-1} & \cdots & \lambda_M^{T-1}
\end{bmatrix},
\quad
\Phi_T=
\begin{bmatrix}
\Phi_p\\[1mm]
\Phi_f
\end{bmatrix},
\]
where $\Phi_p\in \mathbb{C}^{r \times M}$ contains the first $r$ rows and $\Phi_f\in \mathbb{C}^{T-r\times M}$ the remaining $T-r$ rows. In these coordinates, a coefficient vector $a\in\mathbb{C}^M$ specifies directly which visible modes are active. This should be contrasted with the data library $H_Y$, whose columns are generic sample trajectories and therefore generic linear combinations of modes. A vector that is sparse in modal coordinates need not be sparse in the original data coordinates.

We recall the following notion. For a matrix $M$, its spark is
\[
\operatorname{spark}(M)
:=
\min\{\|z\|_0:\ z\neq 0,\ Mz=0\}.
\]
Thus, $\operatorname{spark}(M)$ is the smallest support size of a nonzero vector in the kernel of $M$. Equivalently, it is the smallest number of columns of $M$ that can satisfy a nontrivial linear dependence. The condition $\operatorname{spark}(\Phi_p)>2\nu$ ensures uniqueness of $\nu$-sparse representations from the past window.

\begin{proposition}[Exact prediction from a sparse modal library]
\label{prop:sparse-modal-prediction}
Consider the same scalar-output setting of Proposition~\ref{prop:mix-sentinels}. Let $\mathcal V$ be the set of distinct visible eigenvalues of the aggregate template family $(A_{\mathrm{mix}},C_{\mathrm{mix}})$, and let $\Phi_T$ be the corresponding modal library. Let $(A_{\mathrm{new}},C_{\mathrm{new}})$ be diagonalizable and suppose that
\[
\spec_{C_{\mathrm{new}}}(A_{\mathrm{new}})
\subseteq
\mathcal V .
\]
Assume that $(A_{\mathrm{new}},C_{\mathrm{new}})$ has order $\nu$. If
\[
\operatorname{spark}(\Phi_p)>2\nu,
\]
then, for every length-$T$ trajectory
\[
Y^\star=
\begin{bmatrix}
Y_{\mathrm{past}}\\[1mm]
Y_{\mathrm{fut}}
\end{bmatrix}
\]
generated by $(A_{\mathrm{new}},C_{\mathrm{new}})$, there exists a unique $\nu$-sparse vector $a^\star \in \mathbb{C}^M$, 
i.e.~$\|a^\star\|_0 \le \nu$, satisfying
\[
Y_{\mathrm{past}}=\Phi_p a^\star .
\]
Moreover, the prediction
\[
\widehat Y_{\mathrm{fut}}:=\Phi_f a^\star
\]
is exact, namely
\[
\widehat Y_{\mathrm{fut}}=Y_{\mathrm{fut}}.
\]
\end{proposition}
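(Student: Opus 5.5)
The plan is to reuse the modal decomposition from the proof of Proposition~\ref{prop:mix-sentinels} to produce a sparse modal representation of the whole trajectory. We then invoke the standard spark uniqueness argument on the past window alone. Finally, exactness follows because the same coefficient vector also reproduces the future window.

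\textbf{Existence.} Let $Y^\star=\mathcal{O}_T(A_{\mathrm{new}},C_{\mathrm{new}})x_0$.
\begin{itemize}
\item Since $A_{\mathrm{new}}$ is diagonalizable, decompose $x_0=\sum_{\ell=1}^m x_\ell$ with $x_\ell\in\ker(\mu_\ell I-A_{\mathrm{new}})$, exactly as in the proof of Proposition~\ref{prop:mix-sentinels}.
\item This gives $Y^\star=\sum_{\ell}\eta_\ell v_T(\mu_\ell)$ with $\eta_\ell=C_{\mathrm{new}}x_\ell$.
\item Every $\mu_\ell$ with $\eta_\ell\neq0$ is a visible eigenvalue of $A_{\mathrm{new}}$. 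By the spectral inclusion $\spec_{C_{\mathrm{new}}}(A_{\mathrm{new}})\subseteq\mathcal V$, each such $\mu_\ell$ equals some $\lambda_{j(\ell)}$.
\item Define $a^\star\in\mathbb C^M$ by $a^\star_{j(\ell)}=\eta_\ell$ and zero elsewhere. Then $Y^\star=\Phi_T a^\star$, so that simultaneously $Y_{\mathrm{past}}=\Phi_p a^\star$ and $Y_{\mathrm{fut}}=\Phi_f a^\star$.
\item The support of $a^\star$ has at most $m$ elements, because the $\mu_\ell$ are distinct and $j$ is injective. Since $m\le\nu$ (the distinct eigenvalues of a $\nu\times\nu$ matrix), $\|a^\star\|_0\le\nu$.
\end{itemize}

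\textbf{Uniqueness.} Suppose $a'$ is another vector with $\|a'\|_0\le\nu$ and $\Phi_p a'=Y_{\mathrm{past}}$. Then $z:=a^\star-a'$ satisfies $\Phi_p z=0$ and $\|z\|_0\le2\nu<\operatorname{spark}(\Phi_p)$. By the definition of spark, this forces $z=0$. Hence $a^\star$ is the unique $\nu$-sparse solution.

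\textbf{Exactness.} The uniqueness step shows that the coefficient vector recovered from the past window is the one constructed in the existence step. That vector satisfies $\Phi_f a^\star=Y_{\mathrm{fut}}$, so $\widehat Y_{\mathrm{fut}}=Y_{\mathrm{fut}}$.

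\textbf{Main obstacle.} None of the steps is deep. The point needing care is the sparsity count in the existence step. It relies on the $\mu_\ell$ being \emph{distinct} eigenvalues and on $\Phi_T$ having distinct columns, so that each visible mode contributes to exactly one coordinate of $a^\star$. Working over $\mathbb C$ (complex eigenvalues and coefficients) is consistent with the paper's convention for complex systems of equations.
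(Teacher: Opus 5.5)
Your proposal is correct and follows essentially the same route as the paper: a modal decomposition writes $Y^\star=\Phi_T a^\star$ with $\|a^\star\|_0\le\nu$ via the spectral inclusion, the spark condition on $\Phi_p$ gives uniqueness among $\nu$-sparse solutions, and exactness holds because the same $a^\star$ reproduces $Y_{\mathrm{fut}}$ through $\Phi_f$. Your remark that distinctness of the $\mu_\ell$ and of the $\lambda_j$ makes $a^\star$ well defined is a detail the paper leaves implicit.
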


\begin{proof}
Let $Y^\star$ be generated by $(A_{\mathrm{new}},C_{\mathrm{new}})$. Since the system is diagonalizable and has scalar output, its output can be written as a linear combination of the modal trajectories associated with its distinct visible eigenvalues. Hence there exist coefficients $\eta_\ell$ and visible eigenvalues $\mu_\ell\in\spec_{C_{\mathrm{new}}}(A_{\mathrm{new}})$ such that
\[
Y^\star
=
\sum_{\ell=1}^{m}
\eta_\ell v_{\mu_\ell}^{(T)},
\quad
v_{\mu}^{(T)}
:=
\begin{bmatrix}
1 & \mu & \cdots & \mu^{T-1}
\end{bmatrix}^{\top},
\]
where $m\le \nu$. By the spectral inclusion assumption, each $\mu_\ell$ belongs to $\mathcal V$. Therefore, each vector $v_{\mu_\ell}^{(T)}$ is a column of the modal library $\Phi_T$. Thus there exists a coefficient vector $a^\star\in\mathbb{C}^M$ with
$
\|a^\star\|_0\le \nu
$
such that
$
Y^\star=\Phi_Ta^\star.
$
Partitioning $\Phi_T$ into past and future rows gives
\[
Y_{\mathrm{past}}=\Phi_p a^\star,
\quad
Y_{\mathrm{fut}}=\Phi_f a^\star.
\]

It remains to show uniqueness among $\nu$-sparse representations. Suppose that $a_1$ and $a_2$ are two vectors satisfying
\[
\Phi_p a_1=\Phi_p a_2,
\quad
\|a_1\|_0\le \nu,
\quad
\|a_2\|_0\le \nu.
\]
Then
$
\Phi_p(a_1-a_2)=0.
$
Moreover, $a_1-a_2$ has support size at most $2\nu$. If $a_1\neq a_2$, then $a_1-a_2$ is a nonzero vector in $\ker(\Phi_p)$ with at most $2\nu$ nonzero entries. This contradicts
$
\operatorname{spark}(\Phi_p)>2\nu.
$
Hence, $a_1=a_2$. Therefore, the $\nu$-sparse vector $a^\star$ matching the past window is unique.

Consequently, the predicted future satisfies
\[
\widehat Y_{\mathrm{fut}}
=
\Phi_f a^\star
=
Y_{\mathrm{fut}},
\]
which proves exact prediction.
\end{proof}

\begin{remark}[Spark condition and visible modes] \label{rem:sparsity}
The bound in Proposition~\ref{prop:sparse-modal-prediction} is
stated in terms of the order $\nu$ of the new system, since a
diagonalizable scalar-output system of order $\nu$ can activate at most
$\nu$ visible modes. If more information is available and the trajectory
is known to activate only $\nu_o \le \nu$ modes, then the same proof shows
that the condition can be weakened to
$
\operatorname{spark}(\Phi_p)>2\nu_o.
$
\end{remark}

In the scalar-output case, as the columns of $\Phi_p$ are generated by distinct modes, $\Phi_p$ is a Vandermonde matrix. Thus, if the eigenvalues are distinct and the number of visible modes is larger than $r$, then
$
\operatorname{spark}(\Phi_p)=r+1.
$
Thus the condition $\operatorname{spark}(\Phi_p)>2\nu$ is ensured by
\begin{equation}\label{r.maggiore.di.2nu}
r\ge 2\nu.
\end{equation}
This replaces the global requirement $r\ge s_{\mathrm{mix}}$ by a condition which depends on the number of modes of the trajectory to be predicted, rather than on the total number of modes contained in the library. 

While this discussion is restricted to scalar-output systems, it suggests that the complexity of a large library can be controlled: one may preserve a high descriptive capability without requiring long past trajectory portions to guarantee correct predictions.

\begin{example}[Sparse prediction from modal library]
Consider the problem of predicting the output of a linear system with\footnote{Later in the paper, we will use again this system in the context of nonlinear immersion.}
\[
A_{\mathrm{new}}=\diag(\rho,\mu,\rho^2),
\quad
C_{\mathrm{new}}=
\begin{bmatrix}
0&1&2
\end{bmatrix}.
\]
We set $\rho=0.8$ and $\mu=0.5$.
Note that this system has order three but its output depends only on the two visible modes $\mu$ and $\rho^2$.

We build a library using five second-order linear systems with matrices
\[
A_1=\begin{bmatrix} \rho & 0 \\ 0 & 0.15 \end{bmatrix},\,
A_2=\begin{bmatrix} \mu & 0 \\ 0 & -0.25 \end{bmatrix},\,
A_3=\begin{bmatrix} \rho^2 & 0 \\ 0 & 0.35 \end{bmatrix},
\]
\[
A_4=\begin{bmatrix} 0.72 & 0 \\ 0 & -0.55 \end{bmatrix},\,
A_5=\begin{bmatrix} 0.05 & 0 \\ 0 & 0.90 \end{bmatrix},
\]
and $C_i=\begin{bmatrix}1&1\end{bmatrix}$, $i=1,\ldots,5$.

The aggregate sentinel library therefore contains ten modes.
Since these modes are distinct and visible, the aggregate library
has $s_{\mathrm{mix}}=10$.
Thus, applying Proposition~\ref{prop:mix-sentinels} directly to the full aggregate library would require a past window of length $r\geq 10$.
Instead, we use the modal form of the same library. Namely, we define
\[
\Phi_T
=
\begin{bmatrix}
1 & 1 & \cdots & 1\\
\lambda_1 & \lambda_2 & \cdots & \lambda_{10}\\
\lambda_1^2 & \lambda_2^2 & \cdots & \lambda_{10}^2\\
\vdots & \vdots & & \vdots\\
\lambda_1^{T-1} & \lambda_2^{T-1} & \cdots & \lambda_{10}^{T-1}
\end{bmatrix}
=
\begin{bmatrix}
\Phi_p\\
\Phi_f
\end{bmatrix},
\]
where $\Phi_p$ contains the first $r$ rows and $\Phi_f$ the remaining $T-r$ rows. 
The coefficient vector in this representation directly selects the active
modes. In the present example, although $A_{\mathrm{new}}$ has order
$\nu=3$, the output uses only the $\nu_o=2$ visible modes $\mu$ and
$\rho^2$. Hence the corresponding trajectory is $\nu_o$-sparse in the
modal library, i.e., there exists a vector $a^\star$, with
$\|a^\star\|_0\le \nu_o$, such that
$
Y^\star=\Phi_T a^\star .
$

We therefore take $r=4$. Since the modes $\lambda_i$ are all distinct,
$\Phi_p$ is a Vandermonde matrix and
\[
\operatorname{spark}(\Phi_p)=r+1=5>2\nu_o.
\]
By Proposition~\ref{prop:sparse-modal-prediction}, together with
Remark~\ref{rem:sparsity}, the $\nu_o$-sparse representation matching
the past is unique. Thus, if $a^\star$ denotes the unique
$\nu_o$-sparse solution of
$
Y_{\mathrm{past}}=\Phi_p a^\star,
$
then the future is recovered exactly as
$
\widehat Y_{\mathrm{fut}}=\Phi_f a^\star
$.

Figure~\ref{fig:sparse-modal-prediction} shows the result taking $T=10$ and $r=4$. 
The prediction is exact even though $r \ll s_{\mathrm{mix}}$. 
\end{example}

\begin{figure}[t]
\centering
\includegraphics[width=0.75\linewidth]{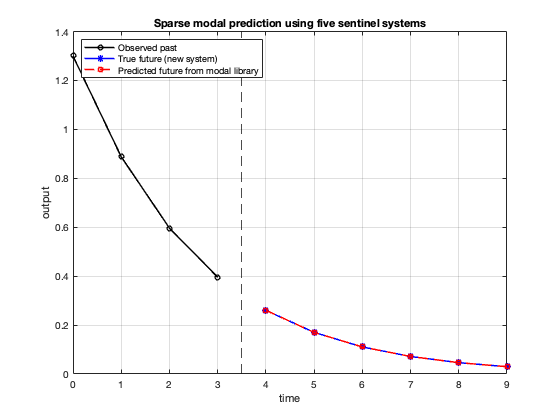}
\caption{Sparse modal prediction using five second-order sentinel systems. The full library contains ten visible modes, but the new output trajectory activates only two of them. Hence $r=4$ is sufficient for exact sparse prediction.}
\label{fig:sparse-modal-prediction}
\end{figure}

For the sake of brevity, this discussion has been restricted to the case of exact spectral inclusion. A rigorous extension to the mismatched case, possibly building on Corollary~\ref{cor:out-of-lib-error}, is left for future research. Such an extension would be useful to understand how large libraries can provide rich descriptive capabilities while controlling the complexity of the solution. In what follows, we take a different perspective and focus on a more conceptual aspect of data-driven \emph{generative} capability, 
namely how new trajectory libraries can be formed through \emph{interconnections of data}.

\section{Out-of-library prediction via interconnected libraries}
\label{subsec:feedback_composed_libraries}

In the previous section we have shown that mixed libraries can generalize beyond the nominal modes because linear combinations of library columns may generate trajectories that do not belong to any single atomic system. We now describe a complementary mechanism, which we call \emph{compositional generalization}: new behaviors may also arise by \emph{interconnecting} atomic systems. In particular, feedback interconnections may generate closed-loop natural modes that are not present in either subsystem taken in isolation. This suggests that, for exact out-of-library prediction, it is not necessary to require that the new system be represented by the union of the atomic spectra only; it is enough that its trajectories belong to a larger behavior generated by suitable interconnections of atomic libraries.

For simplicity, and consistently with Section~6.2.1, we restrict the attention to the SISO case. Unlike the autonomous setting considered in the rest of the paper, in this subsection each atomic system is viewed through its input-output behavior over a finite horizon. The goal is to construct, directly from the individual atomic libraries, a new library associated with their feedback interconnection, without collecting data from the interconnected system itself. 

\subsubsection{Feedback-interconnected library}

Let $\Sigma_1$ and $\Sigma_2$ be two SISO strictly causal linear systems with input-output pairs $(u_1,y_1)$ and $(u_2,y_2)$ 
and state-space representations$(A_i,B_i,C_i)$.\footnote{Strict causality makes 
sure that the feedback interconnection of $\Sigma_1$ and $\Sigma_2$ is well posed.}
Fix a horizon $T$ and, for each system $\Sigma_i$, assume that a noise-free library of length-$T$ \emph{input-output trajectories} is available in the form
\begin{equation}
H_W^{(i)}=
\begin{bmatrix}
H_u^{(i)}\\
H_y^{(i)}
\end{bmatrix}
\in\mathbb{R}^{2T\times N_i},
\quad i=1,2,
\label{eq:atomic_full_io_library}
\end{equation}
where $H_u^{(i)},H_y^{(i)}\in\mathbb{R}^{T\times N_i}$. 
We do not assume that any such trajectory of $H_W^{(i)}$ has been generated by feedback interconnecting $\Sigma_1$ and $\Sigma_2$. 
Nonetheless, we want to show how to 
 generate a data library containing trajectories from a dynamical system corresponding to the feedback interconnection of $\Sigma_1$ and $\Sigma_2$. Clearly, this is useful when data come from the individual systems.
 
As a first step, observe that, for $i=1,2$,  $\im\,(H_W^{(i)})$ is the space of all input-output trajectories of $\Sigma_i$ spanned by the length-$T$ trajectories of  $H_W^{(i)}$.  To build from $\im\,(H_W^{(1)})$ and $\im\,(H_W^{(2)})$ a data library containing trajectories of the feedback interconnection of $\Sigma_1$ and $\Sigma_2$, 
we consider the subspace of the trajectories of $\im\,(H_W^{(1)})$ and $\im\,(H_W^{(2)})$ that satisfy the feedback constraint
\begin{equation}
u_1 = y_2,
\quad
u_2 = y_1.
\label{eq:fb_constraint_full_window}
\end{equation}
This is tantamount to find a basis for the intersection of $\im\,(H_W^{(1)})$ and $\im\,(PH_W^{(2)})$, where $P$ is the block swap matrix $P=\left[\begin{smallmatrix}0 & I_T\\ I_T & 0\end{smallmatrix}\right]$. Equivalently, one could find the basis of the intersection of $\im\,(P H_W^{(1)})$ and $\im\,(H_W^{(2)})$. These bases can be computed as follows: 

\begin{lemma}[Feedback compatibility of atomic libraries]
\label{lem:library-from-atomic-trajectories-fb}
Set
\[
\mathcal Q :=
\begin{bmatrix}
H_u^{(1)} & -H_y^{(2)}\\
-H_y^{(1)} & H_u^{(2)}
\end{bmatrix}.
\]
Let $\Theta \in\mathbb{R}^{(N_1+N_2)\times \nu}$ be such that
\begin{equation}
\mathrm{Im}(\Theta)=\ker(\mathcal Q),
\label{eq:nullspace_basis_comp}
\end{equation}
and consider the partition
\begin{equation}
\Theta=
\begin{bmatrix}
\Theta_1\\
\Theta_2
\end{bmatrix},
\quad
\Theta_1\in\mathbb{R}^{N_1\times \nu}, \,\,
\Theta_2\in\mathbb{R}^{N_2\times \nu}.
\label{eq:nullspace_partition_comp}
\end{equation}
Then
\[
\im(H_W^{(1)}) \cap \im(PH_W^{(2)}) = \im(H_W^{(1)}\Theta_1),
\]
and
\[
\im(H_W^{(2)}) \cap \im(PH_W^{(1)}) = \im(H_W^{(2)}\Theta_2).
\]
\end{lemma}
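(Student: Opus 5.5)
The claim is pure linear algebra: intersect two column spaces by computing the kernel of a block matrix. I will prove each set equality by double inclusion. The key observation is that $\mathcal Q\theta=0$, with $\theta=\mathrm{col}(\theta_1,\theta_2)$, is equivalent to
\[
H_W^{(1)}\theta_1 = P H_W^{(2)}\theta_2 .
\]
Indeed, $PH_W^{(2)}=\left[\begin{smallmatrix}H_y^{(2)}\\ H_u^{(2)}\end{smallmatrix}\right]$. The first block row of $\mathcal Q\theta=0$ reads $H_u^{(1)}\theta_1=H_y^{(2)}\theta_2$, which is $u_1=y_2$. The second reads $H_y^{(1)}\theta_1=H_u^{(2)}\theta_2$, which is $y_1=u_2$. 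I will record this equivalence first, since both inclusions rest on it.

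\emph{Inclusion $\supseteq$.} Take $w=H_W^{(1)}\Theta_1 c$ for some $c\in\mathbb R^{\nu}$. Then $\Theta c\in\ker(\mathcal Q)$ by \eqref{eq:nullspace_basis_comp}. By the equivalence, $w=PH_W^{(2)}\Theta_2 c$. Hence $w$ lies in both $\im(H_W^{(1)})$ and $\im(PH_W^{(2)})$.

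\emph{Inclusion $\subseteq$.} Let $w\in\im(H_W^{(1)})\cap\im(PH_W^{(2)})$. Choose $\theta_1,\theta_2$ with $w=H_W^{(1)}\theta_1=PH_W^{(2)}\theta_2$. By the equivalence, $\mathrm{col}(\theta_1,\theta_2)\in\ker(\mathcal Q)=\im(\Theta)$, so $\theta_1=\Theta_1c$ for some $c$. Therefore $w=H_W^{(1)}\Theta_1c\in\im(H_W^{(1)}\Theta_1)$. This step only needs $\ker(\mathcal Q)\subseteq\im(\Theta)$, and we have full equality.

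\emph{Second identity.} I will use $P^2=I$ and $P$ invertible. Multiplying the relation $H_W^{(1)}\theta_1=PH_W^{(2)}\theta_2$ by $P$ gives $PH_W^{(1)}\theta_1=H_W^{(2)}\theta_2$. So the same kernel characterizes $\im(H_W^{(2)})\cap\im(PH_W^{(1)})$, now with the roles of the blocks swapped. The two inclusions then repeat verbatim with $\Theta_2$ in place of $\Theta_1$. Equivalently, $\im(H_W^{(2)})\cap\im(PH_W^{(1)})=P\bigl(\im(PH_W^{(2)})\cap\im(H_W^{(1)})\bigr)$, and $PH_W^{(1)}\Theta_1=H_W^{(2)}\Theta_2$ column by column.

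There is no real obstacle here. The only point requiring care is checking the sign and block bookkeeping in $\mathcal Q$ against $P$, so that the kernel condition matches the feedback constraint \eqref{eq:fb_constraint_full_window} exactly.
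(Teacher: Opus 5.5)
Your proof is correct and takes essentially the same route as the paper's: rewrite $\mathcal Q\theta=0$ as $H_W^{(1)}\theta_1=PH_W^{(2)}\theta_2$, then prove the identity by double inclusion. Your use of $P^2=I$ (so $P(\mathcal S_1\cap\mathcal S_2)=P\mathcal S_1\cap P\mathcal S_2$ and $PH_W^{(1)}\Theta_1=H_W^{(2)}\Theta_2$) gives an explicit argument for the second identity, which the paper only calls ``completely symmetric.''
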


\begin{proof}
We first prove
\[
\im(H_W^{(1)}\Theta_1)=\im(PH_W^{(2)}\Theta_2).
\]
By \eqref{eq:nullspace_basis_comp}, we have
$
\mathcal Q \Theta = 0.
$
Using the block partition \eqref{eq:nullspace_partition_comp}, this is equivalent to
$H_u^{(1)}\Theta_1 = H_y^{(2)}\Theta_2$,
and 
$H_y^{(1)}\Theta_1 = H_u^{(2)}\Theta_2$.
Therefore,
\[
H_W^{(1)}\Theta_1
=
\begin{bmatrix}
H_u^{(1)}\Theta_1\\
H_y^{(1)}\Theta_1
\end{bmatrix}
=
\begin{bmatrix}
H_y^{(2)}\Theta_2\\
H_u^{(2)}\Theta_2
\end{bmatrix}
=
PH_W^{(2)}\Theta_2.
\]
Hence,  $\im(H_W^{(1)}\Theta_1)=\im(PH_W^{(2)}\Theta_2)$, as claimed.  

We now prove
\[
\im(H_W^{(1)}) \cap \im(PH_W^{(2)}) = \im(H_W^{(1)}\Theta_1).
\]

($\supseteq$) First, let $v\in \im(H_W^{(1)}\Theta_1)$. Then there exists $\alpha\in\mathbb R^\nu$ such that
$
v = H_W^{(1)}\Theta_1\alpha.
$
Clearly, $v\in \im(H_W^{(1)})$. Further, by the identity already proved,
$
v\in \im(PH_W^{(2)}\Theta_2)\subseteq \im(PH_W^{(2)}).
$
Thus
\[
v\in \im(H_W^{(1)}) \cap \im(PH_W^{(2)}).
\]

($\subseteq$) Conversely, assume
$
v\in \im(H_W^{(1)}) \cap \im(PH_W^{(2)}).
$
Then there exist $g_1\in\mathbb R^{N_1}$ and $g_2\in\mathbb R^{N_2}$ such that
\[
v = H_W^{(1)}g_1 = PH_W^{(2)}g_2.
\]
Writing this equality blockwise gives
\[
\begin{bmatrix}
H_u^{(1)}g_1\\
H_y^{(1)}g_1
\end{bmatrix}
=
\begin{bmatrix}
H_y^{(2)}g_2\\
H_u^{(2)}g_2
\end{bmatrix}.
\]
Equivalently,
\[
\mathcal Q
\begin{bmatrix}
g_1\\
g_2
\end{bmatrix}
=0,
\]
and therefore
\[
\begin{bmatrix}
g_1\\
g_2
\end{bmatrix}
\in \ker(\mathcal Q)=\im(\Theta).
\]
So there exists $\alpha\in\mathbb R^\nu$ such that
\[
\begin{bmatrix}
g_1\\
g_2
\end{bmatrix}
=
\Theta\alpha
=
\begin{bmatrix}
\Theta_1\\
\Theta_2
\end{bmatrix}\alpha.
\]
In particular,
$
g_1=\Theta_1\alpha.
$
Substituting this into $v=H_W^{(1)}g_1$ yields
$
v=H_W^{(1)}\Theta_1\alpha \in \im(H_W^{(1)}\Theta_1).
$
This gives the inclusion
$
\im(H_W^{(1)}) \cap \im(PH_W^{(2)}) \subseteq \im(H_W^{(1)}\Theta_1).
$.
Combining the two inclusions, we obtain
$
\im(H_W^{(1)}) \cap \im(PH_W^{(2)}) = \im(H_W^{(1)}\Theta_1),
$
as claimed.

The proof of
$
\im(H_W^{(2)}) \cap \im(PH_W^{(1)}) = \im(H_W^{(2)}\Theta_2)
$
is completely symmetric.
\end{proof}

We have shown that the trajectories compatible with the feedback interconnection of $\Sigma_1$ and $\Sigma_2$ can be characterized as 
$\im(H_W^{(1)}\Theta_1)$ or $\im(H_W^{(2)}\Theta_2)$. Therefore, the output library $\mathcal H_Y$ associated with such interconnection can be characterized as 
$\im(H_y^{(1)}\Theta_1)$ or $\im(H_y^{(2)}\Theta_2)$, depending on which output we select as the output of the feedback system. In the
following, we fix one of them, say
\begin{equation}\label{H_Y_fb}
\mathcal H_Y:=H_y^{(1)}\Theta_1,
\end{equation}
which corresponds to choose the output $y_1$ of $\Sigma_1$ as the output of the interconnection.

A schematic representation of the mechanism generating a feedback-interconnection from atomic libraries 
is depicted in Figure \ref{fig:feedback_library_construction}.

\begin{figure*}[t]
\centering
\footnotesize
\begin{tikzpicture}[>=latex, scale=0.95, transform shape]

\tikzset{
libbox/.style={thick, rounded corners=2pt, fill=gray!4},
inputblock/.style={fill=blue!12},
outputblock/.style={fill=orange!15},
mainbox/.style={thick, rounded corners=4pt, fill=gray!8},
genbox/.style={thick, rounded corners=4pt, fill=green!8},
arr/.style={->, thick}
}

\begin{scope}[shift={(0,2.65)}]
\node at (1.25,3.05) {\textbf{Atomic library 1}};
\draw[libbox] (0,0) rectangle (2.5,2.55);

\draw[inputblock] (0.08,1.35) rectangle (2.42,2.43);
\draw[outputblock] (0.08,0.12) rectangle (2.42,1.20);
\draw[thick] (0,1.28) -- (2.5,1.28);

\node at (1.25,1.90) {$H_u^{(1)}$};
\node at (1.25,0.65) {$H_y^{(1)}$};
\node at (1.25,-0.32) {$H_W^{(1)}$};
\end{scope}

\begin{scope}[shift={(0,-1.55)}]
\node at (1.25,3.05) {\textbf{Atomic library 2}};
\draw[libbox] (0,0) rectangle (2.5,2.55);

\draw[inputblock] (0.08,1.35) rectangle (2.42,2.43);
\draw[outputblock] (0.08,0.12) rectangle (2.42,1.20);
\draw[thick] (0,1.28) -- (2.5,1.28);

\node at (1.25,1.90) {$H_u^{(2)}$};
\node at (1.25,0.65) {$H_y^{(2)}$};
\node at (1.25,-0.32) {$H_W^{(2)}$};
\end{scope}

\begin{scope}[shift={(4.25,0.45)}]
\draw[mainbox] (0,0) rectangle (4.55,3.25);

\node at (2.275,2.88) {\textbf{Feedback compatibility}};
\node at (2.275,2.42) {$u_1=y_2,\qquad u_2=y_1$};

\node at (2.275,1.65) {
$\mathcal Q :=
\left[
\begin{array}{cc}
H_u^{(1)} & -H_y^{(2)}\\
-H_y^{(1)} & H_u^{(2)}
\end{array}
\right]$
};

\node at (2.275,0.88) {$\Theta:\ \im(\Theta)=\ker(\mathcal Q)$};
\node at (2.275,0.38) {$\Theta=
\left[
\begin{array}{c}
\Theta_1\\
\Theta_2
\end{array}
\right]$};
\end{scope}

\draw[arr] (2.5,4.55) -- (4.25,3.05);
\draw[arr] (2.5,3.30) -- (4.25,2.38);

\draw[arr] (2.5,1.00) -- (4.25,1.45);
\draw[arr] (2.5,-0.25) -- (4.25,0.90);

\begin{scope}[shift={(10.05,1.10)}]
\draw[genbox] (0,0) rectangle (3.35,2.15);
\node at (1.675,1.74) {\textbf{Generated library}};
\node at (1.675,1.16) {$\mathcal H_Y=H_y^{(1)}\Theta_1$};
\node at (1.675,0.58) {closed-loop outputs};
\end{scope}

\draw[arr] (8.80,2.05) -- (10.05,2.05);

\begin{scope}[shift={(9.95,-1.65)}]
\node at (1.85,1.65) {\textbf{Feedback interconnection}};

\draw[thick, rounded corners=2pt, fill=gray!6] (0,0.35) rectangle (1.25,1.05);
\draw[thick, rounded corners=2pt, fill=gray!6] (2.45,0.35) rectangle (3.70,1.05);

\node at (0.625,0.70) {$\Sigma_1$};
\node at (3.075,0.70) {$\Sigma_2$};

\draw[arr] (1.25,0.88) -- (2.45,0.88);
\draw[arr] (2.45,0.52) -- (1.25,0.52);

\node at (1.85,1.16) {\scriptsize $u_2=y_1$};
\node at (1.85,0.24) {\scriptsize $u_1=y_2$};
\end{scope}

\draw[arr] (11.72,1.10) -- (11.72,0.10);

\end{tikzpicture}
\caption{Construction of a feedback-generated library from two atomic input-output libraries. The matrix $\Theta$ selects pairs of trajectories satisfying the feedback constraints, producing a new output library associated with the closed-loop interconnection.}
\label{fig:feedback_library_construction}
\end{figure*}

The following lemma provides a state-space factorization of the feedback-interconnected library.

\begin{lemma}[Factorization of the feedback-interconnected library]
\label{lemma:fb_factorization}
Let $\mathcal{Q}$ and $\Theta$ be defined as in Lemma 
\ref{lem:library-from-atomic-trajectories-fb}. Let
\begin{equation}
\label{eq:cl_matrices}
\mathcal A :=
\begin{bmatrix}
A_1 & B_1 C_2\\
B_2 C_1 & A_2
\end{bmatrix},
\quad
\mathcal C :=
\begin{bmatrix}
C_1 & 0
\end{bmatrix}.
\end{equation}
Then there exist matrices $X_1\in\mathbb R^{n_1\times N_1}$ and
$X_2\in\mathbb R^{n_2\times N_2}$ such that, defining
\[
\mathcal X :=
\begin{bmatrix}
X_1\Theta_1\\
X_2\Theta_2
\end{bmatrix},
\]
the feedback-interconnected library satisfies
\[
\mathcal H_Y = O_T(\mathcal A,\mathcal C) \mathcal X,
\]
where $\mathcal{H}_Y$ is defined as in \eqref{H_Y_fb}.
\end{lemma}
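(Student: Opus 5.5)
The plan is to make the data-generating mechanism of each atomic library explicit, and then to show that the feedback constraint encoded by $\Theta$ turns the two open-loop state recursions into the closed-loop recursion driven by $\mathcal A$.

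\emph{Step 1 (open-loop factorization).} Every column of $H_W^{(i)}$ is a length-$T$ input-output trajectory of $\Sigma_i$. Let $X_i\in\mathbb R^{n_i\times N_i}$ collect the corresponding initial states, column by column; these are the matrices whose existence the statement asserts. Since $\Sigma_i$ is strictly causal (no feedthrough), we have
\[
H_y^{(i)}=\mathcal O_T(A_i,C_i)X_i+\mathcal T_i H_u^{(i)},
\]
where $\mathcal T_i$ is the lower-triangular Toeplitz matrix of Markov parameters $C_iA_i^{k-1}B_i$, with zero diagonal.

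\emph{Step 2 (superposition).} I would fix $\alpha\in\mathbb R^\nu$ and set $g_i:=\Theta_i\alpha$, $x_{i,0}:=X_ig_i$, $u_i:=H_u^{(i)}g_i$ and $y_i:=H_y^{(i)}g_i$. By linearity of $\Sigma_i$, the pair $(u_i,y_i)$ is the trajectory of $\Sigma_i$ from the initial state $x_{i,0}$ under the input $u_i$. Hence there is a state sequence $x_{i,t}$ satisfying
\[
x_{i,t+1}=A_ix_{i,t}+B_iu_{i,t},\qquad y_{i,t}=C_ix_{i,t},
\]
where the second relation uses strict causality. By Lemma~\ref{lem:library-from-atomic-trajectories-fb}, that is, by $\mathcal Q\Theta=0$, we also have $u_1=y_2$ and $u_2=y_1$ over the whole window.

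\emph{Step 3 (closing the loop).} Substituting the constraints gives
\[
x_{1,t+1}=A_1x_{1,t}+B_1C_2x_{2,t},\qquad x_{2,t+1}=B_2C_1x_{1,t}+A_2x_{2,t}.
\]
So the stacked state $\xi_t:=[x_{1,t};x_{2,t}]$ obeys $\xi_{t+1}=\mathcal A\xi_t$ with output $y_{1,t}=\mathcal C\xi_t$. A short induction on $t=0,\dots,T-1$ then gives $y_1=\mathcal O_T(\mathcal A,\mathcal C)\xi_0$, where $\xi_0=[X_1\Theta_1;X_2\Theta_2]\alpha=\mathcal X\alpha$. Since $y_1=H_y^{(1)}\Theta_1\alpha=\mathcal H_Y\alpha$ and $\alpha$ is arbitrary (taking $\alpha$ as the unit vectors suffices), this yields $\mathcal H_Y=\mathcal O_T(\mathcal A,\mathcal C)\mathcal X$.

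The main point requiring care is Step 3. Only the inputs and outputs over $t=0,\dots,T-1$ are constrained, so I must check that the state update at each time $t\le T-2$ uses $u_{i,t}$ only through $y_{j,t}=C_jx_{j,t}$. This is exactly where strict causality enters: with a direct feedthrough, $y_{j,t}$ would depend on $u_{j,t}$ and the substitution would produce an algebraic loop. I also need the fact that superposition of trajectories carries over to the state sequences, which follows directly from the state-space equations with initial states $X_i g_i$.
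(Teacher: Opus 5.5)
Your proof is correct and follows essentially the same route as the paper. You collect the initial states $X_i$, use superposition with $g_i=\Theta_i\alpha$, obtain the feedback constraints from $\mathcal Q\Theta=0$, and conclude $\mathcal H_Y\alpha=O_T(\mathcal A,\mathcal C)\mathcal X\alpha$ for every $\alpha$; your Step 3 spells out, via the stacked state recursion and strict causality, the closed-loop step that the paper asserts in a single line.
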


\begin{proof}
For each column of $H_W^{(i)}$, let $(u^{(i,j)},y^{(i,j)})$ denote the
corresponding length-$T$ input-output trajectory of $\Sigma_i$, generated from
the initial condition $x_0^{(i,j)}$. Collect these initial conditions as
\[
X_i :=
\begin{bmatrix}
x_0^{(i,1)} & \cdots & x_0^{(i,N_i)}
\end{bmatrix}.
\]
Then, by linearity of $\Sigma_i$, for every coefficient vector
$g_i\in\mathbb R^{N_i}$, the linear combination
$
H_W^{(i)} g_i
$
is again a valid length-$T$ input-output trajectory of $\Sigma_i$. Its input,
output, and initial condition are obtained by combining those of the library
columns with the same coefficients $g_i$; in particular, the corresponding
initial condition is $x_i(0)=X_i g_i$.
Now let $\alpha\in\mathbb R^\nu$ be arbitrary and consider 
$g_1=\Theta_1\alpha$,
$g_2=\Theta_2\alpha$.
Since $\im(\Theta)=\ker(\mathcal Q)$, then $H_u^{(1)}\Theta_1 = H_y^{(2)}\Theta_2$ and $H_y^{(1)}\Theta_1 = H_u^{(2)}\Theta_2$ 
and the pair $(g_1,g_2)$ satisfies the feedback compatibility
constraint, namely
\[
H_u^{(1)}g_1 = H_y^{(2)}g_2,
\quad
H_u^{(2)}g_2 = H_y^{(1)}g_1.
\]
The two trajectories $(H_u^{(1)}g_1,H_y^{(1)}g_1)$ and $(H_u^{(2)}g_2,H_y^{(2)}g_2)$ thus satisfy
the feedback interconnection \eqref{eq:fb_constraint_full_window} over the whole horizon. Therefore, $H_y^{(1)}g_1$ is a valid output trajectory of the feedback interconnection stemming from  the initial condition 
$\left[\begin{smallmatrix} X_1 \Theta_1 \\ X_2 \Theta_2\end{smallmatrix}\right]\alpha=: \mathcal{X}\alpha$. Hence, 
\[
H_y^{(1)} \Theta_1 \alpha= \mathcal H_Y\alpha = O_T(\mathcal A,\mathcal C) \mathcal X\alpha.
\]
As this identity holds for every $\alpha\in\mathbb R^\nu$, it follows that
$
\mathcal H_Y = O_T(\mathcal A,\mathcal C) \mathcal X.
$
\end{proof}

Next, we continue with series and parallel interconnections.

\subsubsection{Series-interconnected library}

Let $\Sigma_1$ and $\Sigma_2$ be two SISO strictly causal LTI systems with input-output pairs
$(u_1,y_1)$ and $(u_2,y_2)$ and state-space realizations $(A_i,B_i,C_i)$, $i=1,2$.
We consider the series interconnection in which $\Sigma_1$ is upstream and $\Sigma_2$ is downstream,
namely
\begin{equation}
u_2 = y_1.
\label{eq:series_constraint}
\end{equation}
The external input of the interconnected system is $u:=u_1$, while the external output is $y:=y_2$.

As before, for each system $\Sigma_i$, assume that a noise-free library of length-$T$
input-output trajectories is available in the form
\begin{equation}
H_W^{(i)}=
\begin{bmatrix}
H_u^{(i)}\\
H_y^{(i)}
\end{bmatrix}
\in\mathbb{R}^{2T\times N_i},
\quad i=1,2,
\label{eq:series_atomic_library}
\end{equation}
with $H_u^{(i)},H_y^{(i)}\in\mathbb{R}^{T\times N_i}$. 

\begin{lemma}[Series compatibility of atomic libraries]
\label{lem:library-from-atomic-trajectories-series}
Set
\[
\mathcal Q_s :=
\begin{bmatrix}
H_y^{(1)} & -H_u^{(2)}
\end{bmatrix}.
\]
Let $\Theta \in\mathbb{R}^{(N_1+N_2)\times \nu_s}$ be such that
\begin{equation} 
\mathrm{Im}(\Theta)=\ker(\mathcal Q_s),
\label{eq:series_nullspace}
\end{equation}
and consider the partition
\begin{equation}
\Theta=
\begin{bmatrix}
\Theta_1\\
\Theta_2
\end{bmatrix},
\quad
\Theta_1\in\mathbb{R}^{N_1\times \nu_s},\,\,
\Theta_2\in\mathbb{R}^{N_2\times \nu_s}.
\label{eq:series_nullspace_partition}
\end{equation}
Then
\begin{equation}
\left\{
w  \in \im(H_W^{(2)}):\ P_{\rm in} w \in \im(H_y^{(1)})
\right\}
=
\im(H_W^{(2)}\Theta_2)
\label{eq:series_main}
\end{equation}
where
$
P_{\rm in}:=\begin{bmatrix}I_T & 0\end{bmatrix}.
$
Hence, the set of output trajectories of the downstream system $\Sigma_2$ that admit a
series-compatible completion with some trajectory of $\Sigma_1$ is
\[
\im(H_y^{(2)}\Theta_2).
\]
\end{lemma}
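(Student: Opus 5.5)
The plan is to follow the same double-inclusion argument used for Lemma~\ref{lem:library-from-atomic-trajectories-fb}, adapted to the one-sided constraint $u_2=y_1$. First, I unpack $\mathcal Q_s\Theta=0$ blockwise. This gives the single identity $H_y^{(1)}\Theta_1=H_u^{(2)}\Theta_2$. In words, for every $\alpha\in\mathbb R^{\nu_s}$ the input of the downstream trajectory $H_W^{(2)}\Theta_2\alpha$ equals the output of the upstream trajectory $H_W^{(1)}\Theta_1\alpha$.

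For the inclusion ($\supseteq$), take $w=H_W^{(2)}\Theta_2\alpha$. Clearly $w\in\im(H_W^{(2)})$. Moreover, $P_{\rm in}w=H_u^{(2)}\Theta_2\alpha=H_y^{(1)}\Theta_1\alpha\in\im(H_y^{(1)})$.

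For the inclusion ($\subseteq$), let $w\in\im(H_W^{(2)})$ with $P_{\rm in}w\in\im(H_y^{(1)})$. Then $w=H_W^{(2)}g_2$ for some $g_2$, so $P_{\rm in}w=H_u^{(2)}g_2$. There is also some $g_1$ with $H_u^{(2)}g_2=H_y^{(1)}g_1$. This says $\mathcal Q_s\left[\begin{smallmatrix}g_1\\ g_2\end{smallmatrix}\right]=0$, hence $\left[\begin{smallmatrix}g_1\\ g_2\end{smallmatrix}\right]=\Theta\alpha$ for some $\alpha$. In particular $g_2=\Theta_2\alpha$, and therefore $w=H_W^{(2)}\Theta_2\alpha\in\im(H_W^{(2)}\Theta_2)$. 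Together the two inclusions give \eqref{eq:series_main}.

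The final claim about output trajectories follows by applying the output-selection map $[0\ I_T]$ to both sides of \eqref{eq:series_main}. The left-hand side is exactly the set of downstream trajectories whose input is some output of $\Sigma_1$ realizable from the library, i.e.\ those that admit a series-compatible completion. Its image under $[0\ I_T]$ is therefore the set of such downstream outputs, and the right-hand side maps to $\im(H_y^{(2)}\Theta_2)$.

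The only delicate point is the wording ``admits a series-compatible completion with some trajectory of $\Sigma_1$.'' I will read ``trajectory of $\Sigma_1$'' as an element of $\im(H_W^{(1)})$. The condition on the upstream trajectory is then just $P_{\rm in}w\in\im(H_y^{(1)})$, because the input $H_u^{(1)}g_1$ is unconstrained and comes along automatically with $g_1$. If instead all trajectories of $\Sigma_1$ were intended, one would additionally need a richness assumption guaranteeing that $\im(H_W^{(1)})$ contains every length-$T$ trajectory of $\Sigma_1$. I would state the library-based reading explicitly, since that is what the kernel construction captures.
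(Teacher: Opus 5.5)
Your proposal is correct and matches the paper's proof essentially step for step: the same double inclusion, using $H_y^{(1)}\Theta_1=H_u^{(2)}\Theta_2$ for ($\supseteq$) and the kernel characterization $\ker(\mathcal Q_s)=\im(\Theta)$ for ($\subseteq$). Your explicit projection by $[0\ I_T]$ for the final output claim, and your remark that ``trajectory of $\Sigma_1$'' should be read as an element of $\im(H_W^{(1)})$, are sensible additions that the paper leaves implicit.
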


\begin{proof}
$(\supseteq)$ Let $w\in \im(H_W^{(2)}\Theta_2)$. Then there exists $\alpha\in\mathbb R^{\nu_s}$ such that
$
w=H_W^{(2)}\Theta_2\alpha.
$
Moreover,
\[
P_{\rm in} w = H_u^{(2)}\Theta_2\alpha = H_y^{(1)}\Theta_1\alpha \in \im(H_y^{(1)}).
\]
Here, we used the identity $H_u^{(2)}\Theta_2=H_y^{(1)}\Theta_1$ which follows directly from $\mathcal Q_s\Theta=0$.
This shows the first inclusion.

$(\subseteq)$  Let
\[
w\in
\left\{
w   \in \im(H_W^{(2)}):\ P_{\rm in} w \in \im(H_y^{(1)})
\right\}. 
\]
Then, there exists $g_2\in\mathbb R^{N_2}$ such that
$
w=H_W^{(2)}g_2.
$
Further, $P_{\rm in}w\in\im(H_y^{(1)})$, so there exists $g_1\in\mathbb R^{N_1}$ such that
\[
H_u^{(2)}g_2 = H_y^{(1)}g_1.
\]
Therefore
$\left[
\begin{smallmatrix}
g_1\\
g_2
\end{smallmatrix}\right]
\in \ker(\mathcal Q_s)=\im(\Theta),
$
so there exists $\alpha\in\mathbb R^{\nu_s}$ such that
$\left[
\begin{smallmatrix}
g_1\\
g_2
\end{smallmatrix}\right]=
\Theta\alpha$.
In particular,
$g_2=\Theta_2\alpha$, 
and therefore
$w=H_W^{(2)}g_2=H_W^{(2)}\Theta_2\alpha\in \im(H_W^{(2)}\Theta_2)$. 
This shows the converse inclusion.
\end{proof}

Lemma \ref{lem:library-from-atomic-trajectories-series} identifies the downstream
input-output trajectories consistent with some upstream trajectory of
$\Sigma_1$ through the series constraint $u_2=y_1$. In particular, every
vector $\alpha\in\mathbb R^{\nu_s}$ defines a compatible pair of coefficient
vectors
$g_1=\Theta_1\alpha$ and $g_2=\Theta_2\alpha$,
for which
$
H_y^{(1)}g_1 = H_u^{(2)}g_2.
$
Since the external input of the series interconnection is the input of
$\Sigma_1$, while the external output is the output of $\Sigma_2$, this leads
naturally to the following input-output library for the interconnected system:
\[
\mathcal H_W^{\,s}:=
\begin{bmatrix}
H_u^{(1)}\Theta_1\\
H_y^{(2)}\Theta_2
\end{bmatrix}.
\]

At this point, we note that this library cannot, however, be used within the autonomous
framework developed so far, since it describes a forced behavior. Our case is obtained when $\Sigma_1$ is itself autonomous,
so that
$
H_u^{(1)}=0.
$
In this case, the signal driving the downstream system $\Sigma_2$ is generated internally by
$\Sigma_1$, and the resulting series interconnection defines an autonomous system. We therefore
define the corresponding output library as
\begin{equation}\label{H_Y_series}
\mathcal H_Y^{\,s}:=H_y^{(2)}\Theta_2.
\end{equation}

\begin{remark}[Series-interconnected library with $\Sigma_1$ autonomous]
What we just stated above, that is, $\mathcal H_Y^{\,s}=H_y^{(2)}\Theta_2$, stems from characterizing the series-interconnected input-output library and then setting $H_u^{(1)}=0$. Alternatively, we could have started from characterizing the intersection 
$\im(P_{\rm out} H_W^{(1)}) \cap \im(P_{\rm in}H_W^{(2)})$, where $P_{\rm out}:=\begin{bmatrix}0 & I_T\end{bmatrix}$, which is the subspace of output trajectories of $\Sigma_1$ built from $\im(H_W^{(1)})$ and input trajectories of $\Sigma_2$ built from $\im(H_W^{(2)})$ that satisfy the series connection constraint \eqref{eq:series_constraint}. An equivalent expression of this subspace is $\im(H_u^{(2)}\Theta_2)$. By linearity, the output trajectories of the series interconnection are obtainable from $\im(H_u^{(2)}\Theta_2)$ by considering $\im(H_y^{(2)}\Theta_2)$. This underscores how deriving the series-interconnected output library only requires $H_y^{(1)}$ and $H_W^{(2)}$, that is, allows for $\Sigma_1$ to be autonomous. 
\end{remark}

The following lemma provides a state-space factorization of the resulting output library.

\begin{lemma}[Factorization of the series-interconnected library]
\label{lem:series_factorization}
Let $\mathcal{Q}_s$ and $\Theta$ be defined as in Lemma 
\ref{lem:library-from-atomic-trajectories-series}.
Let
\begin{equation}
\mathcal A_s :=
\begin{bmatrix}
A_1 & 0\\
B_2 C_1 & A_2
\end{bmatrix},
\quad
\mathcal C_s :=
\begin{bmatrix}
0 & C_2
\end{bmatrix}.
\label{eq:series_cl_matrices}
\end{equation}
Then there exist matrices $X_1\in\mathbb R^{n_1\times N_1}$ and
$X_2\in\mathbb R^{n_2\times N_2}$ such that, defining
\[
\mathcal X_s :=
\begin{bmatrix}
X_1\Theta_1\\
X_2\Theta_2
\end{bmatrix},
\]
the output library of the autonomous series interconnection satisfies
\begin{equation}
\mathcal H_Y^{\,s}
=
O_T(\mathcal A_s,\mathcal C_s) \mathcal X_s,
\label{eq:series_factorization}
\end{equation}
where $\mathcal H_Y^{\,s}=H_y^{(2)}\Theta_2$.
\end{lemma}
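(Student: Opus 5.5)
The argument follows the proof of Lemma~\ref{lemma:fb_factorization}, adapted to the series constraint and to an autonomous upstream system. First, I would fix the matrices $X_i$. For each column $j$ of $H_W^{(i)}$, let $x_0^{(i,j)}$ be the initial state of the realization $(A_i,B_i,C_i)$ that, under the input $u^{(i,j)}$ in $H_u^{(i)}$, produces the output $y^{(i,j)}$ in $H_y^{(i)}$. These are the $X_i$ of the statement. For $\Sigma_1$ autonomous ($H_u^{(1)}=0$) this reduces to $H_y^{(1)}=\mathcal O_T(A_1,C_1)X_1$. For $\Sigma_2$ I would write the standard input-output relation
\[
H_y^{(2)} = \mathcal O_T(A_2,C_2)X_2 + \mathcal T_T(A_2,B_2,C_2)\,H_u^{(2)},
\]
where $\mathcal T_T$ is the strictly lower-triangular Toeplitz matrix of Markov parameters $C_2A_2^{k}B_2$. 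It is strictly lower triangular by strict causality.

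Next, for arbitrary $\alpha\in\mathbb R^{\nu_s}$, I set $g_1=\Theta_1\alpha$ and $g_2=\Theta_2\alpha$. Since $\mathcal Q_s\Theta=0$, we have $H_u^{(2)}g_2=H_y^{(1)}g_1$. By linearity, $H_y^{(1)}g_1$ is the free output of $\Sigma_1$ from $x_1(0)=X_1g_1$. Likewise, $H_y^{(2)}g_2$ is the output of $\Sigma_2$ from $x_2(0)=X_2g_2$ driven by the input $H_u^{(2)}g_2$, and this input equals the output of $\Sigma_1$. This is exactly a trajectory of the cascade $x_1^+=A_1x_1$, $x_2^+=A_2x_2+B_2C_1x_1$, $y=C_2x_2$ from the stacked initial state $\mathcal X_s\alpha$. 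To make this rigorous I would verify the identity
\[
\mathcal O_T(\mathcal A_s,\mathcal C_s)\begin{bmatrix}\xi_1\\ \xi_2\end{bmatrix}
=\mathcal O_T(A_2,C_2)\xi_2+\mathcal T_T(A_2,B_2,C_2)\,\mathcal O_T(A_1,C_1)\xi_1 .
\]
I would do this by computing $\mathcal C_s\mathcal A_s^k$ blockwise. Its second block is $C_2A_2^k$. By induction on the lower block-triangular power, its first block is $\sum_{j=0}^{k-1}C_2A_2^{k-1-j}B_2C_1A_1^{j}$, which is the $k$-th row of the Toeplitz–observability product.

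Substituting $\xi_1=X_1\Theta_1\alpha$ and $\xi_2=X_2\Theta_2\alpha$, and using $\mathcal O_T(A_1,C_1)X_1\Theta_1=H_y^{(1)}\Theta_1=H_u^{(2)}\Theta_2$, the right-hand side becomes $\mathcal O_T(A_2,C_2)X_2\Theta_2\alpha+\mathcal T_T H_u^{(2)}\Theta_2\alpha=H_y^{(2)}\Theta_2\alpha$. Since $\alpha$ is arbitrary, this gives \eqref{eq:series_factorization}.

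The main obstacle is the block computation of $\mathcal C_s\mathcal A_s^k$ and matching its convolution term with the Toeplitz matrix, including the index shift that comes from strict causality. Everything else is linearity. A subtle point is that $X_1$ must be chosen so that $H_y^{(1)}=\mathcal O_T(A_1,C_1)X_1$ holds exactly, which uses the autonomy of $\Sigma_1$, i.e. $H_u^{(1)}=0$.
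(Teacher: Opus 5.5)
Your proof is correct and follows the paper's route. The paper also sets $g_1=\Theta_1\alpha$, $g_2=\Theta_2\alpha$, uses $\mathcal Q_s\Theta=0$ to get $H_y^{(1)}g_1=H_u^{(2)}g_2$, identifies $H_y^{(2)}\Theta_2\alpha$ as the cascade output from $\mathcal X_s\alpha$, and concludes since $\alpha$ is arbitrary. The one difference is that the paper simply asserts that $H_y^{(2)}\Theta_2\alpha$ is an output of $(\mathcal A_s,\mathcal C_s)$, whereas you verify it explicitly through the block formula for $\mathcal C_s\mathcal A_s^k$ and the Toeplitz relation $H_y^{(2)}=\mathcal O_T(A_2,C_2)X_2+\mathcal T_T H_u^{(2)}$, which makes the argument fully rigorous.
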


\begin{proof}
Bearing in mind \eqref{eq:series_cl_matrices}, first note that we are considering the series connection of the {\em autonomous} upstream system $\Sigma_1$ and the downstream system $\Sigma_2$. 
Moreover, as in Lemma \ref{lemma:fb_factorization}, for $i=1,2$, there exist matrices $X_i\in\mathbb R^{n_i\times N_i}$ of initial conditions of $\Sigma_i$ associated to the trajectories of  $H_y^{(1)}$ and $H_W^{(2)}$. 

By linearity of $\Sigma_i$, for every coefficient vector $g_i\in\mathbb R^{N_i}$, the trajectories 
$H_y^{(1)}g_1$ and $H_W^{(2)}g_2$ are valid length-$T$ output and input-output trajectories of $\Sigma_1$ and  $\Sigma_2$ with initial conditions $X_1 g_1$ and $X_2 g_2$.
Now let $\alpha\in\mathbb R^{\nu_s}$ be arbitrary and define $g_1=\Theta_1\alpha$, $g_2=\Theta_2\alpha$.
Since $\im(\Theta)=\ker(\mathcal Q_s)$, the pair $(g_1,g_2)$ satisfies the 
constraint $H_y^{(1)}g_1 = H_u^{(2)}g_2$.
Therefore, $H_y^{(2)}g_2$ is a valid output trajectory of the series interconnection \eqref{eq:series_cl_matrices} stemming from the initial condition $\left[\begin{smallmatrix} X_1 \Theta_1 \\ X_2 \Theta_2\end{smallmatrix}\right]\alpha=
\mathcal{X}_s\alpha$. 
Since
$H_y^{(2)}g_2=H_y^{(2)}\Theta_2 \alpha$, 
we have
$H_y^{(2)}\Theta_2 \alpha
=
O_T(\mathcal A_s,\mathcal C_s)\mathcal X_s\alpha$. 
As this holds for every $\alpha\in\mathbb R^{\nu_s}$, we conclude that
$\mathcal H_Y^{\,s}
=
H_y^{(2)}\Theta_2
=
O_T(\mathcal A_s,\mathcal C_s)\mathcal X_s$. 
\end{proof}

We close this section by discussing the case of parallel interconnection.

\subsubsection{Parallel-interconnected library}

Let $\Sigma_1$ and $\Sigma_2$ be two SISO strictly causal LTI systems with input-output pairs
$(u_1,y_1)$ and $(u_2,y_2)$ and state-space realizations $(A_i,B_i,C_i)$, $i=1,2$.
We consider the classical parallel interconnection in which the two systems are driven by the same
external input and their outputs are summed, namely
\begin{equation}
u_1=u_2=:u,
\quad
y:=y_1+y_2.
\label{eq:parallel_constraint}
\end{equation}

Fix a horizon $T$ and, for each system $\Sigma_i$, assume that a noise-free library of length-$T$
input-output trajectories is available in the form
\begin{equation}
H_W^{(i)}=
\begin{bmatrix}
H_u^{(i)}\\
H_y^{(i)}
\end{bmatrix}
\in\mathbb{R}^{2T\times N_i},
\quad i=1,2,
\label{eq:parallel_atomic_library}
\end{equation}
with $H_u^{(i)},H_y^{(i)}\in\mathbb{R}^{T\times N_i}$.

The following lemma characterizes the 
space of trajectories obtained from $\im(H_W^{(1)})$ and $\im(H_W^{(2)})$ whose input components satisfy the first parallel connection constraint \eqref{eq:parallel_constraint}. The output trajectories of the parallel connection are obtained from this space through the output libraries $H_y^{(1)}$, $H_y^{(2)}$, linearity and the second constraint in  \eqref{eq:parallel_constraint}. 

\begin{lemma}[Parallel compatibility of atomic libraries]
Set
\[
\mathcal Q_p :=
\begin{bmatrix}
H_u^{(1)} & -H_u^{(2)}
\end{bmatrix}.
\]
Let $\Theta \in\mathbb{R}^{(N_1+N_2)\times \nu_p}$ be such that
\begin{equation}
\im(\Theta)=\ker(\mathcal Q_p),
\label{eq:parallel_nullspace}
\end{equation}
and partition
\begin{equation}
\Theta=
\begin{bmatrix}
\Theta_1\\
\Theta_2
\end{bmatrix},
\quad
\Theta_1\in\mathbb{R}^{N_1\times \nu_p},\,\,
\Theta_2\in\mathbb{R}^{N_2\times \nu_p}.
\label{eq:parallel_nullspace_partition}
\end{equation}
Then
\begin{equation}
\left\{
w\in \im(H_W^{(2)}):\ P_{\rm in} w \in \im(H_u^{(1)})
\right\}=
\im(H_W^{(2)}\Theta_2),
\label{eq:parallel_main}
\end{equation}
and 
\begin{equation}
\left\{
w\in \im(H_W^{(1}):\ P_{\rm in} w \in \im(H_u^{(2)})
\right\}=\im(H_W^{(1)}\Theta_1),
\label{eq:parallel_main_2}
\end{equation}
where
$
P_{\rm in}:=\begin{bmatrix}I_T & 0\end{bmatrix}.
$
Therefore, summing the outputs of all parallel-compatible pairs yields the set of output trajectories of the interconnected system:
\[
\im(H_y^{(1)}\Theta_1 + H_y^{(2)}\Theta_2).
\]
\end{lemma}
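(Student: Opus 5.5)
The plan is to follow the template of the proofs of Lemma~\ref{lem:library-from-atomic-trajectories-fb} and Lemma~\ref{lem:library-from-atomic-trajectories-series}: characterize each set through the kernel of $\mathcal Q_p$, proving the two inclusions separately. The basic identity comes from $\mathcal Q_p\Theta=0$, which gives $H_u^{(1)}\Theta_1=H_u^{(2)}\Theta_2$. Everything else follows from this identity and from the kernel parametrization \eqref{eq:parallel_nullspace}.

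For \eqref{eq:parallel_main}, I will prove the two inclusions as follows.
\begin{itemize}
\item[($\supseteq$)] Take $w=H_W^{(2)}\Theta_2\alpha$. Clearly $w\in\im(H_W^{(2)})$. Moreover, $P_{\rm in}w=H_u^{(2)}\Theta_2\alpha=H_u^{(1)}\Theta_1\alpha\in\im(H_u^{(1)})$.
\item[($\subseteq$)] Take $w=H_W^{(2)}g_2$ with $P_{\rm in}w=H_u^{(2)}g_2=H_u^{(1)}g_1$ for some $g_1$. Then $\mathcal Q_p\left[\begin{smallmatrix} g_1\\ g_2\end{smallmatrix}\right]=0$, so $\left[\begin{smallmatrix} g_1\\ g_2\end{smallmatrix}\right]=\Theta\alpha$ for some $\alpha$. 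Hence $g_2=\Theta_2\alpha$ and $w\in\im(H_W^{(2)}\Theta_2)$.
\end{itemize}
The proof of \eqref{eq:parallel_main_2} is symmetric: swap the roles of the indices, and note that $\ker(\mathcal Q_p)$ is invariant under this relabeling up to a sign, since $H_u^{(2)}g_2-H_u^{(1)}g_1=0$ is the same constraint. Then read off $g_1=\Theta_1\alpha$.

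For the final claim about parallel outputs, I first make precise what "the set of output trajectories" means: the set of vectors $H_y^{(1)}g_1+H_y^{(2)}g_2$ over all pairs $(g_1,g_2)$ with common input $H_u^{(1)}g_1=H_u^{(2)}g_2$, which is exactly the first constraint in \eqref{eq:parallel_constraint}. The pair condition says precisely that $(g_1,g_2)\in\ker(\mathcal Q_p)=\im(\Theta)$. So the pair is $(\Theta_1\alpha,\Theta_2\alpha)$, and the summed output is $(H_y^{(1)}\Theta_1+H_y^{(2)}\Theta_2)\alpha$. Conversely, every $\alpha$ yields a compatible pair. This gives equality with $\im(H_y^{(1)}\Theta_1+H_y^{(2)}\Theta_2)$.

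The only delicate point is this last step, because the lemma phrases it loosely ("summing the outputs of all parallel-compatible pairs"). I would state explicitly that compatibility is the pair condition in coefficient space, rather than separate membership conditions on $w_1$ and $w_2$. The subtlety is that an input $u$ may be represented by many $g_i$. Working directly with pairs $(g_1,g_2)$ in the kernel avoids any well-posedness issue, since every such pair is admissible and its summed output is exactly what is collected.
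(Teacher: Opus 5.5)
Your proposal is correct and follows essentially the same route as the paper's proof: the identity $H_u^{(1)}\Theta_1=H_u^{(2)}\Theta_2$ gives the inclusion $\supseteq$, the kernel parametrization $\ker(\mathcal Q_p)=\im(\Theta)$ gives the inclusion $\subseteq$, the second identity follows by symmetry, and the output set is read off from the compatible pairs $(\Theta_1\alpha,\Theta_2\alpha)$. Your explicit argument that every common-input pair $(g_1,g_2)$ lies in $\im(\Theta)$ spells out the converse of the final claim, which the paper only asserts with ``precisely as claimed.''
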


\begin{proof}
$(\supseteq)$ Let $w\in \im(H_W^{(2)}\Theta_2)$. Then there exists $\alpha\in\mathbb R^{\nu_p}$ such that
$
w=H_W^{(2)}\Theta_2\alpha.
$
Moreover,
\[
P_{\rm in} w = H_u^{(2)}\Theta_2\alpha = H_u^{(1)}\Theta_1\alpha \in \im(H_u^{(1)}),
\]
where we used the identity $H_u^{(2)}\Theta_2=H_u^{(1)}\Theta_1$, which follows directly from
$\mathcal Q_p\Theta=0$.
This proves the first inclusion.

$(\subseteq)$ Let
\[
w\in
\left\{
w\in \im(H_W^{(2)}):\ P_{\rm in} w \in \im(H_u^{(1)})
\right\}.
\]
Then, there exists $g_2\in\mathbb R^{N_2}$ such that
$
w=H_W^{(2)}g_2.
$
Moreover,
$
P_{\rm in}w=H_u^{(2)}g_2\in\im(H_u^{(1)}),
$
so there exists \(g_1\in\mathbb R^{N_1}\) such that
$
H_u^{(2)}g_2 = H_u^{(1)}g_1.
$
Hence
\[
\mathcal Q_p
\begin{bmatrix}
g_1\\
g_2
\end{bmatrix}
=0.
\]
Therefore
\[
\begin{bmatrix}
g_1\\
g_2
\end{bmatrix}
\in \ker(\mathcal Q_p)=\im(\Theta),
\]
so there exists \(\alpha\in\mathbb R^{\nu_p}\) such that
\[
\begin{bmatrix}
g_1\\
g_2
\end{bmatrix}
=
\Theta\alpha.
\]
In particular,
$
g_2=\Theta_2\alpha,
$
and therefore
\[
w=H_W^{(2)}g_2=H_W^{(2)}\Theta_2\alpha\in \im(H_W^{(2)}\Theta_2).
\]
This proves \eqref{eq:parallel_main}. The identity \eqref{eq:parallel_main_2} is proven similarly.

Finally, for every compatible pair
\[
g_1=\Theta_1\alpha,\quad g_2=\Theta_2\alpha,
\]
the corresponding output of the parallel interconnection is
\[
y = H_y^{(1)}g_1 + H_y^{(2)}g_2
   = H_y^{(1)}\Theta_1\alpha + H_y^{(2)}\Theta_2\alpha.
\]
Hence, the set of output trajectories generated by all parallel-compatible pairs is precisely as claimed. 
\end{proof}

We therefore define the parallel-interconnected input-output library as
\[
\mathcal H_W^{\,p}:=
\begin{bmatrix}
H_u^{(1)}\Theta_1\\
H_y^{(1)}\Theta_1 + H_y^{(2)}\Theta_2
\end{bmatrix}.
\]
As in the series case, this library cannot, in general, be used directly within the autonomous
prediction framework developed so far, since it describes a forced behavior.

We therefore restrict attention to the case in which both systems
$\Sigma_1$ and $\Sigma_2$ are autonomous, so that
\[
H_u^{(1)}=0,
\quad
H_u^{(2)}=0.
\]
In this case, the compatibility condition becomes trivial. Indeed,
$
\mathcal Q_p=
\begin{bmatrix}
H_u^{(1)} & -H_u^{(2)}
\end{bmatrix}
=0,
$
so that every pair of trajectories is compatible. Equivalently, one may take
\[
\Theta = I_{N_1+N_2},
\quad
\Theta_1=\begin{bmatrix}I_{N_1} & 0\end{bmatrix},\,\,
\Theta_2=\begin{bmatrix}0 & I_{N_2}\end{bmatrix}.
\]
Accordingly, the output library of the autonomous parallel interconnection is
\begin{equation}\label{H_Y_parallel}
\mathcal H_Y^{\,p}:=H_y^{(1)}\Theta_1 + H_y^{(2)}\Theta_2
=
\begin{bmatrix}
H_y^{(1)} & H_y^{(2)}
\end{bmatrix}.
\end{equation}

\begin{remark}[Parallel-interconnected library with $\Sigma_1, \Sigma_2$ autonomous]
If one is dealing with autonomous systems $\Sigma_1$, $\Sigma_2$ and only output libraries $H_y^{(1)}$, $H_y^{(2)}$ are available (as in the case of the aggregate system of Section \ref{subsec:mix-agg}), the space of output trajectories obtainable from those would be directly provided by $\im(H_y^{(1)})+\im(H_y^{(2)})$, which corresponds to the parallel-interconnected output library given by \eqref{H_Y_parallel}.  
\end{remark}

The following lemma provides a state-space factorization of this output library.  

\begin{lemma}[Factorization of the parallel-interconnected library]
\label{lem:parallel_factorization_autonomous}
Let
\begin{equation}
\mathcal A_p :=
\begin{bmatrix}
A_1 & 0\\
0 & A_2
\end{bmatrix},
\quad
\mathcal C_p :=
\begin{bmatrix}
C_1 & C_2
\end{bmatrix}.
\label{eq:parallel_cl_matrices}
\end{equation}
Then there exist matrices $X_1\in\mathbb R^{n_1\times N_1}$ and
$X_2\in\mathbb R^{n_2\times N_2}$ such that, defining
\[
\mathcal X_p :=
\begin{bmatrix}
X_1 & 0\\
0 & X_2
\end{bmatrix},
\]
the output library of the autonomous parallel interconnection satisfies
\begin{equation}
\mathcal H_Y^{\,p}
=
O_T(\mathcal A_p,\mathcal C_p)\mathcal X_p.
\label{eq:parallel_factorization_autonomous}
\end{equation}
\end{lemma}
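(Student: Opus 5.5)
The plan is to mirror the proof of Lemma~\ref{prop:agg-factorization}, adapted to the parallel-interconnection notation. First I fix $X_1$ and $X_2$. Since $\Sigma_1$ and $\Sigma_2$ are autonomous, each column of $H_y^{(i)}$ is a length-$T$ free output trajectory of $\Sigma_i$, so it has the form $\mathcal O_T(A_i,C_i)x_0^{(i,j)}$ for some initial state $x_0^{(i,j)}\in\mathbb R^{n_i}$. This is the same representation used in the richness discussion preceding Lemma~\ref{lem:willems.aut}. I then take $X_i$ to be the matrix whose columns are these initial states, exactly as in \eqref{eq:Xinit}, which gives $H_y^{(i)}=\mathcal O_T(A_i,C_i)X_i$ for $i=1,2$.

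Next I compute the observability matrix of $(\mathcal A_p,\mathcal C_p)$. The matrix $\mathcal A_p$ is block diagonal, so $\mathcal A_p^k=\mathrm{blkdiag}(A_1^k,A_2^k)$ and hence $\mathcal C_p\mathcal A_p^k=\begin{bmatrix}C_1A_1^k & C_2A_2^k\end{bmatrix}$. Stacking these rows for $k=0,\dots,T-1$ gives
$O_T(\mathcal A_p,\mathcal C_p)=\begin{bmatrix}\mathcal O_T(A_1,C_1) & \mathcal O_T(A_2,C_2)\end{bmatrix}$.
This is the same structure that appears for $(A_{\mathrm{mix}},C_{\mathrm{mix}})$ in Section~\ref{subsec:mix-agg}. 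In fact, $(\mathcal A_p,\mathcal C_p)$ is exactly the aggregate pair for $Q=2$.

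Finally I multiply blockwise:
$O_T(\mathcal A_p,\mathcal C_p)\mathcal X_p=\begin{bmatrix}\mathcal O_T(A_1,C_1)X_1 & \mathcal O_T(A_2,C_2)X_2\end{bmatrix}=\begin{bmatrix}H_y^{(1)} & H_y^{(2)}\end{bmatrix}$.
By \eqref{H_Y_parallel} the right-hand side equals $\mathcal H_Y^{\,p}$. Equivalently, the claim is Lemma~\ref{prop:agg-factorization} applied with $Q=2$.

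The only point needing care is the first step, namely that free outputs of $\Sigma_i$ factor through $\mathcal O_T(A_i,C_i)$. The input-free case ($H_u^{(i)}=0$) holds by assumption, and with zero input the output over the window is $y_t=C_iA_i^t x_0$ exactly. The rest is routine block algebra.
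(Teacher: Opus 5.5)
Your proof is correct and follows the paper's argument essentially step by step. You take $X_i$ as the matrices of initial states, so that $H_y^{(i)}=O_T(A_i,C_i)X_i$ in the autonomous case; you note that $O_T(\mathcal A_p,\mathcal C_p)=\begin{bmatrix}O_T(A_1,C_1) & O_T(A_2,C_2)\end{bmatrix}$ (justified via $\mathcal A_p^k=\mathrm{blkdiag}(A_1^k,A_2^k)$); and you multiply blockwise. Your remark that this is Lemma~\ref{prop:agg-factorization} with $Q=2$ is also accurate.
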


\begin{proof}
Since each column of $H_W^{(i)}$ is a length-$T$ input-output trajectory of $\Sigma_i$, there
exists, for each column of $H_W^{(i)}$, an associated initial condition of $\Sigma_i$.
Collecting these initial conditions columnwise defines a matrix
$X_i\in\mathbb R^{n_i\times N_i}$, $i=1,2$.

Because both systems are autonomous, their input blocks vanish, and the corresponding output
libraries satisfy
\[
H_y^{(1)} = O_T(A_1,C_1)X_1,
\quad
H_y^{(2)} = O_T(A_2,C_2)X_2.
\]
Therefore,
\[
\mathcal H_Y^{\,p}
=
\begin{bmatrix}
H_y^{(1)} & H_y^{(2)}
\end{bmatrix}
=
\begin{bmatrix}
O_T(A_1,C_1)X_1 & O_T(A_2,C_2)X_2
\end{bmatrix}.
\]
Since
\[
O_T(\mathcal A_p,\mathcal C_p)
=
\begin{bmatrix}
O_T(A_1,C_1) & O_T(A_2,C_2)
\end{bmatrix},
\]
it follows that
\[
\mathcal H_Y^{\,p}
=
O_T(\mathcal A_p,\mathcal C_p)
\begin{bmatrix}
X_1 & 0\\
0 & X_2
\end{bmatrix}
=
O_T(\mathcal A_p,\mathcal C_p)\mathcal X_p.
\]
This proves \eqref{eq:parallel_factorization_autonomous}.
\end{proof}

It is worth emphasizing that, in the autonomous case, the data-richness condition for the
parallel-interconnected library coincides with that of the composed library considered in the
previous sections: richness is guaranteed as soon as the individual state matrices $X_1$ and $X_2$
have full row rank. The reason is that, once the common input is identically zero, the compatibility
constraint of the parallel interconnection becomes trivial and no additional restriction is imposed
on the admissible pairs of trajectories. Nonetheless, the parallel-interconnected library is conceptually
different from the mixed library  previously introduced. In the mixed-library setting, one
simply aggregates autonomous trajectories coming from different prototype systems, whereas in the
parallel case one forms a new autonomous behavior whose output is the sum of the outputs of the two
subsystems. Thus, although the rank condition for richness is the same, the mechanism generating new
behaviors is different. 

Using the previous factorization results, we can state a unified continuation and prediction result
covering all the interconnection mechanisms discussed above.

\begin{proposition}[Continuation identity for an interconnection-generated library]
\label{prop:continuation_interconnection_library}
Let $\mathcal H_Y\in\mathbb R^{T\times \nu}$ be an output library generated by one of the
interconnection constructions above (feedback, series, or parallel), and let it admit
a
factorization
\[
\mathcal H_Y = O_T(\mathcal A,\mathcal C) \mathcal X
\]
for some autonomous SISO LTI realization $(\mathcal A,\mathcal C)$.
Partition $\mathcal H_Y$ as
\begin{equation}
\mathcal H_Y=
\begin{bmatrix}
\mathcal H_p\\
\mathcal H_f
\end{bmatrix},
\quad
\mathcal H_p\in\mathbb R^{r\times \nu},\,\,
\mathcal H_f\in\mathbb R^{(T-r)\times \nu}.
\label{eq:Hp_Hf_interconnection}
\end{equation}
Assume that
$
r\ge s_{\mathrm{int}},
$
where $s_{\mathrm{int}}$ denotes the observability index of $(\mathcal A,\mathcal C)$.
Then there exists a matrix
\[
\mathcal L\in\mathbb R^{(T-r)\times r}
\]
such that
\begin{equation}
\mathcal H_f=\mathcal L\mathcal H_p.
\label{eq:continuation_identity_interconnection}
\end{equation}
\end{proposition}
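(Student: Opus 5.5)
The plan is to reduce the statement to Lemma~\ref{lem:key}(i), applied to the realization $(\mathcal A,\mathcal C)$. Lemma~\ref{lem:key}(i) only uses the factorization $H_Y=\mathcal O_T(A,C)X$ and the bound $r\ge s$. It does not use data richness, and it does not matter how the columns of $X$ were obtained. The same argument therefore applies verbatim to $\mathcal H_Y=O_T(\mathcal A,\mathcal C)\mathcal X$, even though here $\mathcal X$ is the structured matrix built from $\Theta_1,\Theta_2$ in Lemmas~\ref{lemma:fb_factorization}, \ref{lem:series_factorization} and \ref{lem:parallel_factorization_autonomous}.

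Concretely, I would first partition $O_T(\mathcal A,\mathcal C)=\left[\begin{smallmatrix}O_r(\mathcal A,\mathcal C)\\ \mathcal M\end{smallmatrix}\right]$, where $\mathcal M$ stacks the rows $\mathcal C\mathcal A^{\ell}$ for $\ell=r,\dots,T-1$. This gives $\mathcal H_p=O_r(\mathcal A,\mathcal C)\mathcal X$ and $\mathcal H_f=\mathcal M\mathcal X$.

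Next, I would invoke Fact~\ref{fact:rank.stability} for the pair $(\mathcal A,\mathcal C)$ with observability index $s_{\mathrm{int}}\le r$. For each $\ell\in\{r,\dots,T-1\}$ it gives a row vector $Q_\ell\in\R^{1\times r}$ (recall $p=1$) such that $\mathcal C\mathcal A^\ell=Q_\ell O_r(\mathcal A,\mathcal C)$. Stacking these rows defines $\mathcal L\in\R^{(T-r)\times r}$ with $\mathcal M=\mathcal L\,O_r(\mathcal A,\mathcal C)$.

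Finally, I would right-multiply by $\mathcal X$ to get $\mathcal H_f=\mathcal M\mathcal X=\mathcal L\,O_r(\mathcal A,\mathcal C)\mathcal X=\mathcal L\mathcal H_p$, which is the claimed identity.

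There is essentially no obstacle. The only point to check is that the realization obtained in each interconnection case is a genuine autonomous SISO pair. This holds for the feedback pair $(\mathcal A,\mathcal C)$ in \eqref{eq:cl_matrices}, the series pair $(\mathcal A_s,\mathcal C_s)$ and the parallel pair $(\mathcal A_p,\mathcal C_p)$. The factorization lemmas already establish this, so Fact~\ref{fact:rank.stability} applies and $\mathcal L$ depends only on $(\mathcal A,\mathcal C,T,r)$. It is also worth remarking that $\mathcal L$ is real because $\mathcal A,\mathcal C$ are real, and that it need not be unique.
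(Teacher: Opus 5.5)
Your proposal is correct and takes the same route as the paper. The paper also splits $\mathcal H_Y$ into $\mathcal H_p=O_r(\mathcal A,\mathcal C)\mathcal X$ and $\mathcal H_f=\mathcal M\mathcal X$ and then invokes Lemma~\ref{lem:key}(i) for $(\mathcal A,\mathcal C)$; your argument simply unpacks that lemma into Fact~\ref{fact:rank.stability} followed by row stacking, without needing data richness or any particular structure of $\mathcal X$.
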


\begin{proof}
Since
$
\mathcal H_Y = O_T(\mathcal A,\mathcal C)\mathcal X,
$
partitioning $\mathcal H_Y$ according to \eqref{eq:Hp_Hf_interconnection} yields
\[
\mathcal H_p = O_r(\mathcal A,\mathcal C)\,\mathcal X,
\quad
\mathcal H_f = \mathcal M\,\mathcal X,
\]
where $\mathcal M$ collects the output blocks from time $r$ to time $T-1$.
Since $r\ge s_{\mathrm{int}}$, Lemma~\ref{lem:key} applies to the pair $(\mathcal A,\mathcal C)$ and
gives a matrix $\mathcal L$ such that
$
\mathcal H_f=\mathcal L\mathcal H_p.
$
\end{proof}

\begin{theorem}[Exact prediction from an interconnection-generated library]
\label{thm:exact_prediction_interconnection_library}
Assume that $r\ge s_{\mathrm{int}}$, where $s_{\mathrm{int}}$ denotes the observability index of
$(\mathcal A,\mathcal C)$, so that
$
\mathcal H_f=\mathcal L\mathcal H_p.
$
Let
\[
Y^\star=
\begin{bmatrix}
Y_{\mathrm{past}}\\
Y_{\mathrm{fut}}
\end{bmatrix}\in\mathbb R^T
\]
be a length-$T$ trajectory of a (possibly unknown) system. Suppose that:
\begin{enumerate}
\item[(i)] (\emph{Feasibility}) $Y_{\mathrm{past}}\in \im(\mathcal H_p)$;
\item[(ii)] (\emph{Shared continuation}) $Y_{\mathrm{fut}}=\mathcal L Y_{\mathrm{past}}$.
\end{enumerate}
Then, for any $g$ satisfying
\begin{equation}
Y_{\mathrm{past}}=\mathcal H_p g,
\label{eq:feasible_g_interconnection}
\end{equation}
the prediction
\begin{equation}
\widehat Y_{\mathrm{fut}}:=\mathcal H_f g
\label{eq:prediction_interconnection_library}
\end{equation}
is exact:
\[
\widehat Y_{\mathrm{fut}}=Y_{\mathrm{fut}}.
\]
In particular, the predicted future is independent of the chosen feasible coefficient vector $g$.
\end{theorem}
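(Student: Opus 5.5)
The plan mirrors the proof of Theorem~\ref{thm:mix-ool-exact}, with the aggregate library replaced by the interconnection-generated library $\mathcal H_Y$. The only structural ingredient is the continuation identity $\mathcal H_f=\mathcal L\mathcal H_p$. It is available under the assumption $r\ge s_{\mathrm{int}}$ by Proposition~\ref{prop:continuation_interconnection_library}, which in turn rests on the factorization $\mathcal H_Y=O_T(\mathcal A,\mathcal C)\mathcal X$. That factorization is provided by Lemma~\ref{lemma:fb_factorization}, Lemma~\ref{lem:series_factorization}, or Lemma~\ref{lem:parallel_factorization_autonomous}, according to the interconnection considered. Once this identity is fixed, the argument is purely algebraic and uses nothing about how $Y^\star$ was generated.

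The steps are as follows. First, by the feasibility hypothesis (i), pick any $g$ with $Y_{\mathrm{past}}=\mathcal H_p g$; such a $g$ exists because $Y_{\mathrm{past}}\in\im(\mathcal H_p)$. Second, compute
\[
\widehat Y_{\mathrm{fut}}=\mathcal H_f g=\mathcal L\mathcal H_p g=\mathcal L Y_{\mathrm{past}},
\]
using the continuation identity. Third, invoke the shared-continuation hypothesis (ii) to replace $\mathcal L Y_{\mathrm{past}}$ by $Y_{\mathrm{fut}}$, which gives exactness. Independence from the choice of $g$ then follows at once, since the chain of equalities shows the prediction depends on $g$ only through $\mathcal H_p g=Y_{\mathrm{past}}$. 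Equivalently, two feasible vectors differ by an element of $\ker(\mathcal H_p)\subseteq\ker(\mathcal H_f)$, the inclusion being a consequence of $\mathcal H_f=\mathcal L\mathcal H_p$.

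There is no genuine obstacle here. The only point requiring care is that $\mathcal L$ in hypothesis (ii) must be the \emph{same} matrix that satisfies $\mathcal H_f=\mathcal L\mathcal H_p$. Because $\mathcal L$ need not be unique, I will state explicitly that (ii) is understood with respect to a fixed continuation map of the interconnected realization $(\mathcal A,\mathcal C)$. With this convention, any other solution $L$ of the data equation agrees with $\mathcal L$ on $\im(\mathcal H_p)$, as in Lemma~\ref{lem:key}(iii), so under (i) the conclusion does not depend on which representative is chosen.
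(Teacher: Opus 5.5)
Your proof is correct and matches the paper's argument. The paper simply refers back to the proof of Theorem~\ref{thm:mix-ool-exact}, which is exactly your chain $\mathcal H_f g=\mathcal L\mathcal H_p g=\mathcal L Y_{\mathrm{past}}=Y_{\mathrm{fut}}$. Your extra remark is also correct and harmless: under (i), hypothesis (ii) does not depend on which solution of $\mathcal H_f=L\mathcal H_p$ is taken as $\mathcal L$.
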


\begin{proof} The proof is the same as the one of Theorem \ref{thm:mix-ool-exact}.
\end{proof}

\begin{proposition}[Exact prediction from an interconnection-generated library under spectral inclusion]
\label{prop:interconnection_spectral}
Let $\mathcal H_Y$ be an output library generated by one of the interconnection constructions above
(feedback, series, or parallel), and 
let it admit a factorization
$
\mathcal H_Y = O_T(\mathcal A,\mathcal C)\,\mathcal X
$
for some autonomous SISO LTI realization $(\mathcal A,\mathcal C)$ of order $n_{\mathrm{int}}$.
Assume
$\rank(\mathcal X)=n_{\mathrm{int}}$ and
$r\ge s_{\mathrm{int}}$,
where $s_{\mathrm{int}}$ denotes the observability index of $(\mathcal A,\mathcal C)$.
Assume moreover that $\mathcal A$ is diagonalizable.
Let $A_{\mathrm{new}}$ be a diagonalizable matrix of the SISO system $(A_{\mathrm{new}},C_{\mathrm{new}})$ 
that satifies
\begin{equation}
\spec_{ C_{\mathrm{new}} }(A_{\mathrm{new}})\subseteq \spec_{\mathcal C}(\mathcal A).
\label{eq:generic_interconnection_spectral_condition}
\end{equation}
Then every length-$T$ trajectory $Y^\star$  of $(A_{\mathrm{new}},C_{\mathrm{new}})$ satisfies the assumptions of
Theorem~\ref{thm:exact_prediction_interconnection_library}. Consequently, the predictor induced by
the interconnection-generated library $\mathcal H_Y$ is exact.
\end{proposition}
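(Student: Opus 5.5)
The argument will mirror the proof of Proposition~\ref{prop:mix-sentinels}, with the aggregate pair $(A_{\mathrm{mix}},C_{\mathrm{mix}})$ replaced by the interconnection realization $(\mathcal A,\mathcal C)$ and the library $H_Y$ replaced by $\mathcal H_Y$.

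\textbf{Step 1: output inclusion.} The spectral condition \eqref{eq:generic_interconnection_spectral_condition} should give
\[
\im\bigl(\mathcal O_T(A_{\mathrm{new}},C_{\mathrm{new}})\bigr)\subseteq \im\bigl(O_T(\mathcal A,\mathcal C)\bigr).
\]
This is exactly the content of Corollary~\ref{cor:mix-sentinels}, once we observe that its proof never uses the block-diagonal structure of $A_{\mathrm{mix}}$. It uses only three facts: both state matrices are diagonalizable, $p=1$, and the visible spectrum of the new system is contained in that of the library pair. All of these hold here. Concretely, I will decompose $x_0$ of the new system into eigencomponents, so that $Y=\sum_\ell \eta_\ell v_T(\mu_\ell)$. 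Every $\mu_\ell$ with $\eta_\ell\neq0$ is visible for $(A_{\mathrm{new}},C_{\mathrm{new}})$, hence visible for $(\mathcal A,\mathcal C)$. I then pick $w_\ell\in\ker(\mu_\ell I-\mathcal A)$ with $\mathcal C w_\ell\neq 0$, which gives $O_T(\mathcal A,\mathcal C)w_\ell=(\mathcal C w_\ell)v_T(\mu_\ell)$.

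One point needs care: the eigenvalues may be complex, so $w_\ell$ is complex. I will invoke the paper's convention on real/complex solvability. The real vector $Y$ lies in the complex image of the real matrix $O_T(\mathcal A,\mathcal C)$, and therefore it also lies in its real image.

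\textbf{Step 2: feasibility.} Since $\mathcal H_Y=O_T(\mathcal A,\mathcal C)\mathcal X$ and $\rank(\mathcal X)=n_{\mathrm{int}}$, the image of $\mathcal X$ is all of $\mathbb R^{n_{\mathrm{int}}}$. Hence $\im(\mathcal H_Y)=\im(O_T(\mathcal A,\mathcal C))$. Combined with Step 1, every length-$T$ trajectory $Y^\star$ of the new system satisfies $Y^\star=\mathcal H_Y g$ for some $g$. Taking the first $r$ rows gives $Y_{\mathrm{past}}=\mathcal H_p g\in\im(\mathcal H_p)$, which is condition (i) of Theorem~\ref{thm:exact_prediction_interconnection_library}.

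\textbf{Step 3: shared continuation.} Using the same $g$, we have $Y_{\mathrm{fut}}=\mathcal H_f g$. Because $r\ge s_{\mathrm{int}}$, Proposition~\ref{prop:continuation_interconnection_library} gives $\mathcal H_f=\mathcal L\mathcal H_p$. Therefore $Y_{\mathrm{fut}}=\mathcal L\mathcal H_p g=\mathcal L Y_{\mathrm{past}}$, which is condition (ii). Theorem~\ref{thm:exact_prediction_interconnection_library} then yields exact prediction, independently of the feasible $g$ chosen.

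The only genuinely nontrivial step is Step 1. Its obstacle is essentially bookkeeping: confirming that the Vandermonde argument carries over to a non-block-diagonal $\mathcal A$ (for instance a feedback matrix with emergent eigenvalues), and handling the complex eigenvectors correctly. In both respects the argument relies only on diagonalizability and visibility, so I expect it to go through verbatim.
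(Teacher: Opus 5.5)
Your proposal is correct and follows the route the paper indicates, namely repeating the proof of Proposition~\ref{prop:mix-sentinels} with $(A_{\mathrm{mix}},C_{\mathrm{mix}},H_Y)$ replaced by $(\mathcal A,\mathcal C,\mathcal H_Y)$: output inclusion via the Vandermonde decomposition, then $\im(\mathcal H_Y)=\im(O_T(\mathcal A,\mathcal C))$ from the full row rank of $\mathcal X$, then shared continuation from $\mathcal H_f=\mathcal L\mathcal H_p$. Your explicit handling of the complex eigenvectors through the paper's real/complex solvability convention fills in a detail the paper leaves implicit (and Step 1 in fact needs diagonalizability only of $A_{\mathrm{new}}$, not of $\mathcal A$).
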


\begin{proof}
The proof follows a similar pattern as the one of Proposition \ref{prop:mix-sentinels}. 
\end{proof}

\begin{remark}[Data richness and diagonalizability]
Proposition~\ref{prop:interconnection_spectral} applies uniformly to the three cases discussed
above. In each case, the only ingredients needed are: (i) an autonomous output library
$\mathcal H_Y$, (ii) a state-space factorization
$\mathcal H_Y=O_T(\mathcal A,\mathcal C)\mathcal X,
$and (iii) the full row rank of $\mathcal X$. 
As explained before, the output library $\mathcal H_Y$ is obtainable from the atomic libraries (see \eqref{H_Y_fb}, \eqref{H_Y_series}, \eqref{H_Y_parallel}). The factorization always holds as established in Lemma \ref{lemma:fb_factorization}, \ref{lem:series_factorization} and \ref{lem:parallel_factorization_autonomous}. On the other hand, the full row rank property of $\mathcal X$ does not automatically derive from the data richness of the single libraries (except for the parallel interconnection). 
The specific interconnection mechanism only affects how
the library is constructed and how the matrices $(\mathcal A,\mathcal C,\mathcal X)$ are obtained.

A further difference with respect to the compositional setting considered earlier concerns the
diagonalizability assumption. In the compositional and parallel case, diagonalizability follows directly from the
fact that the prototype systems are diagonal. By contrast, for feedback and series
interconnections, diagonalizability of the resulting realization $(\mathcal A,\mathcal C)$ is not
automatic, even if the individual prototype systems are diagonalizable. Indeed, interconnection may
modify the spectral structure of the aggregate dynamics. Nevertheless, when the prototype systems are synthetically generated, diagonalizability of the
interconnected realization can still be checked explicitly and, if needed, enforced by design.
\end{remark}

Interconnecting libraries is useful both conceptually and practically. From a
conceptual viewpoint, it shows that libraries are not merely static collections
of sample trajectories, but compositional objects that can be combined to
generate new behaviors. From a practical viewpoint, it makes it possible to
construct richer predictors directly from atomic data, without collecting new
data from the interconnected systems themselves, which may be costly or even
impractical. In particular, feedback interconnection can enlarge the spectral
set of the original systems, thereby enabling prediction of trajectories
generated by systems outside the initial family. In this sense, feedback-based
library construction provides a genuine form of \emph{generalization from templates}.

\begin{example}[Generalization from templates] \label{exmpl:feedback-generalization}
We illustrate Lemma~\ref{lem:library-from-atomic-trajectories-fb} and the subsequent out-of-library prediction mechanism
with a simple scalar example. Consider the two prototype systems
\[
\Sigma_1:\quad x_{1,t+1}=0.2\,x_{1,t}+u_{1,t}, \quad y_{1,t}=x_{1,t},
\]
\[
\Sigma_2:\quad x_{2,t+1}=0.3\,x_{2,t}+u_{2,t}, \quad y_{2,t}=x_{2,t}.
\]
For each system, we generate a rich library of $30$ input-output
trajectories of length $T=8$,
\[
H_W^{(i)}=
\begin{bmatrix}
H_u^{(i)}\\[1mm]
H_y^{(i)}
\end{bmatrix} \in \mathbb R^{16 \times 30},
\quad i=1,2,
\]
by sampling multiple initial conditions and input sequences from the normal distribution. We then impose
the feedback interconnection constraints
\[
u_1=y_2,\quad u_2=y_1,
\]
and define the matrices $\mathcal Q \in \mathbb R^{16 \times 60}$ and 
$\Theta \in \mathbb R^{60 \times 44}$ as in 
Lemma~\ref{lem:library-from-atomic-trajectories-fb}.\footnote{The fact that $\Theta$ has $44$ 
columns means that the feedback constraints
leave a $44$-dimensional space of admissible coefficient pairs $(g_1,g_2)$,
that is, $44$ independent ways of combining the two atomic libraries while
respecting the interconnection law.}
Since $\mathrm{Im}(\Theta)=\ker(\mathcal Q)$,
according to Lemma~\ref{lem:library-from-atomic-trajectories-fb}, the output library generated by the feedback
interconnection is
$
\mathcal H_Y=H_y^{(1)}\Theta_1.
$
By Lemma~\ref{lemma:fb_factorization}, $\mathcal H_Y$ admits the
factorization
$
\mathcal H_Y = O_T(\mathcal A, \mathcal C) \mathcal X,
$
where
\[
\mathcal A =
\begin{bmatrix}
0.2 & 1\\
1 & 0.3
\end{bmatrix},
\quad
\mathcal C =
\begin{bmatrix}
1 & 0
\end{bmatrix}.
\]

The eigenvalues of $\mathcal A$ are
\[
\lambda_{1,2}
=
\frac{0.5\pm\sqrt{4.01}}{2}
\]
which are different from the eigenvalues $0.2$ and $0.3$ of the individual
prototype systems. The matrix $\mathcal X$ has full row rank.
By Proposition~\ref{prop:interconnection_spectral}, the feedback-generated library thus
supports exact prediction for every diagonalizable LTI system whose visible spectrum is contained in 
$\spec_{ \mathcal C }( \mathcal A ) = \{\lambda_{1},\lambda_{2}\}$. 
To highlight this fact, we consider the new system
\[
x_{t+1}=\lambda_1 x_t,\quad y_t=x_t,
\]
and generate a trajectory $Y^*$ of length $8$. We choose $r=3$, which corresponds to predicting $5$ 
samples of the trajectory by observing $3$ of them; we then solve
$
Y_{\rm past}=H_p g
$
and predict the future via
$
\widehat Y_{\rm fut}=H_f g.
$

Figure \ref{fig:OoL-prediction} confirms that this method is able to predict exactly
the trajectory even though its natural modes are not present in either atomic
system. In particular, $\|Y_{\rm fut}-\widehat Y_{\rm fut}\|_2\approx 4\cdot 10^{-15}$ at machine precision.

\begin{figure}[t]
    \centering
    \includegraphics[width=0.8\linewidth]{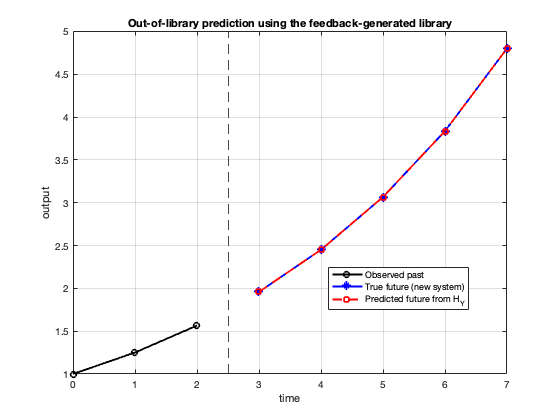}
    \caption{Trajectory prediction by observing $r=3$ samples (black dots). 
     The feedback-generated library is able to predict exactly
the trajectory even though its natural modes are not present in either atomic
system.}
    \label{fig:OoL-prediction}
\end{figure}

This example shows that our method may be useful beyond pure prediction, for instance to gain insight into whether a given interconnection is likely to be stable directly from atomic libraries, without physically interconnecting the systems.
\end{example}

\section{Output inclusion, immersion, and the 
prediction of nonlinear trajectories from linear templates}\label{sec:output-containment-immersion}

Predicting trajectories generated by nonlinear dynamical systems is, in general, substantially more difficult than in the linear setting. Nonlinear systems may not admit a finite-dimensional linear realization in the original state variables. However, in some cases, the evolution of selected observables evolves according to a finite-dimensional linear law, a perspective closely related to the Koopman operator framework.

In this section, we show that our approach extends to this class of nonlinear systems. We focus on systems whose output observations generate a finite-dimensional observation space. Such systems can be immersed into a finite-dimensional linear system, and, under suitable conditions, their output trajectories are contained in the library generated by the linear template system, possibly obtained as an aggregate or interconnected composition of linear systems.

\subsection{Spectral inclusion as an output containment property}

We start with a pair of reformulations of the spectral condition used in Proposition \ref{prop:mix-sentinels}. 
The purpose of these results is to make explicit the following viewpoint: prediction from a library of linear templates is possible when the new system -- whose output we want to predict -- is immersed into the aggregate template system. This property will be used later on to derive an extension of our framework to nonlinear systems with finite-dimensional observation spaces.

Let $(\mathcal A,\mathcal C)$ denote the linear template system associated with the library used for prediction, and let
$
(A_{\mathrm{new}},C_{\mathrm{new}})
$
be a new linear system.  
Let $n$ and $\nu$ denote the state-space dimensions of $(\mathcal A,\mathcal C)$ and $(A_{\mathrm{new}},C_{\mathrm{new}})$, respectively, and assume that the two systems have the same output dimension $p$.

The following result holds.

\begin{lemma}[A reformulation of the spectral condition] \label{lem:an.spectr.cond}
Let $\mathcal A$ and $A_{\mathrm{new}}$ be diagonalizable. Also assume that, for every 
$\mu\in\spec(A_{\mathrm{new}})$,
\begin{equation}\label{eq:complex-visible-inclusion}
C_{\mathrm{new}}\bigl(\ker(\mu I-A_{\mathrm{new}})\bigr)
\subseteq
\mathcal C\bigl(\ker(\mu I-\mathcal A)\bigr).
\end{equation}
Then
\begin{equation}\label{eq:spectral-inclusion}
\spec_{C_{\mathrm{new}}}(A_{\mathrm{new}})
\subseteq
\spec_{\mathcal C}(\mathcal A).
\end{equation}
If $p=1$, then the converse also holds.
\end{lemma}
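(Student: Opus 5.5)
The plan is to prove both implications directly from the definition of the visible spectrum, $\spec_C(A)=\{\lambda\in\spec(A): C\ker(\lambda I-A)\neq\{0\}\}$. Throughout, kernels and images are taken over $\mathbb C$, following the convention fixed in the basic notions.

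\emph{Forward implication.} Let $\mu\in\spec_{C_{\mathrm{new}}}(A_{\mathrm{new}})$. By definition there is $x\in\ker(\mu I-A_{\mathrm{new}})$ with $C_{\mathrm{new}}x\neq 0$. By \eqref{eq:complex-visible-inclusion} there is $w\in\ker(\mu I-\mathcal A)$ with $\mathcal C w=C_{\mathrm{new}}x\neq0$. Then $w\neq 0$, so $\ker(\mu I-\mathcal A)\neq\{0\}$ and $\mu\in\spec(\mathcal A)$. Since also $\mathcal C\ker(\mu I-\mathcal A)\neq\{0\}$, we get $\mu\in\spec_{\mathcal C}(\mathcal A)$. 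Diagonalizability plays no role in this direction.

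\emph{Converse when $p=1$.} Assume \eqref{eq:spectral-inclusion} and fix $\mu\in\spec(A_{\mathrm{new}})$. We split into two cases.
\begin{itemize}
\item If $C_{\mathrm{new}}\ker(\mu I-A_{\mathrm{new}})=\{0\}$, then \eqref{eq:complex-visible-inclusion} holds trivially, since the right-hand side is a subspace and contains $0$.
\item Otherwise $\mu\in\spec_{C_{\mathrm{new}}}(A_{\mathrm{new}})\subseteq\spec_{\mathcal C}(\mathcal A)$. Hence $\mathcal C\ker(\mu I-\mathcal A)$ is a nonzero linear subspace of $\mathbb C^p=\mathbb C$, and so it equals all of $\mathbb C$. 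The left-hand side of \eqref{eq:complex-visible-inclusion} is then automatically contained in it.
\end{itemize}
This is precisely where $p=1$ is used: a nonzero subspace of $\mathbb C$ is the whole space. For $p>1$ the two subspaces could be distinct one-dimensional lines, which is why the converse is asserted only in the scalar case.

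The argument is elementary, and diagonalizability is not actually needed for the stated equivalence; it is presumably assumed for consistency with the later use via Corollary~\ref{cor:mix-sentinels}. The only step requiring care is the bookkeeping over the field. Eigenvectors may be complex, so the images in \eqref{eq:complex-visible-inclusion} are complex subspaces. One must also confirm that $\mu\in\spec(\mathcal A)$ in the forward direction, which follows from $w\neq0$ because $\mathcal C w\neq0$.
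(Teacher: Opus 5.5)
Your proof is correct and follows essentially the same route as the paper: the forward direction pushes a visible eigenvector of $A_{\mathrm{new}}$ through \eqref{eq:complex-visible-inclusion}, and the converse for $p=1$ uses the same two-case split together with the fact that a nonzero subspace of $\mathbb C$ is all of $\mathbb C$. Your explicit check that $w\neq 0$ forces $\mu\in\spec(\mathcal A)$, and your remark that diagonalizability is never used, are small clarifications that the paper leaves implicit.
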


\begin{proof}
If $C_{\mathrm{new}}= 0$ then the result is trivial. We thus assume $C_{\mathrm{new}} \neq 0$.
Assume that \eqref{eq:complex-visible-inclusion} holds for every $\mu\in\spec(A_{\mathrm{new}})$. Let $\mu\in\spec_{C_{\mathrm{new}}}(A_{\mathrm{new}})$. Then $\mu\in\spec(A_{\mathrm{new}})$, and there exists
$
v_{\mathrm{new}}\in\ker(\mu I-A_{\mathrm{new}})
$
such that $C_{\mathrm{new}}v_{\mathrm{new}}\neq 0$. By \eqref{eq:complex-visible-inclusion},
\[
C_{\mathrm{new}}v_{\mathrm{new}}
\in
\mathcal C\bigl(\ker(\mu I-\mathcal A)\bigr).
\]
Hence, there exists $v\in\ker(\mu I-\mathcal A)$ such that
$
\mathcal Cv=C_{\mathrm{new}}v_{\mathrm{new}}.
$
Since $C_{\mathrm{new}}v_{\mathrm{new}}\neq 0$, we have $\mathcal Cv\neq 0$. Thus $\mu$ is a $\mathcal C$-visible eigenvalue of $\mathcal A$, namely
$
\mu\in\spec_{\mathcal C}(\mathcal A).
$
This shows the first part of the statement. 

Conversely, let $p=1$ and
$
\spec_{C_{\mathrm{new}}}(A_{\mathrm{new}})
\subseteq
\spec_{\mathcal C}(\mathcal A).
$
Take $\mu\in\spec(A_{\mathrm{new}})$. If $\mu\notin\spec_{C_{\mathrm{new}}}(A_{\mathrm{new}})$, then
\[
C_{\mathrm{new}}\bigl(\ker(\mu I-A_{\mathrm{new}})\bigr)=\{0\},
\]
and \eqref{eq:complex-visible-inclusion} is immediate.
Suppose instead $\mu\in\spec_{C_{\mathrm{new}}}(A_{\mathrm{new}})$. 
By assumption, $\mu\in\spec_{\mathcal C}(\mathcal A)$, and thus
\[
\mathcal C\bigl(\ker(\mu I-\mathcal A)\bigr)\neq\{0\}.
\]
Since $p=1$, the set $\mathcal C(\ker(\mu I-\mathcal A))$ is a nonzero complex subspace of $\mathbb C$. Therefore,
$
\mathcal C\bigl(\ker(\mu I-\mathcal A)\bigr)=\mathbb C.
$
On the other hand,
$
C_{\mathrm{new}}\bigl(\ker(\mu I-A_{\mathrm{new}})\bigr)
\subseteq \mathbb C,
$
and hence \eqref{eq:complex-visible-inclusion} follows. This proves the converse implication.
\end{proof}

Lemma \ref{lem:an.spectr.cond} allows us to restate the condition on the spectral inclusion as the existence of a linear immersion map.

\begin{lemma}[Linear immersion map]\label{lem:spect.cond.for.output.inclusion}
Let $\mathcal A\in\mathbb R^{n\times n}$ and $A_{\mathrm{new}}\in\mathbb R^{\nu \times \nu}$ be diagonalizable. Suppose that, for every $\mu\in\spec(A_{\mathrm{new}})$, condition \eqref{eq:complex-visible-inclusion} holds.
Then there exists a map $T_o:\mathbb R^{\nu}\to\mathbb R^n$ such that
\begin{equation}\label{imm.cond.again.cor}
C_{\mathrm{new}}=\mathcal C T_o,
\quad
T_oA_{\mathrm{new}}=\mathcal A T_o.
\end{equation}
Consequently, every output trajectory of $(A_{\mathrm{new}},C_{\mathrm{new}})$ is also an output trajectory of $(\mathcal A,\mathcal C)$, i.e.
\begin{equation}\label{ouput-containment}
\mathcal{Y}_{T,\mathrm{new}}\subseteq \mathcal{Y}_{T},
\end{equation}
where
\[
\begin{array}{rcl}
\mathcal{Y}_{T,\mathrm{new}} &=& \{\mathcal{O}_T(A_{\mathrm{new}},C_{\mathrm{new}})x_{0,\mathrm{new}} \colon x_{0,\mathrm{new}}  \in \mathbb{R}^\nu\},\\[0.1cm]
\mathcal{Y}_{T} &=& \{\mathcal{O}_T(\mathcal{A},\mathcal{C})x_{0} \colon x_{0} \in \mathbb{R}^{n}\}.
\end{array}
\]
\end{lemma}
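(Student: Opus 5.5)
The plan is to build $T_o$ eigenspace by eigenspace, working over $\mathbb C$, and then to recover a real map. Since $A_{\mathrm{new}}$ is diagonalizable, $\mathbb C^\nu=\bigoplus_{\mu\in\spec(A_{\mathrm{new}})}\ker(\mu I-A_{\mathrm{new}})$. Fix $\mu$ and choose a basis $v_1,\dots,v_k$ of $\ker(\mu I-A_{\mathrm{new}})$. By \eqref{eq:complex-visible-inclusion}, for each $j$ there is $w_j\in\ker(\mu I-\mathcal A)$ with $\mathcal C w_j=C_{\mathrm{new}}v_j$. We define $T_o v_j:=w_j$ and extend linearly to all of $\mathbb C^\nu$, using the direct-sum decomposition. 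On every eigenvector we then have $T_oA_{\mathrm{new}}v_j=\mu w_j=\mathcal A w_j=\mathcal A T_o v_j$ and $\mathcal C T_o v_j=C_{\mathrm{new}}v_j$. Since the eigenvectors span $\mathbb C^\nu$, both identities in \eqref{imm.cond.again.cor} hold for the complex matrix $T_o$. Diagonalizability of $\mathcal A$ is not actually needed for this step; only that of $A_{\mathrm{new}}$ is.

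\textbf{Main obstacle: realness.} The complex construction need not give a real $T_o$. I would handle this in one of two ways.

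The first and simplest fix is to observe that the two conditions in \eqref{imm.cond.again.cor} form a linear system in the unknown $T_o$,
\[
\mathcal C T_o=C_{\mathrm{new}},\qquad T_oA_{\mathrm{new}}-\mathcal A T_o=0,
\]
and all its coefficients are real. By the convention on linear equations adopted earlier in the paper, solvability over $\mathbb C$ implies solvability over $\mathbb R$. Concretely, if $T_o$ is a complex solution, then $\mathrm{Re}(T_o)$ is a real solution.

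As a cross-check, one can instead make the construction conjugation-symmetric. Choose the basis of $\ker(\bar\mu I-A_{\mathrm{new}})$ to be $\bar v_j$ and set $T_o\bar v_j:=\bar w_j$; this is admissible because $\ker(\bar\mu I-\mathcal A)=\overline{\ker(\mu I-\mathcal A)}$. The resulting $T_o$ commutes with complex conjugation and is therefore real.

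\textbf{Output containment.} With a real $T_o$ satisfying \eqref{imm.cond.again.cor}, induction on $k$ gives $T_oA_{\mathrm{new}}^k=\mathcal A^kT_o$, hence $C_{\mathrm{new}}A_{\mathrm{new}}^k=\mathcal C\mathcal A^kT_o$. Stacking these identities for $k=0,\dots,T-1$ yields
\[
\mathcal O_T(A_{\mathrm{new}},C_{\mathrm{new}})=\mathcal O_T(\mathcal A,\mathcal C)\,T_o.
\]
Therefore every $\mathcal O_T(A_{\mathrm{new}},C_{\mathrm{new}})x_{0,\mathrm{new}}$ equals $\mathcal O_T(\mathcal A,\mathcal C)x_0$ with $x_0:=T_ox_{0,\mathrm{new}}\in\mathbb R^n$. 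This proves \eqref{ouput-containment}.
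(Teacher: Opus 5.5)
Your proposal is correct and follows essentially the same route as the paper: the paper also defines the complex map on an eigenvector basis of $A_{\mathrm{new}}$ (as $\mathcal T_o = W E_{\mathrm{new}}^{-1}$, which is exactly your linear extension), checks both identities on eigenvectors, takes $T_o=\mathrm{Re}\{\mathcal T_o\}$, and obtains output containment from $T_oA_{\mathrm{new}}^t=\mathcal A^tT_o$. Your side remark that the diagonalizability of $\mathcal A$ is never used also holds for the paper's argument.
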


\begin{proof}
Let $\{\mu_1,\ldots,\mu_m\}$ be the distinct eigenvalues of $A_{\mathrm{new}}$. For each $\ell=1,\ldots,m$, define
\[
\mathcal E^{\mathrm{new}}_\ell:=\ker(\mu_\ell I-A_{\mathrm{new}}),
\quad
\mathcal E_\ell:=\ker(\mu_\ell I-\mathcal A).
\]
Since $A_{\mathrm{new}}$ is diagonalizable 
\[
\mathbb C^{\nu}=\bigoplus_{\ell=1}^{m} \mathcal E^{\mathrm{new}}_\ell.
\]

Let $d_\ell:=\dim\mathcal E^{\mathrm{new}}_\ell$ and choose a basis
\[
e^{\mathrm{new}}_{\ell,1},\ldots,e^{\mathrm{new}}_{\ell,d_\ell}
\]
of $\mathcal E^{\mathrm{new}}_\ell$. 
By \eqref{eq:complex-visible-inclusion}, for every $e^{\mathrm{new}}_{\ell,j}$ 
there exists $w_{\ell,j}\in\mathcal E_\ell$ such that
\begin{equation}\label{output.identity}
\mathcal Cw_{\ell,j}=C_{\mathrm{new}}e^{\mathrm{new}}_{\ell,j}.
\end{equation}
Collect all these vectors in the matrices
\[
E_{\mathrm{new}}:=\begin{bmatrix}
 e^{\mathrm{new}}_{1,1} \,\, \cdots \,\, e^{\mathrm{new}}_{1,d_1} \,\, \cdots \,\, e^{\mathrm{new}}_{m,1} \,\, \cdots \,\, e^{\mathrm{new}}_{m,d_m}
\end{bmatrix}
\in\mathbb C^{\nu \times \nu},
\]
\[
W:=\begin{bmatrix}
 w_{1,1} \,\, \cdots \,\, w_{1,d_1} \,\, \cdots \,\, w_{m,1} \,\, \cdots \,\, w_{m,d_m}
\end{bmatrix}
\in\mathbb C^{n \times \nu}.
\]
The matrix $E_{\mathrm{new}}$ is invertible because its columns form a basis of $\mathbb C^{\nu}$. Define
\[
\mathcal T_o:=W(E_{\mathrm{new}})^{-1}.
\]
Notice that $\mathcal T_o$ is generally a complex-valued matrix in $\mathbb C^{n \times \nu}$.
By construction, \eqref{output.identity} gives
$
\mathcal CW=C_{\mathrm{new}}E_{\mathrm{new}},
$
and therefore,
$
\mathcal C\mathcal T_oE_{\mathrm{new}}=\mathcal CW=C_{\mathrm{new}}E_{\mathrm{new}}.
$
Since $E_{\mathrm{new}}$ is invertible, this yields
\[
\mathcal C\mathcal T_o=C_{\mathrm{new}}.
\]

We now consider the relation involving the state matrices.
Since each $e^{\mathrm{new}}_{\ell,j}$ is an eigenvector of $A_{\mathrm{new}}$ with eigenvalue $\mu_\ell$, we have
$
A_{\mathrm{new}}E_{\mathrm{new}}=E_{\mathrm{new}}\Lambda_{\mathrm{new}},
$
where
\[
\Lambda_{\mathrm{new}}=\diag(
\underbrace{\mu_1,\ldots,\mu_1}_{d_1},
\ldots,
\underbrace{\mu_m,\ldots,\mu_m}_{d_m})
\in\mathbb C^{ \nu \times \nu }.
\]
Similarly, since $w_{\ell,j}\in\mathcal E_\ell$, we have 
$
\mathcal A w_{\ell,j} =\mu_\ell w_{\ell,j}
$
for all $\ell=1,\ldots,m$ and $j=1,\ldots,d_\ell$. In compact form,
$
\mathcal AW=W\Lambda_{\mathrm{new}}.
$
Thus
\[
\mathcal T_oA_{\mathrm{new}}E_{\mathrm{new}}=\mathcal T_oE_{\mathrm{new}}\Lambda_{\mathrm{new}}=W\Lambda_{\mathrm{new}}.
\]
Moreover,
\[
\mathcal A \mathcal T_oE_{\mathrm{new}}=\mathcal AW=W\Lambda_{\mathrm{new}}.
\]
Hence,
$
\mathcal T_o A_{\mathrm{new}}E_{\mathrm{new}}=\mathcal A \mathcal T_oE_{\mathrm{new}}.
$
Since $E_{\mathrm{new}}$ is invertible, we conclude that
\[
\mathcal T_oA_{\mathrm{new}}=\mathcal A \mathcal T_o.
\]
Define $T_o := \text{Re}\{\mathcal T_o\}$. Since $\mathcal A,\mathcal C,A_{\mathrm{new}}$ and $C_{\mathrm{new}}$ are all real matrices, taking real parts in
$
\mathcal C\mathcal T_o=C_{\mathrm{new}},
$
and 
$
\mathcal T_oA_{\mathrm{new}}=\mathcal A\mathcal T_o
$
gives
\[
\mathcal C T_o=C_{\mathrm{new}},
\quad
T_oA_{\mathrm{new}}=\mathcal A T_o.
\]
This shows \eqref{imm.cond.again.cor}.

Finally, let $x_{0}\in\mathbb R^{\nu}$ be an arbitrary initial state and consider the corresponding output trajectory of $(A_{\mathrm{new}},C_{\mathrm{new}})$. By \eqref{imm.cond.again.cor}, the same output trajectory is generated by the initial condition $T_o x_{0}$ for the system $(\mathcal A,\mathcal C)$. Thus, every output trajectory of $(A_{\mathrm{new}},C_{\mathrm{new}})$ is an output trajectory of $(\mathcal A,\mathcal C)$.
Indeed, using $T_oA_{\mathrm{new}}=\mathcal A T_o$, we obtain
\[
T_oA_{\mathrm{new}}^t=\mathcal A^tT_o
\]
for all $t\geq 0$. Hence,
\[
C_{\mathrm{new}}A_{\mathrm{new}}^t x_{0}
=
\mathcal C\mathcal A^t T_o x_{0},
\]
for all $t\geq 0$. This concludes the proof.
\end{proof}

\begin{remark}[Further equivalences to the spectral condition]
We conclude this section by pointing out the equivalence of the properties in Lemma \ref{lem:an.spectr.cond} and \ref{lem:spect.cond.for.output.inclusion} to the spectral condition. Consider the case of $\mathcal{A}$ and $A_{\mathrm{new}}$ diagonalizable, 
scalar outputs ($p=1$) and sufficiently long trajectories 
($T>\operatorname{card}\bigl(\spec_{\mathcal{C}}(\mathcal{A})\bigr)$). The properties \eqref{eq:complex-visible-inclusion}, \eqref{eq:spectral-inclusion},  \eqref{imm.cond.again.cor} and \eqref{ouput-containment} are all equivalent: 
\[\begin{array}{rcccl}
&
\eqref{eq:complex-visible-inclusion} &
\stackrel{\mathrm{Lem}.\ref{lem:an.spectr.cond}}{\Longleftrightarrow} & \eqref{eq:spectral-inclusion} & \\
\mathrm{Lem.\ref{lem:spect.cond.for.output.inclusion}}
\!\!\!\!\!
&
\Downarrow & 
& \Updownarrow & \!\!\!\!\!\!\!\mathrm{Cor.\ref{cor:mix-sentinels}\;\&\; Prop.\ref{prop:necessity}}\\
&
\eqref{imm.cond.again.cor} & 
\stackrel{\mathrm{Lem}.\ref{lem:spect.cond.for.output.inclusion}}{\Longrightarrow}
& 
\eqref{ouput-containment} &
\end{array}\]
\end{remark}

\subsection{Immersible systems and prediction of nonlinear trajectories from linear templates}

The output inclusion property is closely connected with the notion of \emph{immersion} for nonlinear systems \cite{fliess2005finite,isidori1995nonlinear}. 
Consider nonlinear dynamical systems of the form
\begin{equation}
x_{t+1}=f(x_t), \quad y_t=h(x_t),
\end{equation}
where $x\in\mathcal M$ is the state and $y\in\mathbb R^p$ is the output. We refer to such systems by the triplet $(f,h,\mathcal M)$.

\begin{definition}[Immersion] \label{def:immersion}
A system $(f,h,\mathcal M)$ is said to be immersed into a 
system $(\overline f, \overline h, \overline{\mathcal M})$ if there exists a map
\[
\tau: \mathcal M \to \overline{\mathcal M},
\quad
\tau(0)=0,
\]
such that
\begin{equation}\label{imm.cond.nl}
h(x)
=
\overline h(\tau(x)),
\quad
\tau(f(x))
=
\overline f(\tau(x))
\end{equation}
for all $x \in \mathcal M$. In this case, $(\overline f, \overline h,\overline{\mathcal M})$ is called ambient or embedding system.
\end{definition}

For a system $(f,h,\mathcal M)$, define the $T$-long output trajectory generated from $x\in\mathcal M$ by
\[
\mathcal O_T(f,h,x):=
\begin{bmatrix}
h(x)\\
h(f(x))\\
\vdots\\
h(f^{T-1}(x))
\end{bmatrix},
\]
where $f^0(x)=x$ and $f^k$ denotes the $k$-fold composition of $f$. We also define
\begin{equation}\label{set.output.trajcs.finite.length.nl}
\begin{split}
\overline{\mathcal Y}_{T}
&:=
\{\mathcal O_T(\overline f, \overline h, \overline x):\overline x\in\overline{\mathcal M}\},\\[0.1cm]
\mathcal Y_{T}
&:=
\{\mathcal O_T(f,h,x): x \in\mathcal M\}.
\end{split}
\end{equation}
If \eqref{imm.cond.nl} holds, then
\[
\tau(f^k(x))
=
{\overline f}{}^k(\tau(x))
\]
for every $k\geq 0$. Consequently,
\[
h(f^k(x))
=
\overline h({\overline f}{}^k(\tau(x))),
\quad
k=0,\ldots,T-1.
\]
Thus every length-$T$ output trajectory of the immersed system is generated by the ambient system from the initial condition 
$\tau(x)$. Therefore,
\begin{equation}
\mathcal Y_{T}\subseteq \overline{\mathcal Y}_{T},
\end{equation}
which is an output inclusion property.

We now specialize our study to the case where the ambient system is \emph{finite-dimensional and linear}. In particular, we show that nonlinear systems with a finite-dimensional observation space can be represented by a finite-dimensional linear system built from the future output functions of the nonlinear system.

Consider a nonlinear system $(f,h,\mathcal M)$ with scalar output, namely
$h:\mathcal M\to\mathbb R$. For each $k\geq 0$, define the future output map
\[
\psi_k:\mathcal M\to\mathbb R,
\quad
\psi_k(x):=h(f^k(x)).
\]
The observation space of $(f,h,\mathcal M)$ is the real vector space of real-valued functions on $\mathcal M$ that are generated by these future output maps:
\[
\mathcal O
:=
\operatorname{span}\{\psi_k:k\geq 0\}.
\]
Equivalently,
\[
\mathcal O
=
\left\{
\phi(x) =
\sum_{k=0}^{s}\alpha_k h(f^k(x)),
\ 
s\in\mathbb N,\ 
\alpha_k\in\mathbb R
\right\}.
\]

\begin{definition}[Finite-dimensional observation space]
The system $(f,h,\mathcal M)$ has a finite-dimensional observation space if there exist an integer $q\geq 1$ and real numbers $a_0,a_1,\ldots,a_{q-1}$ such that
\begin{equation} \label{eq:finite-dim-obsv-space}
h(f^q(x))
+
\sum_{k=0}^{q-1} a_k h(f^k(x))
=0
\end{equation}  
for all $x\in\mathcal M$. 
We take $q$ to be the minimal integer for which such a relation holds.
\end{definition}

The minimality requirement in Definition~4 avoids introducing spurious
modes. Indeed, if
\[
m(\sigma)=\sigma^q+\sum_{k=0}^{q-1}a_k\sigma^k
\]
is the polynomial associated with the minimal relation
\[
h(f^{q}(x))+\sum_{k=0}^{q-1}a_k h(f^{k}(x))=0,
\quad
x\in \mathcal M,
\]
then any polynomial multiple $\widetilde m(\sigma)=p(\sigma)m(\sigma)$
also gives a valid, but nonminimal, relation. Its roots contain the roots
of $m$ together with the roots of $p$. Hence a nonminimal choice may add
spectral values that are not intrinsic output modes, making the condition
$\spec_h(f)\subseteq\spec_{\mathcal C}(\mathcal A)$ considered in next Proposition \ref{prop:nonlinear-output-containment}
more restrictive.

If the system has a finite-dimensional observation space, we define
\begin{equation} \label{eq:nonl-spec}
\spec_{h}(f)
:=
\left\{
\sigma\in\mathbb C:
\sigma^q + \sum_{k=0}^{q-1} a_k \sigma^{k} =0 \right\}.
\end{equation}

Let $(\mathcal A,\mathcal C)$ be the linear template system associated with the library used for prediction. In the aggregate case of Section~\ref{subsec:mix-agg}, $(\mathcal A,\mathcal C)=(A_{\mathrm{mix}},C_{\mathrm{mix}})$; in the interconnected case, $(\mathcal A,\mathcal C)$ is obtained from the corresponding interconnection construction.  \\
We define
\begin{equation}\label{HY}
H_Y:=\mathcal{O}_T(\mathcal A,\mathcal C)\mathcal X,
\end{equation}
where $\mathcal{O}_T(\mathcal A,\mathcal C)\in\mathbb R^{pT\times n}$ and $\mathcal X$ is the matrix whose columns are the initial states used to generate the sample trajectories in $H_Y$. Recall the following standard relation.

\begin{lemma}\label{lem.rich.data}
If $\mathcal X$ has full row rank, then the trajectory library satisfies 
$
\im(H_Y) =\{\mathcal O_T(\mathcal A,\mathcal C,x):x\in\mathbb R^n\}.
$
\end{lemma}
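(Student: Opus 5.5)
The plan is to prove the two inclusions between $\im(H_Y)$ and the set of length-$T$ output trajectories of $(\mathcal A,\mathcal C)$ directly from the factorization \eqref{HY}. Here $\mathcal O_T(\mathcal A,\mathcal C,x)$ is read as $\mathcal O_T(\mathcal A,\mathcal C)\,x$, consistent with the nonlinear notation $\mathcal O_T(f,h,x)$ specialized to $f(x)=\mathcal A x$, $h(x)=\mathcal C x$. The argument parallels Lemma~\ref{lem:willems.aut} and the richness step in the proof of Proposition~\ref{prop:mix-sentinels}.

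\emph{First inclusion.} Take any $v\in\im(H_Y)$, so $v=H_Y g$ for some real $g$. By \eqref{HY}, $v=\mathcal O_T(\mathcal A,\mathcal C)(\mathcal X g)$. Setting $x:=\mathcal X g\in\mathbb R^n$ shows that $v$ is the output trajectory generated from $x$. This direction holds regardless of the rank of $\mathcal X$.

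\emph{Second inclusion.} Take any $x\in\mathbb R^n$. Since $\mathcal X\in\mathbb R^{n\times N}$ has full row rank $n$, its image is all of $\mathbb R^n$. Hence there is a real $g$ with $\mathcal X g=x$; for instance $g=\mathcal X^\top(\mathcal X\mathcal X^\top)^{-1}x$, which is well defined because $\mathcal X\mathcal X^\top$ is invertible. Then $\mathcal O_T(\mathcal A,\mathcal C)x=\mathcal O_T(\mathcal A,\mathcal C)\mathcal X g=H_Y g\in\im(H_Y)$.

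Combining the two inclusions gives the equality $\im(H_Y)=\mathcal O_T(\mathcal A,\mathcal C)\,\im(\mathcal X)=\im(\mathcal O_T(\mathcal A,\mathcal C))$. There is no real obstacle here. The only point needing care is that everything stays over $\mathbb R$, as the paper's convention on images requires. Since $\mathcal A$, $\mathcal C$ and $\mathcal X$ are real and the right inverse above is real, this holds automatically.
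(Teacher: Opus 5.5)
Your proof is correct and matches the paper's reasoning. The paper states this lemma without proof as a standard fact, but it uses the same argument — full row rank gives $\im(\mathcal X)=\mathbb R^n$, hence $\im(H_Y)=\im(\mathcal O_T(\mathcal A,\mathcal C))$ — in Lemma~\ref{lem:key}(ii) and in the richness step of the proof of Proposition~\ref{prop:mix-sentinels}.
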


We then have the following result. 

\begin{proposition}[Exact prediction for nonlinear systems]\label{prop:nonlinear-output-containment}
Assume that:
\begin{enumerate}
\item The system $(f,h,\mathcal M)$ has a finite-dimensional observation 
space and $\spec_{h}(f)$ is made of distinct elements.
\item The linear system $(\mathcal A,\mathcal C)$ is diagonalizable.
\item Both systems have scalar outputs, i.e., $p=1$.
\item The matrix $\mathcal X$ in \eqref{HY} has full row rank.
\end{enumerate}
Let $\mathcal Y_{T}$ denote the set of length-$T$ output trajectories of the nonlinear system $(f,h,\mathcal M)$ defined as in \eqref{set.output.trajcs.finite.length.nl}, and let $H_Y$ be the library in \eqref{HY}. If
\[
\spec_{h}(f)
\subseteq
\spec_{\mathcal C}(\mathcal A),
\]
then
\[
\mathcal Y_{T}\subseteq\im(H_Y).
\]
Accordingly, let $r\geq s_{\mathrm{lib}}$ where  
$s_{\mathrm{lib}}$
is the observability index of $(\mathcal A,\mathcal C)$. 
Then, for every length-$T$ trajectory
\[
Y^\star=
\begin{bmatrix}
Y_{\mathrm{past}}\\
Y_{\mathrm{fut}}
\end{bmatrix}
\in\mathcal Y_{T},
\]
any solution $g$ of
\[
H_pg=Y_{\mathrm{past}}
\]
satisfies
\[
H_fg=Y_{\mathrm{fut}}.
\]
Hence the predicted future 
$
\widehat Y_{\mathrm{fut}}:=H_fg
$
is exact.
\end{proposition}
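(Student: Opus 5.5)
The plan is to show that every output trajectory of the nonlinear system is a linear combination of Vandermonde vectors $v_T(\sigma)$, $\sigma\in\spec_h(f)$. Then I reuse the mechanism of Proposition~\ref{prop:mix-sentinels} to place these vectors in $\im(\mathcal O_T(\mathcal A,\mathcal C))=\im(H_Y)$. Exactness of prediction then follows from Theorem~\ref{thm:mix-ool-exact} (in its interconnection form, Theorem~\ref{thm:exact_prediction_interconnection_library}).

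\emph{Step 1 (output sequences satisfy a linear recurrence).} Fix $x\in\mathcal M$ and set $z_k:=h(f^k(x))$. Applying \eqref{eq:finite-dim-obsv-space} at the point $f^j(x)\in\mathcal M$ gives $z_{j+q}+\sum_{k=0}^{q-1}a_k z_{j+k}=0$ for all $j\ge 0$. Thus $(z_k)$ solves a scalar linear difference equation whose characteristic polynomial is $m(\sigma)$. Since the roots $\sigma_1,\dots,\sigma_q$ of $m$ are distinct by assumption~1, the general solution is $z_k=\sum_{\ell=1}^q c_\ell\sigma_\ell^k$ with $c_\ell\in\mathbb C$. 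Equivalently, the state $(z_0,\dots,z_{q-1})$ evolves under the companion matrix of $m$, which is diagonalizable. Hence $\mathcal O_T(f,h,x)=\sum_\ell c_\ell v_T(\sigma_\ell)$. The case $\sigma_\ell=0$ needs no special treatment, since $v_T(0)=e_1$ is still the correct mode.

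\emph{Step 2 (modes lie in the library span).} For each $\sigma_\ell\in\spec_h(f)\subseteq\spec_{\mathcal C}(\mathcal A)$, pick $w_\ell\in\ker(\sigma_\ell I-\mathcal A)$ with $\mathcal C w_\ell\neq0$. Then $\mathcal O_T(\mathcal A,\mathcal C)w_\ell=(\mathcal Cw_\ell)v_T(\sigma_\ell)$, so $v_T(\sigma_\ell)$ lies in the complex image of $\mathcal O_T(\mathcal A,\mathcal C)$. Consequently the real vector $\mathcal O_T(f,h,x)$ lies in that complex image. By the real/complex solvability convention of Section~II, it also lies in the real image. Lemma~\ref{lem.rich.data} with assumption~4 then gives $\mathcal O_T(f,h,x)\in\im(H_Y)$, which proves $\mathcal Y_T\subseteq\im(H_Y)$. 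Alternatively, one could build a companion realization $(A_c,C_c)$ of the output and invoke Corollary~\ref{cor:mix-sentinels} or Lemma~\ref{lem:spect.cond.for.output.inclusion}, since $\spec_{C_c}(A_c)\subseteq\spec_h(f)$. The direct Vandermonde argument is cleaner.

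\emph{Step 3 (exact prediction).} Since $r\ge s_{\mathrm{lib}}$, Lemma~\ref{lem:key}(i) applied to $(\mathcal A,\mathcal C)$ yields $L$ with $H_f=LH_p$. Take $Y^\star\in\mathcal Y_T$. Step~2 gives $Y^\star=H_Y g^\star$, hence $Y_{\mathrm{past}}=H_pg^\star$ and $Y_{\mathrm{fut}}=H_fg^\star=LY_{\mathrm{past}}$. For any $g$ with $H_pg=Y_{\mathrm{past}}$ we then have $H_fg=LH_pg=LY_{\mathrm{past}}=Y_{\mathrm{fut}}$.

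The main obstacle is Step~1. The relation \eqref{eq:finite-dim-obsv-space} must be propagated along the orbit, which only requires $f(\mathcal M)\subseteq\mathcal M$. The distinctness hypothesis is what rules out the polynomial modes $k\sigma^k$ that would not be covered by a diagonalizable template. Minimality of $q$ is not actually needed for inclusion; it only makes the spectral hypothesis weaker.
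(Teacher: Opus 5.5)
Your proof is correct. It shares the paper's overall skeleton: exhibit the nonlinear outputs as outputs of a finite-dimensional linear system with spectrum $\spec_h(f)$, transfer them into the template system using spectral inclusion, then apply data richness and Lemma~\ref{lem:key}. The middle step is handled differently.

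The paper immerses $(f,h,\mathcal M)$ into the companion pair $(\overline A,\overline C)$ via $\tau(x)=[h(x);\dots;h(f^{q-1}(x))]$ and notes that $\overline A$ is diagonalizable with $\spec_{\overline C}(\overline A)=\spec_h(f)$. It then uses the $p=1$ converse of Lemma~\ref{lem:an.spectr.cond} to get the eigenspace inclusion \eqref{eq:complex-visible-inclusion}. From that, Lemma~\ref{lem:spect.cond.for.output.inclusion} builds a real intertwining map $T_o$ with $\mathcal C T_o=\overline C$ and $T_o\overline A=\mathcal A T_o$. Output containment, Lemma~\ref{lem.rich.data} and the continuation map finish exactly as in your Step~3.

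You skip the intertwining map and expand the output directly in Vandermonde modes, reusing the mechanism in the proof of Proposition~\ref{prop:mix-sentinels}. You then need only a single real-part step at the level of one vector. Your alternative remark about a companion realization is essentially the paper's route.

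What your route buys is brevity and transparency. It visibly never uses assumption~2, the diagonalizability of $\mathcal A$; it needs only one $\mathcal C$-visible eigenvector for each $\sigma\in\spec_h(f)$. The same is true of the paper's argument, since the proof of Lemma~\ref{lem:spect.cond.for.output.inclusion} uses diagonalizability only of the new system's matrix, but this is hidden behind the lemma hypotheses.

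What the paper's route buys is a state-level statement. The composite $x\mapsto T_o\tau(x)$ is a genuine immersion of $(f,h,\mathcal M)$ into the template system $(\mathcal A,\mathcal C)$ in the sense of Definition~\ref{def:immersion}. This fits the immersion narrative of that section and gives an explicit initial state of $(\mathcal A,\mathcal C)$ that reproduces each nonlinear trajectory.
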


\begin{proof}
Define
$
\tau:\mathcal M \to\mathbb R^q
$
by
\[
\tau(x):=
\begin{bmatrix}
h(x)\\
h(f(x))\\
\vdots\\
h(f^{q-1}(x))
\end{bmatrix}.
\]
By assumption, condition \eqref{eq:finite-dim-obsv-space} holds 
for all $x\in\mathcal M$. Hence $\tau$ satisfies the immersion identities \eqref{imm.cond.nl} with the finite-dimensional linear ambient system
$(\overline A,\overline C)$  where
\[
\overline C=
\begin{bmatrix}
1&0&\cdots&0
\end{bmatrix} \in \mathbb{R}^{1 \times q}
\]
and
\[
\overline A=
\begin{bmatrix}
0&1&0&\cdots&0\\
0&0&1&\cdots&0\\
\vdots&\vdots&\vdots&\ddots&\vdots\\
0&0&0&\cdots&1\\
-a_0&-a_1&-a_2&\cdots&-a_{q-1}
\end{bmatrix} \in \mathbb{R}^{q \times q}.
\]
Indeed, 
$
\overline C\tau(x)=h(x)
$,
and
\[
\tau(f(x))
=
\begin{bmatrix}
h(f(x))\\
h(f^2(x))\\
\vdots\\
h(f^q(x))
\end{bmatrix}
=
\overline A\tau(x).
\]
Hence, system $(f,h,\mathcal M)$ is immersed into the linear system 
$(\overline A,\overline C,\mathbb R^q)$ according to Definition \ref{def:immersion}. If
\[
\overline{\mathcal Y}_T
:=
\{\mathcal O_T(\overline A,\overline C,z):z\in\mathbb R^q\},
\]
then the immersion property gives
\[
\mathcal Y_{T}\subseteq \overline{\mathcal Y}_T.
\]

We now compare the linear system $(\overline A,\overline C)$ with the template system $(\mathcal A,\mathcal C)$. 
The pair $(\overline A,\overline C)$ is observable. Moreover, $\overline A$ is in companion form, and its characteristic polynomial is
\[
\lambda^q+a_{q-1}\lambda^{q-1}+\cdots+a_1\lambda+a_0.
\]
Hence, its eigenvalues are precisely the elements of $\spec_{h}(f)$ in \eqref{eq:nonl-spec}. Since these elements are distinct by assumption, $\overline A$ is diagonalizable. Observability implies that all the eigenvalues of $\overline A$ are $\overline C$-visible, and thus
\[
\spec_{\overline C}(\overline A)
=
\spec(\overline A)
=
\spec_{h}(f).
\]
Thus, the condition
$
\spec_{h}(f)
\subseteq
\spec_{\mathcal C}(\mathcal A)
$
is equivalent to
\[
\spec_{\overline C}(\overline A)
\subseteq
\spec_{\mathcal C}(\mathcal A).
\]
Since both systems have scalar outputs, Lemma \ref{lem:an.spectr.cond} implies that
\[
\overline C\bigl(\ker(\mu I-\overline A)\bigr)
\subseteq
\mathcal C\bigl(\ker(\mu I-\mathcal A)\bigr)
\]
for every $\mu\in\spec(\overline A)$. Therefore, by Lemma \ref{lem:spect.cond.for.output.inclusion}, every output trajectory of $(\overline A,\overline C)$ is also an output trajectory of $(\mathcal A,\mathcal C)$. Thus
$
\overline{\mathcal Y}_T\subseteq \{\mathcal O_T(\mathcal A,\mathcal C,x):x\in\mathbb R^n\}.
$
Combining this inclusion with $\mathcal Y_{T}\subseteq\overline{\mathcal Y}_T$, we obtain
\[
\mathcal Y_{T}\subseteq \{\mathcal O_T(\mathcal A,\mathcal C,x):x\in\mathbb R^n\}.
\]
By assumption, $\mathcal X$ has full row rank. Then Lemma \ref{lem.rich.data} implies that
$
\{\mathcal O_T(\mathcal A,\mathcal C,x):x\in\mathbb R^n\}=\im(H_Y).
$
Therefore,
\[
\mathcal Y_{T}\subseteq\im(H_Y).
\]

It remains to prove that the prediction mechanism is correct. 
From the inclusion just proved, $Y^\star\in\im(H_Y)$. Hence there exists $g^\star$ such that
$
Y^\star=H_Yg^\star.
$
Equivalently, 
\[
Y_{\mathrm{past}}=H_pg^\star,
\quad
Y_{\mathrm{fut}}=H_fg^\star.
\]
Since $Y^\star\in\im(H_Y)$ is an admissible $T$-long trajectory of the template system $(\mathcal A,\mathcal C)$, Lemma \ref{lem:key} applies and the condition $r\geq s_{\mathrm{lib}}$ ensures that $\widehat Y_{\mathrm{fut}}:=H_fg$ gives the correct prediction for any $g$ satisfying $H_pg=Y_{\mathrm{past}}$.
\end{proof}

\begin{example}[Nonlinear prediction from linear templates]
The example is taken from \cite{Surana2016KoopmanObservers}.
Consider the nonlinear system with state and output maps
\begin{equation}
\label{exmpl:nonl}
f(x) =
\begin{bmatrix}
\rho x_1\\
\mu x_2+(\rho^2-\mu)c x_1^2
\end{bmatrix}, \quad 
h(x)=x_1^2+x_2 .
\end{equation}
where $c,\rho,\mu$ are constant parameters. 

We first verify that the system has a finite-dimensional observation space. Define
\[
z_1(x):=x_2-cx_1^2,
\quad
z_2(x):=x_1^2.
\]
Then
\[
z_1(f(x))=\mu z_1(x),
\quad
z_2(f(x))=\rho^2 z_2(x).
\]
Iterating these identities gives
\[
z_1(f^k(x))=\mu^k z_1(x),
\quad
z_2(f^k(x))=\rho^{2k} z_2(x).
\]
Moreover,
\[
h(x)=z_1(x)+(c+1)z_2(x).
\]
Consequently, for every $k\geq 0$,
\[
h(f^k(x))
=
\mu^k(x_2-cx_1^2)
+
(c+1)\rho^{2k}x_1^2 .
\]
Thus, the output functions are generated by the two modes $\mu$ and $\rho^2$.  
In particular, if $\mu\neq\rho^2$ and $c+1\neq 0$, \footnote{The case when $\mu=\rho^2$ or $c+1= 0$ would give $q=1$. In this example we examine the case $q=2$.} the observation space has order $q=2$
and the output functions satisfy
\[
h(f^2(x))
-(\mu+\rho^2)h(f(x))
+\mu\rho^2 h(x)=0.
\]
Hence, 
\[
\spec_{h}(f)=\{\sigma\in \mathbb{C}\colon \sigma^2 -(\mu+\rho^2) \sigma
+\mu\rho^2 =0\},
\]
which is made of distinct elements.

We construct a library by aggregating $30$ sample trajectories generated by three second-order template systems ($10$ trajectories per system).
The systems are given by
\[
A_1=\begin{bmatrix} \rho & 0 \\ 0 & 0.15 \end{bmatrix}, \,
A_2=\begin{bmatrix} \tilde \mu & 0 \\ 0 & -0.25 \end{bmatrix}, \,
A_3=\begin{bmatrix} \tilde \rho^2 & 0 \\ 0 & 0.35 \end{bmatrix},
\]
with 
\[
\widetilde\mu=\mu+\delta,
\quad
\widetilde{\rho}=\rho+\delta,
\]
where $\delta\geq 0$ is a scalar that 
accounts for a mismatched library. 
The output matrix is assumed to be the same for all the template systems: $C=\begin{bmatrix} 1 & 1 \end{bmatrix}$. 

When $\delta=0$, the library contains the visible eigenvalues of the ambient system, namely $\mu$ and $\rho^2$, together with additional modes that are not needed to represent the nonlinear output. Therefore, the condition $\spec_{h}(f) \subseteq
\spec_{\mathcal C}(\mathcal A)$ is satisfied, and we can apply Proposition~\ref{prop:nonlinear-output-containment}.
For a trajectory length $T=10$ and a past window length $r=6$ ($r\geq s_{\mathrm{lib}}=6$), Proposition~\ref{prop:nonlinear-output-containment} implies that the future output window is recovered exactly from the template library.

Next, we consider a mismatched library. To that purpose, we recall a few facts from the proof of  Proposition~\ref{prop:nonlinear-output-containment}. We define the immersion map
\[
\tau(x):=
\begin{bmatrix}
h(x)\\
h(f(x))
\end{bmatrix}.
\]
The recurrence above gives
\[
\tau(f(x))
=
\begin{bmatrix}
h(f(x))\\
h(f^2(x))
\end{bmatrix}
=
\overline A\tau(x),
\]
where
\[
\overline A=
\begin{bmatrix}
0&1\\
-\mu\rho^2&\mu+\rho^2
\end{bmatrix},
\quad
\overline C=
\begin{bmatrix}
1&0
\end{bmatrix}.
\]
Moreover,
\[
h(x)=\overline C\tau(x).
\]
Hence every output trajectory of the nonlinear system is also an output trajectory of the linear ambient system $(\overline A,\overline C)$. The eigenvalues of $\overline A$ are precisely $\mu$ and $\rho^2$. Consequently, any error bound obtained by applying
Corollary~\ref{cor:out-of-lib-error} to the ambient system
$(\overline A,\overline C)$ also applies to the nonlinear trajectories,
because these trajectories form a subset of the output trajectories of
the ambient system.

We take $\rho=0.8$, $\mu=0.5$, and $c=0.7$. For each value of $\delta$, the library is rebuilt by simulating the same template systems with the perturbed eigenvalues. The sample initial conditions are kept fixed, so the variation in the prediction error is due only to the spectral mismatch and not to a change in the sampled data.
For each $\delta$, the nonlinear system is still immersed into the same ambient system $(\overline A,\overline C)$, but this ambient system is no longer exactly represented by the template library. In this case, Corollary~\ref{cor:out-of-lib-error} can be applied by comparing the ambient system $(\overline A,\overline C)$ with the aggregate system built from the template systems $(A_i,C)$. The observability index $\overline s$ of $(\overline A,\overline C)$ is $2$, hence $r\ge \overline s$, as requested by Corollary~\ref{cor:out-of-lib-error}. 
For each $\delta$, we test $N_r=200$ initial states uniformly distributed within $[-3,3]^2$. For the $i$-th trajectory, let $Y_{\mathrm{past}}^{(i)}$ and $Y_{\mathrm{fut}}^{(i)}$ denote the true past and future output windows, and let $\widehat Y_{\mathrm{fut}}^{(i)}$ be the future predicted from the mismatched linear template library. The prediction error is
\[
e_{\mathrm{pred}}^{(i)}(\delta):=\left\|
Y_{\mathrm{fut}}^{(i)} - \widehat Y_{\mathrm{fut}}^{(i)}
\right\|_2 .
\]

Figure~\ref{fig:nonlinear-immersion-example} shows one representative nonlinear trajectory and the corresponding prediction obtained from the mismatched library obtained with $\delta=0.05$. Table~\ref{tab:nonlinear-mismatch-bound} reports, for increasing values of $\delta$, the average and worst-case prediction errors over the $N_r$ tested initial conditions.

\end{example}

\begin{figure}[t]
\centering
\includegraphics[width=0.85\linewidth]{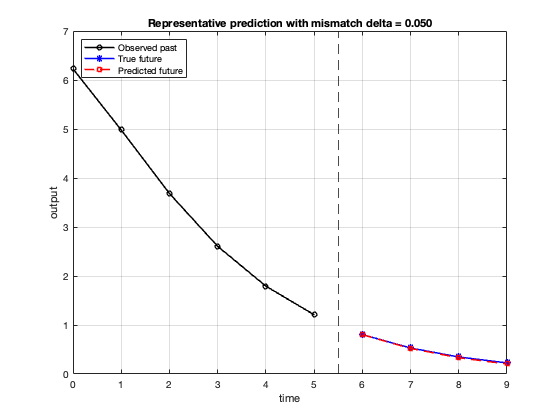}
\caption{Prediction of a nonlinear output trajectory from a mismatched linear template library.}
\label{fig:nonlinear-immersion-example}
\end{figure}

\begin{table}[t]
\centering
\caption{Prediction error for increasing spectral mismatch.}
\label{tab:nonlinear-mismatch-bound}
\begin{tabular}{c c c c c c}
\hline
 & $\delta=0$ & $\delta=0.025$ & $\delta=0.050$ & $\delta=0.075$ & $\delta=0.100$ \\
\hline \hline
$\overline e_{\mathrm{pred}}$ 
& $0$ & $5.6\cdot10^{-3}$ & $1.0\cdot10^{-2}$ & $1.4\cdot10^{-2}$ & $1.7\cdot10^{-2}$ \\ \hline
$e_{\mathrm{pred}}^{\max}$ 
& $0$ & $1.7\cdot10^{-2}$ & $3.2\cdot10^{-2}$ & $4.5\cdot10^{-2}$ & $5.5\cdot10^{-2}$  \\
\hline
\end{tabular}
\end{table}

\section{Conclusions}

This paper developed a framework for trajectory prediction from stored
templates. We showed that libraries of output trajectories can act as
prediction machines by generating behavioral spaces from which future
outputs are inferred from observed past windows. For linear systems, we
characterized exact prediction through continuation maps, behavioral
containment, and spectral inclusion conditions, and we quantified the
effect of noise and out-of-library mismatch through explicit error
bounds. We also showed that template libraries can generalize
compositionally through interconnection constraints, producing emergent
modes not present in the original atomic libraries, and that the same
prediction principle extends to nonlinear systems whose output behaviors
are contained in, or immersed into, finite-dimensional linear behaviors.

Several directions remain open. A first one is the extension of the
theory to broader classes of nonlinear systems, beyond exact output
containment or finite-dimensional linear immersions. A second direction is
the design of template libraries with prescribed prediction guarantees.
When exact prediction is impossible, Proposition~\ref{prop:necessity}
shows that spectral inclusion is a fundamental obstruction; in that
regime, one may instead seek libraries that cover a desired family of
target behaviors up to a prescribed error threshold. This leads naturally
to covering and library-design problems: how many templates are needed,
where they should be placed, and how they should be selected so as to
guarantee a desired bound on the prediction error. Other important directions are  the derivation of  analogous results for systems with inputs, the development  of on-line recursive forms for the machines and the  design of templates for control purposes.

\section*{Appendix A. Proof of Fact \ref{fact:rank.stability}}
For compactness set $\mathcal{O}_k:=\mathcal{O}_k(A,C)$. 
By definition of observability index, for every \(k \ge s\) we have
\[
\im\!\bigl(\mathcal O_k^\top\bigr)
=
\im\!\bigl(\mathcal O_s^\top\bigr).
\]
In particular, since $\ell+1 \geq r \geq s$, it follows that
\[
\im\!\bigl(\mathcal O_{\ell+1}^\top\bigr)
=
\im\!\bigl(\mathcal O_s^\top\bigr)
=
\im\!\bigl(\mathcal O_r^\top\bigr).
\]
Since $CA^\ell$ is the last block row of $\mathcal O_{\ell+1}$, every row of $CA^\ell$ belongs to
$
\im\!\bigl(\mathcal O_{\ell+1}^\top\bigr)
=
\im\!\bigl(\mathcal O_r^\top\bigr).
$
Therefore, every row of \(CA^\ell\) can be written as a linear combination of the rows of $\mathcal O_r$. 
Collecting these coefficients row by row, there exists a matrix $Q_{\ell}$ such that
$
CA^\ell = Q_{\ell} \mathcal O_r.
$
This concludes the proof.

Note that $Q_{\ell}$ generally depends on $r$ (equivalently, on $\mathcal O_r$).
We omit the dependence on $r$ in the notation $Q_\ell$, since $r$ is fixed throughout the paper.

\section*{Appendix B. Least squares solutions}

\begin{theorem}\label{thm.least.squares}
The following hold:
\begin{enumerate}
\item The vector $\hat g\in \R^N$ solves $\min_{g\in \R^N} \frac{1}{2}\|Z_{\mathrm{past}} -H_p g\|_2^2$ if and only if $H_p^\top H_p \hat g =H_p^\top Z_{\mathrm{past}}$. 
\item 
A solution to $H_p^\top H_p \hat g =H_p^\top Z_{\mathrm{past}}$ always exists and is given by $\hat g = H_p^\dag Z_{\mathrm{past}} + (I_N-H_p^\dag H_p)\gamma$, where $\gamma$ is any vector in $\R^{N}$ and $(\cdot)^\dagger$ is the Moore--Penrose
pseudoinverse 
\end{enumerate}
\end{theorem}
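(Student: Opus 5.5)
The plan is to handle the two items separately, using only the orthogonal decomposition $\R^{rp}=\im(H_p)\oplus\ker(H_p^\top)$ and the defining Penrose identities of $H_p^\dagger$: $H_pH_p^\dagger H_p=H_p$, $H_p^\dagger H_pH_p^\dagger=H_p^\dagger$, and symmetry of both $H_pH_p^\dagger$ and $H_p^\dagger H_p$.

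For item 1, write $J(g):=\frac12\|Z_{\mathrm{past}}-H_pg\|_2^2$. Fix $\hat g$ and expand $J(\hat g+h)$ for an arbitrary increment $h$:
\[
J(\hat g+h)=J(\hat g)-h^\top H_p^\top(Z_{\mathrm{past}}-H_p\hat g)+\tfrac12\|H_ph\|_2^2 .
\]
If $\hat g$ satisfies the normal equations $H_p^\top H_p\hat g=H_p^\top Z_{\mathrm{past}}$, the linear term vanishes for every $h$. Hence $J(\hat g+h)\ge J(\hat g)$ for all $h$, and $\hat g$ is a minimizer. Conversely, if $\hat g$ is a minimizer, substitute $h=t\,H_p^\top(Z_{\mathrm{past}}-H_p\hat g)$ with small $t>0$. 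The linear term is then $-t\|H_p^\top(Z_{\mathrm{past}}-H_p\hat g)\|^2$ and dominates the $O(t^2)$ quadratic term. Minimality therefore forces $H_p^\top(Z_{\mathrm{past}}-H_p\hat g)=0$, which is the normal equation. Equivalently, one can note that $J$ is convex and differentiable and set its gradient to zero.

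For item 2, the steps are:
\begin{itemize}
\item Verify that $g_0:=H_p^\dagger Z_{\mathrm{past}}$ solves the normal equations. Since $H_pH_p^\dagger$ is symmetric, $H_p^\top H_pH_p^\dagger=H_p^\top (H_pH_p^\dagger)^\top=(H_pH_p^\dagger H_p)^\top=H_p^\top$. Multiplying by $Z_{\mathrm{past}}$ gives $H_p^\top H_p g_0=H_p^\top Z_{\mathrm{past}}$. This settles existence.
\item Characterize the full solution set as $g_0+\ker(H_p^\top H_p)$, since the equation is linear. Then $\ker(H_p^\top H_p)=\ker(H_p)$, because $H_p^\top H_pg=0$ implies $\|H_pg\|_2^2=g^\top H_p^\top H_pg=0$, and the reverse inclusion is trivial.
\item Show $\ker(H_p)=\im(I_N-H_p^\dagger H_p)$. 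For ``$\supseteq$'', use $H_p(I_N-H_p^\dagger H_p)=H_p-H_pH_p^\dagger H_p=0$. For ``$\subseteq$'', if $H_pg=0$ then $g=(I_N-H_p^\dagger H_p)g$, so $g$ lies in the image.
\end{itemize}
Combining these steps yields exactly the family $\hat g=H_p^\dagger Z_{\mathrm{past}}+(I_N-H_p^\dagger H_p)\gamma$ with $\gamma\in\R^N$.

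The only mildly delicate point is the identity $H_p^\top H_pH_p^\dagger=H_p^\top$, which relies on the symmetry of $H_pH_p^\dagger$. One could alternatively derive it from the compact SVD $H_p=U\Sigma V^\top$ introduced in Section~\ref{subsec:reduced}, where $H_p^\dagger=V\Sigma^{-1}U^\top$, so that $H_p^\top H_pH_p^\dagger=V\Sigma U^\top=H_p^\top$. The remaining steps are routine.
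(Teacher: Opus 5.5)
Your argument is correct and complete. The paper does not prove this result: Appendix B only recalls it as a standard summary of least-squares facts, so there is no paper route to compare against.

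In item 1, your expansion of $J(\hat g+h)$ covers both directions. The perturbation $h=t\,H_p^\top(Z_{\mathrm{past}}-H_p\hat g)$ with small $t>0$ is the right choice for the converse.

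In item 2, you do more than exhibit $H_p^\dagger Z_{\mathrm{past}}$ as one solution. Through $\ker(H_p^\top H_p)=\ker(H_p)=\im(I_N-H_p^\dagger H_p)$, you show that the full solution set is exactly $H_p^\dagger Z_{\mathrm{past}}+\im(I_N-H_p^\dagger H_p)$. This two-sided characterization is what the paper actually uses in the proof of Theorem~\ref{thm:mixed-prediction-noisy}(i), where every minimizer is written in this form. Your SVD cross-check of $H_p^\top H_pH_p^\dagger=H_p^\top$ is also valid.

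One cosmetic point: the orthogonal decomposition $\R^{rp}=\im(H_p)\oplus\ker(H_p^\top)$ announced in your plan is never used, so you can drop it.
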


\bibliographystyle{IEEEtran}

\bibliography{refs}

 \end{document}